\newcommand{\tick}{0}
\newcommand{\tock}{1}
\newcommand{\syst}{systematic\ }
\newcommand{\Neff}{N_{eff}}
\newcommand{\Teff}{T_{eff}}
\newcommand{\R}{\mathbb{R}}
\newcommand{\pre}{{\rm pre}}
\newcommand{\post}{{\rm post}}
\newcommand{\homega}{\hat{\omega}}
\newcommand{\hlambda}{\hat{\lambda}}
\newcommand{\mmr}{\mathbb{R}}
\newcommand{\bl}{\mathbf{L}}
\newcommand{\sumi}{\sum_{i=1}^{N-1}}
\newcommand{\sumit}{\sum_{i=1}^N\sum_{t=1}^T }
\newcommand{\did}{{\rm did}}
\newcommand{\asc}{{\rm asc}}
\newcommand{\sdid}{{\rm sdid}}
\newcommand{\scc}{{\rm sc}}
\newcommand{\by}{\mathbf{Y}}
\newcommand{\bbb}{\mathbf{\Upsilon}}
\newcommand{\ba}{\mathbf{\Gamma}}
\newcommand{\btau}{\mathbf{\tau}}
\newcommand{\bw}{\mathbf{W}}
\newcommand{\hmu}{\hat{\mu}}
\newcommand{\hDelta}{\widehat{\Delta}}
\newcommand{\hV}{\widehat{V}}
\newcommand{\ccc}{{\rm co}}
\newcommand{\norm}[1]{\lVert#1\rVert}
\newcommand{\Norm}[1]{\left\lVert#1\right\rVert}
\newcommand{\ttt}{{\rm tr}}
\newcommand{\p}[1]{\left(#1\right)}
\newcommand{\sqb}[1]{\left[#1\right]}
\newcommand{\cb}[1]{\left\{#1\right\}}
\newcommand{\EE}[2][]{\mathbb{E}_{#1}\left[#2\right]}
\newcommand{\PP}[2][]{\mathbb{P}_{#1}\left[#2\right]}
\newcommand{\Var}[2][]{\operatorname{Var}_{#1}\left[#2\right]}
\newcommand{\Cov}[2][]{\operatorname{Cov}_{#1}\left[#2\right]}
\newcommand{\abs}[1]{\left\lvert#1\right\rvert}
\newcommand{\hdelta}{\hat{\delta}}
\newcommand{\tlambda}{\tilde{\lambda}}
\newcommand{\tomega}{\tilde{\omega}}
\newcommand{\slambda}{\tilde \lambda}
\newcommand{\somega}{\tilde \omega}
\newcommand{\sLambda}{\Lambda^{\star}}
\newcommand{\sOmega}{\Omega^{\star}}
\newcommand{\be}{\mathbf{E}}
\newcommand{\oomega}{\tilde\omega}
\newcommand{\olambda}{\tilde\lambda}
\renewcommand{\mathbf}{\boldsymbol}
\renewcommand{\thepage}{}
\renewcommand{\appendix}{\footnotesize\parindent 0cm\setcounter{equation}{0}
\renewcommand{\theequation}{A.\arabic{equation}}
\setcounter{lemma}{0}\renewcommand{\thelemma}{A.\arabic{lemma}}}
\newtheorem{assumption}{Assumption}
\newtheorem{theorem}{Theorem}
\newtheorem{lemma}[theorem]{Lemma}
\newtheorem{corollary}[theorem]{Corollary}
\def\monthname{\ifcase\month\or
  January\or February\or March\or April\or May\or June\or July\or
  August\or September\or October\or November\or December\fi}
\numberwithin{equation}{section}
\DeclareMathOperator*{\argmin}{arg\,min}
\DeclareMathOperator{\rank}{rank}
\DeclareMathOperator{\width}{w}
\DeclareMathOperator{\rad}{rad}
\DeclareMathOperator{\var}{Var}
\DeclareMathOperator{\E}{E}
\DeclareMathOperator{\trace}{tr}
\DeclarePairedDelimiter\set{\{}{\}}
\DeclarePairedDelimiter{\smallabs}{\lvert}{\rvert}
\newcommand{\htau}{\hat{\tau}}
\newcommand\blfootnote[1]{%
  \begingroup
  \renewcommand\thefootnote{}\footnote{#1}%
  \endgroup
}
\def\monthname{\ifcase\month\or
January\or February\or March\or April\or May\or June\or
July\or August\or September\or October\or November\or December\fi}
\renewcommand{\appendix}{\small\parindent 0cm\setcounter{equation}{0}
\renewcommand{\theequation}{A.\arabic{equation}}
\setcounter{lemma}{0}\renewcommand{\thelemma}{A.\arabic{lemma}}
\setcounter{theorem}{0}\renewcommand{\thetheorem}{A.\arabic{theorem}}}
\begin{document}

\title{\textbf{Synthetic Difference in Differences}\blfootnote{{\small We are grateful for
helpful comments and feedback from a co-editor and referees, as well as from Alberto Abadie, Avi Feller, Paul Goldsmith-Pinkham,
Liyang Sun, Yiqing Xu, Yinchu Zhu, and  seminar participants at several venues. This research was generously supported
by ONR grant N00014-17-1-2131 and the Sloan Foundation.
The R package for implementing the methods developed here is available at https://github.com/synth-inference/synthdid.
The associated vignette is at https://synth-inference.github.io/synthdid/.}}}
\author{Dmitry Arkhangelsky\thanks{{\small Associate Professor, CEMFI, Madrid, \texttt{darkhangel@cemfi.es}.}}
\and
Susan Athey\thanks{{\small Professor of
Economics, Graduate School of Business, Stanford University, SIEPR, and NBER,
\texttt{athey@stanford.edu}. }}  
  \and David A.~Hirshberg\thanks{{\small Postdoctoral Fellow, Graduate School of Business and Department of Statistics, Stanford University, \texttt{davidahirshberg@stanford.edu}.}}
 \and Guido W.~Imbens\thanks{{\small Professor of Economics,
Graduate School of Business, and Department of Economics, Stanford University, SIEPR, and NBER,
\texttt{imbens@stanford.edu}.}}
\and Stefan Wager\thanks{{\small Associate Professor of Operations, Information and Technology,
Graduate School of Business, and of Statistics (by courtesy), Stanford University, \texttt{swager@stanford.edu}.}} }
\date{Version \ifcase\month\or January\or February\or March\or April\or May\or June\or July\or August\or September\or October\or November\or December\fi \ \number \year\ \  
}
\maketitle\thispagestyle{empty}

\linespread{1.5}

\begin{abstract}
\singlespacing
\noindent
We present a new estimator for causal effects with panel data that builds on insights behind the widely used difference in differences and synthetic control methods. Relative to these methods we find, both theoretically and empirically, that this ``synthetic difference in differences'' estimator has desirable robustness properties, and that it performs well in settings where the conventional estimators are commonly used in practice. We study the asymptotic behavior of the estimator when the systematic part of the outcome model includes latent unit factors interacted with latent time factors,  and we present conditions for consistency and asymptotic normality.

\vspace{0.4\baselineskip}

\noindent \textbf{Keywords}: causal inference, difference in differences, synthetic controls, panel data

\end{abstract}

\baselineskip=20pt
\setcounter{page}{1}
\renewcommand{\thepage}{\arabic{page}}
\renewcommand{\theequation}{\arabic{section}.\arabic{equation}}


\newpage

\section{Introduction}


Researchers are often interested in evaluating
the effects of policy changes using panel data, {\it i.e.}, using repeated observations of units across time, in a setting where
some units are exposed to the policy in some time periods but not others.
These policy changes are frequently not random---neither across units of analysis, nor across
time periods---and  even unconfoundedness given observed covariates may not be credible ({\it e.g., } \citet{imbens2015causal}). In the absence of exogenous variation researchers have focused on statistical models that connect observed data to unobserved counterfactuals.
Many approaches have been developed for this setting but, in practice, a handful of methods
are dominant in empirical work. As documented by \citet*{currie2020technology}, Difference in Differences (DID)
methods have been widely used in applied economics over the last three decades; see also
\citet*{ashenfelter1985using}, \citet*{Bertrand2004did}, and \citet{angrist2008mostly}.
More recently, Synthetic Control (SC) methods, introduced in a
series of seminal papers by Abadie and coauthors \citep*{abadie2003,Abadie2010, abadie2014, abadie2016},
have emerged as an important alternative method for comparative case studies.

Currently these two strategies are often viewed as targeting different types of empirical applications.
In general, DID methods are applied in cases where we have a substantial number of units that are exposed
to the policy, and researchers are willing to make a ``parallel trends'' assumption which implies that we can
adequately control for selection effects by accounting for additive unit-specific and time-specific
 fixed effects. 
In contrast, SC methods, introduced in a setting with only a single (or small number)
of units exposed,  seek to compensate for the lack of parallel trends
by re-weighting units to match their pre-exposure trends. 
 

In this paper, we argue that although the empirical settings where DID and 
SC methods
are typically used differ, the fundamental assumptions that justify both methods are closely related.
We then propose a new method, Synthetic Difference in Differences (SDID), that combines attractive features of both. Like
SC, our method re-weights and matches pre-exposure trends to weaken the reliance on  parallel trend type 
assumptions. Like DID, our method is invariant to additive unit-level shifts, and allows
for valid large-panel inference. Theoretically, we establish consistency and asymptotic normality of our
estimator. Empirically, we find that our method is competitive with (or dominates) DID
in applications where DID methods have been used in the past, and likewise is competitive with (or dominates)
SC in applications where SC methods have been used in the past.

To introduce the basic ideas, 
consider a balanced panel with $N$ units and $T$ time periods, where the outcome for unit $i$ in period $t$ is denoted by $Y_{it}$,
and exposure to the binary treatment is denoted by $W_{it}\in\{0,1\}$. Suppose moreover that the first $N_\ccc$ (control) units are never
exposed to the treatment, while the last $N_\ttt = N - N_\ccc$ (treated) units are exposed after time $T_\pre$.\footnote{Throughout the main part of our analysis, we focus on the block treatment assignment case where
$W_{it} = 1\p{\cb{i > N_\ccc, \,t > T_\pre}}$. In the closely related staggered adoption case (\citet*{athey2021design}) where
units adopt the treatment at different times, but remain exposed after they first adopt the treatment, one can modify the methods
developed here.
See Section \ref{sec:staggered} in the Appendix for details.}
Like with
SC methods, we start by finding weights $\homega^\sdid$ that align pre-exposure trends in the outcome of unexposed units with those
for the exposed units, {\it e.g.,}
$\sum_{i = 1}^{N_\ccc} \homega^\sdid_i Y_{it} \approx N_\ttt^{-1} \sum_{i = N_\ccc + 1}^{N} Y_{it}$ for all $t=1,\,\ldots,\,T_\pre$.
We also look for time weights $\hlambda^\sdid_t$ that  balance pre-exposure time periods with
post-exposure ones (see Section \ref{sec:calif} for details).
Then we use these weights in a basic two-way fixed effects regression
to estimate the average causal effect of exposure (denoted by $\tau$):\footnote{This estimator 
also has an interpretation as a difference-in-differences of weighted averages of observations.
See Equations~\ref{eq:tauhat_def}-~\ref{eq:delta_def} below.}
\begin{equation}
\label{main_sdid}
\p{\hat\tau^\sdid, \, \hat\mu, \, \hat\alpha, \, \hat\beta}=
\argmin_{\tau,\mu,\alpha,\beta}  \cb{ \sumit \Bigl( Y_{it}-\mu-\alpha_i-\beta_t-W_{it}\tau\Bigr)^2\homega_i^\sdid\hlambda_t^\sdid}.
\end{equation}
In comparison,  DID estimates the effect of treatment exposure by solving the same two-way fixed effects regression problem without either time or unit weights:
\begin{equation}
\label{main_did}
\p{\hat\tau^\did, \, \hat\mu, \, \hat\alpha, \, \hat\beta}=
\argmin_{\alpha,\beta,\mu,\tau} \cb{\sumit \Bigl( Y_{it}-\mu-\alpha_i-\beta_t-W_{it}\tau\Bigr)^2}.
\end{equation}
The  use of weights in the SDID estimator  effectively makes the two-way fixed effect regression ``local,'' in that it emphasizes (puts more weight on) units that on average are similar
in terms of their past to the target (treated) units, and it emphasizes periods that are on average similar to the target (treated)  periods.

This localization can bring two benefits relative to the standard DID estimator. Intuitively, using only similar units and similar periods makes the estimator more
robust. For example, if one is interested in estimating the effect of anti-smoking legislation on California (\citet*{Abadie2010}),
or the effect of German reunification on West Germany (\citet*{abadie2014}), or the effect of the Mariel boatlift on Miami (\citet*{cardmariel}, 
\citet*{peri2019labor}), it is natural to emphasize  states, countries or cities that are similar to California, West Germany,
or Miami respectively relative to states, countries or cities that are not. Perhaps less intuitively, the use of the weights can also improve the estimator's precision by implicitly removing
systematic (predictable) parts of the outcome. However, the latter is not guaranteed: If there is little systematic heterogeneity in outcomes
by either units or time periods, the unequal weighting of units and time periods may worsen the precision of the estimators relative to the DID estimator.

Unit weights are designed so that the average outcome for the treated units is approximately parallel to the weighted average for control units. Time weights are designed so that
the average post-treatment outcome for each of the control units differs by a constant from   the weighted average of the pre-treatment outcomes for the same control units.
Together, these weights make the DID strategy more plausible. This idea is not far from the current empirical practice. Raw data rarely exhibits parallel time trends for treated and control units, and researchers use different techniques, such as adjusting for covariates or selecting appropriate time periods to address this problem ({\it e.g.}, \citet{abadie2005semiparametric, callaway2020difference}). Graphical evidence that is used to support the parallel trends assumption is then based on the adjusted data. SDID makes this process automatic and applies a similar logic to weighting both units and time periods, all while retaining statistical guarantees. From this point of view, SDID addresses pretesting concerns recently expressed in \citet{roth2018pre}.

In comparison with the SDID estimator, the SC estimator  omits the unit fixed effect and the time weights from the regression function:
\begin{equation}
\label{main_sc}
\p{\hat\tau^\scc, \, \hat\mu, \, \hat\beta}=
\argmin_{\mu,\beta,\tau} \cb{\sumit  \Bigl( Y_{it}-\mu-\beta_t-W_{it}\tau\Bigr)^2\homega^\scc_i}.
\end{equation} 
The argument for including time weights in the SDID estimator is the same as the argument for including the unit weights presented earlier:
The time weight can both remove bias and
improve precision by eliminating the role of time periods that are very different from the post-treatment periods.
Similar to the argument for the use of weights, the argument for the inclusion of the unit fixed effects is twofold. First, by making
the model more flexible, we strengthen its robustness properties. Second, as demonstrated in the application 
and simulations based on real data,
these unit fixed effects often explain much of the variation in
outcomes and can improve precision. Under some conditions, SC weighting can account for the
unit fixed effects on its own. In particular, this happens when the weighted average of the outcomes for the control
units in the pre-treatment periods is exactly equal to the average of outcomes for the treated
units during those pre-treatment periods. In practice, this equality holds only approximately,
in which case including the unit fixed effects in the weighted regression will remove some of the
remaining bias. The benefits of including unit fixed effects in the SC regression \eqref{main_sc} can also be obtained by applying the synthetic control method after centering the data by subtracting, from each unit's trajectory, its pre-treatment mean. 
This estimator was previously suggested 
in \citet{doudchenko2016balancing} and \citet{ferman2019synthetic}.
To separate out the benefits of allowing for fixed effects from those stemming from the use of time-weights, we include in our application and simulations  this DIFP  estimator.

\section{An Application}
\label{sec:calif}

To get a better understanding of how $\htau^\did$, $\htau^\scc$ and $\htau^\sdid$ compare to each other,
we first revisit the California smoking cessation program example of \citet*{Abadie2010}.
The goal of their analysis was to estimate the effect of increased cigarette taxes on smoking in California.
We consider observations for 39 states (including California) from 1970 through 2000. California
passed Proposition 99 increasing cigarette taxes (i.e., is treated) from 1989 onwards.
Thus, we have $T_\pre=19$ pre-treatment periods, $T_\post=T-T_\pre=12$ post-treatment periods, $N_\ccc=38$ unexposed states,
and $N_\ttt=1$ exposed state (California).

\subsection{Implementing SDID} \label{sec:implementing}

Before presenting results on the California smoking case, we discuss in detail how we choose the synthetic control
type weights $\homega^\sdid$ and $\hlambda^\sdid$ used for our estimator as specified in \eqref{main_sdid}. Recall that,
at a high level, we want to choose the unit weights to roughly match pre-treatment trends of unexposed units with those for the exposed ones,
$\sum_{i = 1}^{N_\ccc} \homega^\sdid_i Y_{it} \approx N_\ttt^{-1} \sum_{i = N_\ccc + 1}^{N} Y_{it}$ for all $t=1,\,\ldots,\,T_\pre$,
and similarly we want to choose the time weights to balance pre- and post-exposure periods for unexposed units.

In the case of the unit weights  $\homega^\sdid$, we  implement this by solving the optimization problem
\begin{equation}
\label{unit_weights}
\begin{aligned}
&\p{\homega_0, \, \homega^\sdid} = \argmin_{ \omega_0 \in \mmr, \omega\in \Omega} \ell_{unit}(\omega_0, \omega) \quad \text{ where } \\
& \quad \quad \quad\ell_{unit}(\omega_0, \omega) = \sum_{t=1}^{T_\pre} 
\left( \omega_0 + \sum_{i=1}^{N_\ccc} \omega_{i} Y_{it} -  
\frac{1}{N_\ttt} \sum_{i = N_\ccc + 1}^N Y_{it}\right)^2 + \zeta^2 T_{\pre} \left\|\omega\right\|_2^2, \\
& \quad \quad \quad \Omega = \cb{\omega \in \mmr_+^N :   \sum_{i=1}^{N_\ccc} \omega_i=1, \,  \omega_i=N_\ttt^{-1} \text{ for all } i=N_\ccc+1, \, \ldots \, ,N},
\end{aligned}
\end{equation}
where $\mmr_+$ denotes the positive real line. We set the regularization parameter $\zeta$ as
\begin{equation}
\label{eq:zeta_calif}
\begin{split}
&\zeta= (N_\ttt T_\post)^{1/4} \ \hat \sigma \ \text{ with }\  \hat \sigma^2 = \frac{1}{N_\ccc (T_\pre-1)}\sum_{i=1}^{N_\ccc}\sum_{t=1}^{T_\pre-1} \p{\Delta_{it}-\overline{\Delta}}^2, \\
&\mathrm{where}\ \  \Delta_{it}=Y_{i(t+1)}-Y_{it},\hskip0.5cm {\rm and}\ \  \overline{\Delta}=\frac{1}{N_\ccc (T_\pre-1)}\sum_{i=1}^{N_\ccc}\sum_{t=1}^{T_\pre-1} \Delta_{it}.
\end{split}
\end{equation}
That is, we choose the regularization parameter $\zeta$ to match the size of a typical one-period outcome change $\Delta_{it}$ for unexposed units in the pre-period, multiplied by a theoretically motivated scaling $(N_\ttt T_\post)^{1/4}$.
The SDID weights $\homega^\sdid$ are closely related to the weights  used in \citet*{Abadie2010}, with  two minor differences.
First, we allow for an intercept term $\omega_0$, meaning that the weights $\homega^\sdid$ no longer need to make
the unexposed pre-trends perfectly match the exposed ones; rather, it is sufficient that the weights make the trends parallel. The reason we
can allow for this extra flexibility in the choice of weights is that our use of fixed effects $\alpha_i$ will absorb any constant
differences between different units. Second, following \citet{doudchenko2016balancing}, we add a regularization penalty
to increase the dispersion, and ensure the uniqueness, of the weights. If we were to omit the intercept
$\omega_0$ and set $\zeta = 0$, then \eqref{unit_weights} would correspond exactly to a choice of weights discussed
in  \citet{Abadie2010} in the case where $N_\ttt = 1$.

We implement this for the  time weights $\hlambda^\sdid$ by solving\footnote{The weights $\hlambda^\sdid$ may not be uniquely defined, as $\ell_{time}$ can have multiple minima. In principle our results hold for any argmin of $\ell_{time}$. These tend to be similar in the setting we consider, as they all converge to unique `oracle weights' $\slambda^\sdid$ that are discussed in Section~\ref{sec:oracle-and-adaptive}. In practice, to make the minimum defining our time weights unique, we add a very small regularization term $\zeta^2 N_{\ccc} \norm{\lambda}^2$ 
to $\ell_{time}$, taking $\zeta = 10^{-6} \ \hat \sigma$ for $\hat \sigma$ as in \eqref{eq:zeta_calif}. 
}
\begin{equation}
\label{time_weights}
\begin{aligned}
&\p{\hlambda_0, \, \hlambda^\sdid} = \argmin_{ \lambda_0 \in \mmr, \lambda \in \Lambda}  \ell_{time}(\lambda_0, \lambda) \quad \text{ where } \\
&\quad \quad \quad \ell_{time}(\lambda_0, \lambda) = \sum_{i=1}^{N_\ccc} 
\left( \lambda_0 + \sum_{t=1}^{T_\pre} \lambda_{t} Y_{it} -  
\frac{1}{T_\post} \sum_{t = T_\pre + 1}^T Y_{it}\right)^2 , \\
&\quad \quad \quad \Lambda=\cb{\lambda\in\mmr_+^T :  \sum_{t=1}^{T_\pre} \lambda_t=1, \, \lambda_t=T_\post^{-1} \text{ for all } t=T_\pre+1,\, \ldots \, ,T}.
\end{aligned}
\end{equation}
The main difference between \eqref{unit_weights} and \eqref{time_weights} is that we use regularization for the
former but not the latter. This choice is motivated by our formal
results, and reflects the fact we allow for correlated observations within time periods for the same unit, but not across units within 
a time period, beyond what is captured by the systematic component of outcomes as represented by a latent factor model.

\RestyleAlgo{boxruled}
\LinesNumbered
\begin{algorithm}[t]
 \KwData{$\mathbf{Y}, \mathbf{W}$}
 \KwResult{Point estimate $\hat \tau^\sdid$}
Compute regularization parameter $\zeta$ using \eqref{eq:zeta_calif}\;
Compute unit weights $\homega^\sdid$ via \eqref{unit_weights}\;
Compute time weights $\hlambda^\sdid$ via \eqref{time_weights}\;
Compute the SDID estimator via the weighted DID regression
\begin{equation*}
\label{main_sdid_cov}
\p{\hat\tau^\sdid, \, \hat\mu, \, \hat\alpha, \, \hat\beta}=
\argmin_{\tau,\mu,\alpha,\beta}  \cb{ \sumit \Bigl( Y_{it}-\mu-\alpha_i-\beta_t-W_{it}\tau\Bigr)^2\homega_i^\sdid\hlambda_t^\sdid};
\end{equation*}
 \caption{Synthetic Difference in Differences (SDID)}
 \label{alg:sdid}
\end{algorithm}

We summarize our procedure as Algorithm \ref{alg:sdid}.\footnote{Some applications feature time-varying
exogenous covariates $X_{it} \in \mathbb{R}^p$. 
We can incorporate adjustment for these covariates by applying SDID to the residuals $Y_{it}^{\mathrm{res}} = Y_{it}-X_{it}\hat\beta$ of the regression of $Y_{it}$ on $X_{it}$.}
In our application and simulations we also report 
the SC and DIFP estimators. Both of these use 
weights solving \eqref{unit_weights} without regularization. The SC estimator also omits the intercept $\omega_0$.\footnote{Like the time weights $\hat\lambda^{\sdid}$, the unit weights 
for the SC and DIFP estimators may not be uniquely defined. To ensure uniqueness in practice, we take $\zeta = 10^{-6} \ \hat \sigma$, not $\zeta=0$, in $\ell_{unit}$. 
In our simulations, SC and DIFP with this minimal form of regularization outperform more strongly regularized variants with $\zeta$ as in \eqref{eq:zeta_calif}. We show this comparison in Table \ref{table_reg_vs_unreg}.} 
Finally, we report results for the matrix completion (MC)  estimator  proposed by \citet{athey2017matrix}, which is based on imputing the missing $Y_{it}(0)$ using a low rank factor model with nuclear norm regularization.

\subsection{The California Smoking Cessation Program}

The results from running this analysis
are shown in Table \ref{estimates_smoking}. As argued in \citet{Abadie2010}, the assumptions underlying
the DID estimator are suspect here, and the -27.3 point estimate likely overstates the
effect of the policy change on smoking. SC provides a reduced (and generally considered more credible) estimate
of -19.6. The other methods, our proposed SDID, the DIFP and the MC estimator are all smaller than the DID estimator with the SDID and DIFP estimator substantially smaller than the SC estimator. At the very least, this difference in point estimates implies that the
use of time weights and unit fixed effects in \eqref{main_sdid} materially affects conclusions; and, throughout this
paper, we will argue that when $\htau^\scc$ and $\htau^\sdid$ differ, the latter is often more credible.
Next, and perhaps surprisingly, we see that the  standard errors obtained for SDID
(and also for SCIFP, and MC) are smaller than those for DID, despite our method
being more flexible. This is a result of the local fit of SDID (and SC) being improved by the weighting.

\begin{table}[t]
\begin{center}
\begin{tabular}{|l|ccccc|}
  \hline
 & SDID & SC & DID & MC & DIFP \\ 
     \hline 
Estimate &
-15.6 & -19.6 &-27.3	& -20.2	& -11.1 \\
Standard error &
  (8.4)	  &  (9.9)& (17.7) & (11.5)	  	&  (9.5) 
\\
\hline
\end{tabular}
\caption{Estimates for average effect of increased cigarette taxes on California per capita cigarette sales over twelve post-treatment years,
based on 
synthetic difference in differences (SDID),
 synthetic controls (SC),
difference in differences (DID),  
 matrix completion (MC), synthetic control with intercept (DIFP), 
along with estimated standard errors. We use the `placebo method' standard error estimator discussed in Section~\ref{sec:inference}.}
\label{estimates_smoking}
 \end{center}
  \end{table}

To facilitate direct comparisons, we observe that each of the three estimators can be rewritten as a weighted average difference in adjusted outcomes $\hdelta_i$ for appropriate sample weights $\homega_i$:
\begin{equation}
    \label{eq:tauhat_def}
\htau =  \hdelta_\ttt - \sum_{i = 1}^{N_\ccc} \homega_i \hdelta_i 
\quad \text{ where }\quad \hdelta_\ttt = \frac{1}{N_{\ttt}}\sum_{i=N_\ccc+1}^N \hdelta_i.
\end{equation}
DID uses constant weights $\homega_i^\did = N_\ccc^{-1}$, while the construction of SDID and SC weights is outlined in Section \ref{sec:implementing}.
For the adjusted outcomes $\hdelta_i$, SC uses unweighted treatment period averages,
DID uses unweighted differences between average treatment period and pre-treatment outcomes, 
and SDID uses weighted differences of the same.
\begin{equation}
\label{eq:delta_def}
\begin{split}
&\hdelta_i^\scc = \frac{1}{T_\post} \sum_{t = T_\pre + 1}^T Y_{it}, \\
&\hdelta_i^\did = \frac{1}{T_\post} \sum_{t = T_\pre + 1}^T Y_{it} - \frac{1}{T_\pre} \sum_{t = 1}^{T_\pre} Y_{it}, \\
&\hdelta_i^\sdid = \frac{1}{T_\post} \sum_{t = T_\pre + 1}^T Y_{it} -  \sum_{t = 1}^{T_\pre} \hlambda_t^\sdid Y_{it}.
\end{split}
\end{equation}

The top panel of Figure \ref{californiadiagrams} illustrates how each method operates.
As is well known \citep{ashenfelter1985using}, DID relies on the assumption that cigarette sales in different states would have evolved
in a parallel way absent the intervention. Here, pre-intervention trends are obviously not parallel, so the DID
 estimate should be considered suspect. In contrast, SC re-weights
the unexposed states so that the weighted of outcomes for these states match California pre-intervention as close as possible, and then attributes
any post-intervention divergence of California from this weighted average to the intervention.
What SDID does here is to re-weight the unexposed control units to make their time trend
parallel (but not necessarily identical) to California pre-intervention, and then applies a DID
 analysis to this re-weighted panel. Moreover, because of the time weights, we only
focus on a subset of the  pre-intervention time periods when carrying out this last step. These time periods were
selected so that the weighted average of historical outcomes predict average treatment period outcomes for control units, up to a constant.
It is useful to contrast the data-driven SDID approach to selecting the time weights to both DID, where all pre-treatment periods are given equal weight, and to event studies where typically the last pre-treatment period is used as a comparison and so implicitly gets all the weight ({\it e.g.,} \citet{borusyak2016revisiting, freyaldenhoven2019pre}).

\begin{figure}[p]
\begin{center}
\begin{tabular}{cccc} & \
\makebox[0.3\textwidth]{Difference in Differences} & 
\makebox[0.3\textwidth]{Synthetic Control} & 
\makebox[0.3\textwidth]{Synthetic Diff.~in Differences} 
\end{tabular}

\begin{minipage}{.02\textwidth}
\begin{tabular}{c}
\rotatebox{90}{\tiny \qquad\qquad cigarette consumption (packs/year)} \\
\rotatebox{90}{\tiny \quad difference in consumption (packs/year) \quad}
\end{tabular}
\end{minipage}
\begin{minipage}{.97\textwidth}
\includegraphics[width=\textwidth]{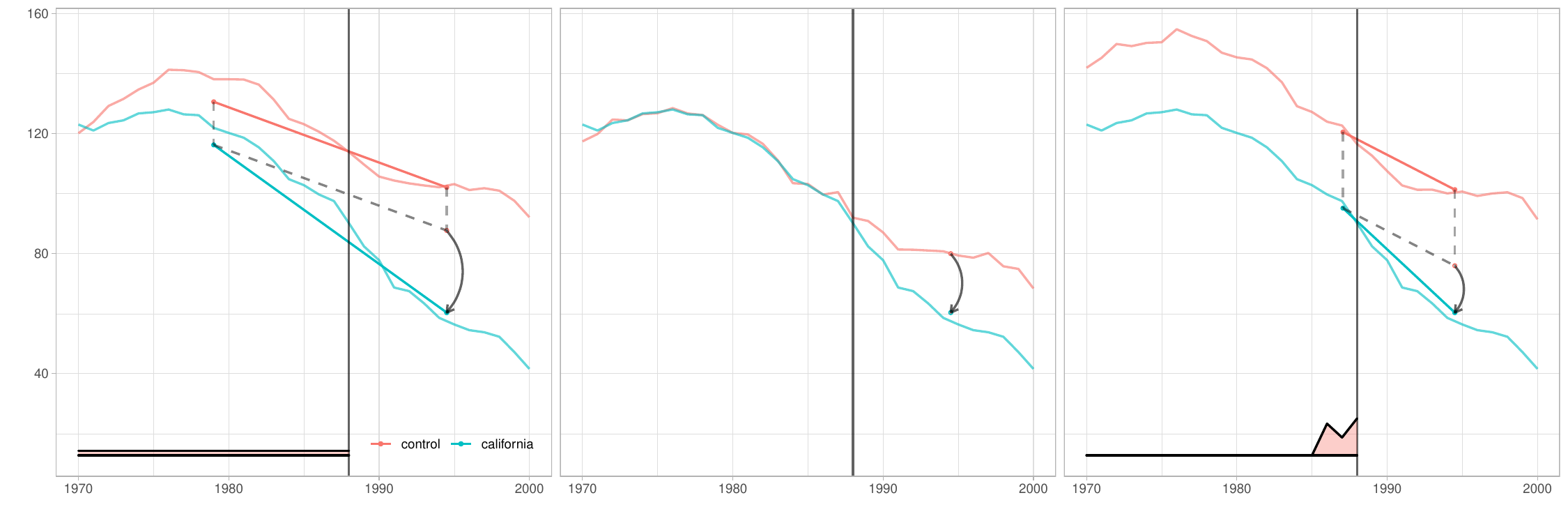} 
\includegraphics[width=\textwidth]{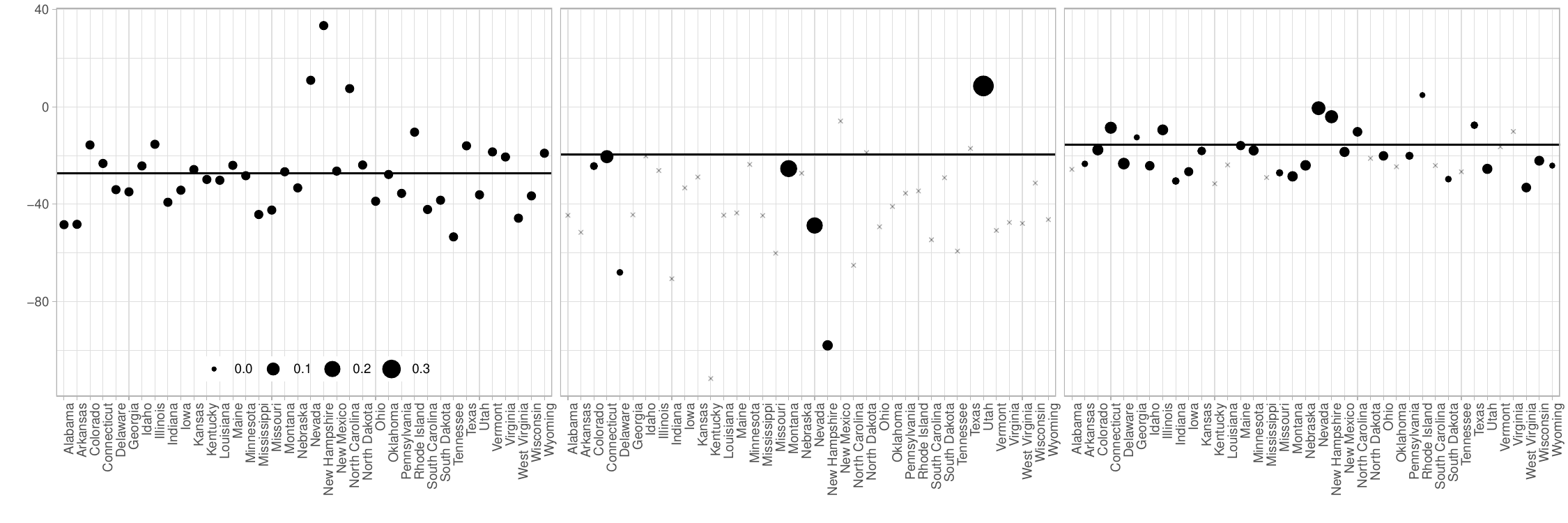}
\end{minipage}
\caption{A comparison between difference-in-differences, synthetic control, and synthetic differences-in-differences estimates 
for the effect of California Proposition 99 on per-capita annual cigarette consumption (in packs/year). 
In the first row, we show trends in consumption over time for California and the relevant weighted average of control states,
with the weights used to average pre-treatment time periods at the bottom of the graphs.
The estimated effect is indicated by an arrow.
In the second row, we show the state-by-state adjusted outcome difference $\hdelta_\ttt -\hdelta_i$ as specified in \eqref{eq:tauhat_def}-\eqref{eq:delta_def}, with the weights $\homega_i$ indicated by dot size and the weighted average of these differences --- the estimated effect --- indicated by a horizontal line. Observations with zero weight are denoted by an $\times$-symbol.
\label{californiadiagrams}}
\end{center}
\end{figure}

The lower panel of Figure \ref{californiadiagrams} plots $\hdelta_\ttt - \hdelta_i$ for each method and for each unexposed state,
where the size of each point corresponds to its weight $\homega_i$; observations with zero weight are denoted by
an $\times$-symbol. As discussed in \citet*{Abadie2010}, the
SC weights $\homega^\scc$ are sparse. The SDID weights $\homega^\sdid$ are also sparse---but
less so. This is due to regularization and the use of the intercept $\omega_0$, which allows greater flexibility in solving \eqref{unit_weights}, enabling more balanced weighting. Observe that both DID
and SC have some very high influence states, that is, states with large absolute values of $\homega_i (\hdelta_\ttt - \hdelta_i)$
({\it e.g.}, in both cases, New Hampshire). In contrast, SDID does not give any state particularly high influence, 
suggesting that after weighting, we have achieved the desired ``parallel trends'' as illustrated in the top panel of Figure \ref{californiadiagrams} without inducing excessive variance in the estimator by using concentrated weights.

\section{Placebo  Studies}
\label{sec:simu}

So far, we have relied on conceptual arguments to make the claim that SDID
inherits good robustness properties from both traditional DID and SC methods,
and shows promise as a method that can be used is settings where either DID and SC
would traditionally be used. The goal of this section is to see how these claims play out in realistic empirical settings. To this end, we consider
two carefully crafted simulation studies,  calibrated to datasets representative of those typically used for panel data studies. The first simulation study mimics settings where DID would be used in practice (Section \ref{sec:cps}), while the second mimics settings suited to SC (Section \ref{sec:penn}). Not only do we base the outcome model of our simulation study on real datasets, we further ensure
that the treatment assignment process is realistic by seeking to emulate the distribution of real policy initiatives. To be specific,
in Section \ref{sec:cps}, we consider a panel of US states. We estimate several alternative treatment assignment models to create the hypothetical treatments, where the models are based on the state laws related to minimum wages,
abortion or gun rights.

In order to run such a simulation study, we first need to commit to an econometric specification
that can be used to assess the accuracy of each method. Here, we work with the following latent factor model
(also referred to as an ``interactive fixed-effects model'', \citet{xu2017generalized}, see also \citet{athey2017matrix}),
\begin{equation}
\label{basic_model_scalar} 
Y_{it}=\boldsymbol{\gamma}_i \boldsymbol{\upsilon}_t^\top +\tau W_{it} +\varepsilon_{it},
\end{equation}
where $\boldsymbol{\gamma}_i$ is a vector of latent unit factors of dimension $R$, and $\boldsymbol{\upsilon}_t$ is a vector of latent time factors of dimension $R$. In matrix form, this can be written 
\begin{equation}
\label{basic_model} 
\by=\bl+\tau\bw+\be
\quad \text{ where } \quad
\bl=\boldsymbol{\Gamma \Upsilon}^\top.
\end{equation}
  We refer to $\be$ as the idiosyncratic component or error matrix, and to $\bl$ as the \syst component.
We assume that
the conditional expectation of the error matrix $\be$ given the assignment matrix $\bw$ and the \syst component $\bl$ is zero. That is,
 the treatment assignment cannot depend on $\be$. However, the treatment assignment may in general depend on the
systematic component $\bl$ ({\it i.e.}, we do not take $\bw$ to be randomized). We assume that $\be_i$ is independent of $\be_{i'}$ for each pair of units $i,i'$, but we allow for correlation across time periods within a unit.  Our goal is to estimate the treatment effect $\tau$.

The model \eqref{basic_model} captures several qualitative challenges that have received considerable attention
in the recent panel data literature. When the matrix $\bl$ takes on an additive form, i.e., $L_{it} = \alpha_i + \beta_t$,
then the DID regression will consistently recover $\tau$.  Allowing for interactions
in $\bl$ is a natural way to generalize the fixed-effects specification and discuss inference in settings where
DID is misspecified \citep{bai2009panel,moon2015linear,moon2017dynamic}. In our formal results given in
Section \ref{sec:formal}, we show how, despite not explicitly fitting the model \eqref{basic_model}, SDID can consistently estimate $\tau$ in this design under reasonable conditions. Finally, accounting for correlation over time within observations of the same unit is widely considered to be an important ingredient to credible inference
using panel data \citep*{angrist2008mostly, Bertrand2004did}.

In our experiments, we compare DID, SC, SDID, and DIFP, all implemented exactly as in Section \ref{sec:calif}.
We also compare these four estimators to an alternative that
estimates $\tau$ by directly fitting both $\bl$ and $\tau$ in \eqref{basic_model}; specifically, we consider
the matrix completion (MC) estimator recommended in \citet*{athey2017matrix} which uses nuclear norm
penalization to regularize its estimate of $\bl$.
In the remainder of this section, we focus on comparing the bias and root-mean-squared error of the estimator.
We discuss questions around inference and coverage in Section \ref{sec:inference}.

\subsection{Current Population Survey Placebo Study}
\label{sec:cps}

Our first set of simulation experiments revisits the landmark placebo study of \citet*{Bertrand2004did} using
the Current Population Survey (CPS). The main goal of \citet{Bertrand2004did} was to study the behavior
of different standard error estimators for DID. To do so, they randomly assigned a subset of states in the CPS
dataset to a placebo treatment and the rest to the control group, and examined how well different approaches
to inference for  DID estimators covered the true treatment effect of zero. Their main finding was that only methods that were robust to serial correlation of
repeated observations for a given unit ({\it e.g.,} methods that clustered observations by unit) attained valid coverage.

We modify  the placebo analyses in \citet{Bertrand2004did} in two ways. First, we no longer assigned exposed states
completely at random, and instead use a non-uniform assignment mechanism that is inspired by different policy
choices actually made by different states. Using a non-uniformly random assignment is important because it allows us to differentiate between various estimators in ways that completely random assignment would not. Under completely random assignment, a number of methods, including DID, perform well because the presence of $\bl$ in
\eqref{basic_model} introduces zero bias. In contrast, with a non-uniform random assignment ({\it i.e.},
treatment assignment is correlated with systematic effects), methods that
do not account for the presence of $\bl$ will be biased. 
Second, we simulate values for the outcomes based on a model estimated on the CPS data, in order to have more control over the data generating process.

\subsubsection{The Data Generating Process}
\label{sec:cps_dgp}

For the first set of simulations we use as the starting point data on wages for women with positive wages in the March outgoing rotation
groups in the Current Population Survey (CPS) for the years 1979 to 2019. We first transform these by taking logarithms and then average them by state/year cells.
Our simulation design has two components, an outcome model and an assignment model. We generate outcomes
via a simulation that seeks to capture the behavior of the average by state/year of the logarithm of wages for those
with positive hours worked in the CPS data as in \citet{Bertrand2004did}. Specifically, we simulate data using the
model \eqref{basic_model}, where the rows $\be_{i}$ of $\be$
have a multivariate Gaussian distribution $\be_{i} \sim \mathcal{N}(0, \Sigma)$, and
we choose both $\bl$ and $\Sigma$ to fit the CPS data as follows. First, we fit a rank four factor model for $\bl$:
\begin{equation}
\bl :=\argmin_{L: \text{rank}(L) = 4} \sum_{it}(Y^*_{it} - L_{it})^2,
\end{equation}
where $Y^*_{it}$ denotes the true state/year average of  log-wage in the CPS data. We then estimate
$\Sigma$ by fitting an AR(2) model to the residuals of $Y^*_{it} - L_{it}$. For purpose of interpretation,
we further decompose the systematic component $\bl$ into an additive (fixed effects) term $\mathbf{F}$ and
an interactive term $\mathbf{M}$, with
\begin{equation}
\begin{aligned}
&F_{it} = \alpha_i+\beta_t=\frac{1}{T}\sum_{l=1}^T L_{il} + \frac{1}{N}\sum_{j=1}^N L_{jt} - \frac{1}{NT} \sum_{it}L_{it},\\
& M_{it}=L_{it}-F_{it}.
\end{aligned}
\end{equation}
This decomposition of $\bl$ into an additive two-way fixed effect component  $\mathbf{F}$ and an interactive component $\mathbf{M}$  enables us to study the sensitivity
of different estimators to the presence of different types of systematic effects.

Next we discuss generation of the treatment assignment. Here, we are designing a ``null effect'' study, meaning that
treatment has no effect on the outcomes and all methods should estimate zero. However, to make this more challenging, we choose the treated units so that the assignment mechanism is correlated with the systematic component $\bl$.
We set $W_{it} = D_i \mathbf{1}_{t > T_0}$, where $D_i$ is a binary exposure indicator generated as
\begin{equation}
\label{eq:assignment_lr}
D_i \,\big|\, \be_{i}, \alpha_i, \mathbf{M}_i \sim \text{Bernoulli}\p{\pi_i}, \ \ \ \
\pi_i = \pi(\alpha_i, \mathbf{M}_i;\phi) = \frac{\exp(\phi_\alpha \alpha_i+\phi_M \mathbf{M}_i)}
{1+\exp(\phi_\alpha \alpha_i+\phi_M \mathbf{M}_i)}.
\end{equation}
In particular, the distribution of $D_i$ may depend on $\alpha_i$ and $\mathbf{M}_i$; however, $D_i$ is independent of $\be_i$, i.e., the assignment is strictly exogenous.\footnote{In the simulations below, we restrict the maximal number of treated units (either to $10$ or $1$). To achieve this, we first sample $D_i$ independently and accept the results if the number of treated units satisfies the constraint. If it does not, then we choose the maximal allowed number of treated units from those selected in the first step uniformly at random.}
To construct probabilities $\{\pi_i\}$ for this assignment model, we choose $\phi$ as the coefficient estimates from a logistic regression of an observed  binary characteristic of the state $D_i$ on $\mathbf{M}_i$ and $\alpha_i$. We consider three different choices for $D_i$, relating to minimum wage laws, abortion rights, and gun control laws.\footnote{See Section~\ref{sec:placebo-study-details} in the appendix for details.} As a result, we get assignment probability models that reflect actual differences across states with respect to important economic variables. In practice the $\alpha_i$ and $\mathbf{M}_i$ that we construct predict a sizable part of variation in $D_i$, with $R^2$ varying from $15\%$ to $30\%$.

\begin{table}[t]
\begin{center}
\begin{adjustbox}{width=1\textwidth}
\begin{tabular}{|l|rrrr|rrrrr|rrrrr|}
\hline
& \multirow{2}{*}{$\frac{\|\mathbf{F}\|_F}{\sqrt{NT}}$} &  \multirow{2}{*}{$\frac{\|\mathbf{M}\|_F}{\sqrt{NT}}$} & \multirow{2}{*}{$\sqrt{\frac{\trace(\Sigma)}{T}}$} & \multirow{2}{*}{AR(2)} &
\multicolumn{5}{c|}{RMSE}& \multicolumn{5}{c|}{Bias}\\ 
& &&  & &SDID & SC & DID &MC & DIFP & SDID & SC & DID &MC& DIFP \\ \hline
&&&& &&&& &&&& &&\\
Baseline & 0.992 & 0.100 & 0.098 & (.01,-.06) & \bf 0.028 & 0.037 & 0.049 & 0.035 & 0.032 & 0.010 & 0.020 & 0.021 & 0.015 & 0.007 \\ \hline
\multicolumn{15}{|l|}{\it Outcome Model}\\
No Corr& 0.992 & 0.100 & 0.098 & (.00, .00) & \bf  0.028 & 0.038 & 0.049 & 0.035 & 0.032 & 0.010 & 0.020 & 0.021 & 0.015 & 0.007 \\ 
No $\mathbf{M}$ & 0.992 & 0.000 & 0.098 & (.01, -.06) & 0.016 & 0.018 & \bf  0.014 & \bf  0.014 & 0.016 & 0.001 & 0.004 & 0.001 & 0.001 & 0.001 \\ 
No $\mathbf{F}$ & 0.000 & 0.100 & 0.098 & (.01, -.06) & 0.028 & \bf  0.023 & 0.049 & 0.035 & 0.032 & 0.010 & 0.004 & 0.021 & 0.015 & 0.007\\ 
Only Noise&  0.000 & 0.000 & 0.098 & (.01, -.06) & 0.016 & \bf 0.014 & \bf  0.014 & \bf 0.014 & 0.016 & 0.001 & 0.001 & 0.001 & 0.001 & 0.001 \\ 
No Noise & 0.992 & 0.100 & 0.000 & (.00, .00) & 0.006 & 0.017 & 0.047 & \bf 0.004 & 0.011 & 0.004 & 0.004 & 0.020 & 0.000 & 0.001 \\ \hline
 \multicolumn{15}{|l|}{\it Assignment Process}\\
 Gun Law &0.992 & 0.100 & 0.098 & (.01, -.06) &\bf  0.026 & 0.027 & 0.047 & 0.035 & 0.030 & 0.008 & -0.003 & 0.015 & 0.015 & 0.009 \\ 
 Abortion & 0.992 & 0.100 & 0.098 & (.01, -.06) & \bf 0.023 & 0.031 & 0.045 & 0.031 & 0.027 & 0.004 & 0.016 & 0.003 & 0.003 & 0.001 
 \\ Random & 0.992 & 0.100 & 0.098 & (.01, -.06)& \bf 0.024 & 0.025 & 0.044 & 0.031 & 0.027 & 0.001 & -0.001 & 0.002 & 0.001 & -0.000 \\ \hline
 \multicolumn{15}{|l|}{\it Outcome Variable}\\
 Hours & 0.789 & 0.402 & 0.575 & (.06, .00) & 0.190 & 0.203 & 0.206 & \bf 0.185 & 0.197 & 0.111 & -0.049 & 0.085 & 0.100 & 0.099  \\ 
 U-rate &  0.752 & 0.441 & 0.593 & (-.02, -.01) & 0.191 & \bf 0.184 & 0.353 & 0.247 & 0.187 & 0.100 & 0.080 & 0.304 & 0.187 & 0.078\\
 \hline
 \multicolumn{15}{|l|}{\it Assignment Block Size}\\
  $T_\post =1$ & 0.992 & 0.100 & 0.098 &  (.01, -.06) & \bf 0.050 & 0.059 & 0.070 & 0.051 & 0.054 & 0.019 & 0.017 & 0.038 & 0.021 & 0.012\\ 
  $N_\ttt = 1$ & 0.992 & 0.100 & 0.098 & (.01, -.06) & \bf 0.063 & 0.072 & 0.126 & 0.081 & 0.083 & 0.002 & 0.014 & 0.011 & 0.004 & -0.002  \\ 
 $ T_\post = N_\ttt = 1$ & 0.992 & 0.100 & 0.098 & (.01, -.06) & 0.112 & 0.124 & 0.153 & \bf 0.108 & 0.117 & 0.014 & 0.024 & 0.033 & 0.016 & 0.011   \\ \hline
\end{tabular}
\end{adjustbox}
\caption{Simulation Results for CPS Data.
The baseline case uses state minimum wage laws to simulate treatment assignment, and generates outcomes using the full
data-generating process described in Section \ref{sec:cps_dgp}, with $T_\post=10$ post-treatment periods and at most $N_\ttt=10$
treatment states. In subsequent settings, we omit parts of the data-generating process (rows 2-6), consider different distributions
for the treatment exposure variable $D_i$  (rows 7-9), different distributions for the outcome variable (rows 10-11), and vary the number of treated cells (rows 12-14).
The full dataset has $N=50$, $T=40$, and outcomes are normalized to have mean zero and unit variance. 
All results are based on 1000 simulation replications.
  \label{table1}}
 \end{center}
  \end{table}						       

\subsubsection{Simulation Results}

Table \ref{table1} compares the performance of the four aforementioned estimators in the simulation design described above. We
 consider various choices for the number of treated units and the treatment assignment distribution. Furthermore,
we also consider settings where we drop various components of the outcome-generating process, such as the fixed effects $\mathbf{F}$ or
the interactive component $\mathbf{M}$, or set the noise correlation matrix $\Sigma$ to be diagonal.
The magnitude of the $\mathbf{F}$, $\mathbf{M}$ and $\be$ components as well as the strength of the
autocorrelation effects in $\Sigma$ captured by the first two autoregressive coefficients are shown in the first four columns of Table \ref{table1}.

At a high level, we find that SDID has excellent performance relative to the benchmarks ---both in terms of bias
and root-mean squared error. This holds in the baseline simulation design and  over a number of other designs where we vary the treatment assignment (from being based on minimum wage laws to gun laws, abortion laws, or completely random), the outcome (from average of log wages to average hours and unemployment rate), and the maximal number of treated units (from 10 to 1) and the number of exposed periods (from 10 to 1). We find that when the treatment assignment is uniformly random, all methods are essentially
unbiased, but SDID is more precise. Meanwhile, when the treatment assignment is not uniformly random, SDID
is particularly successful at mitigating bias while keeping variance in check. 

In the second panel of
Table \ref{table1} 
we provide some additional insights into the superior performance of the SDID estimator by 
sequentially dropping some of the components of the model that generates the potential outcomes.
If we drop the interactive component $\mathbf{M}$ from the outcome model (``No $\mathbf{M}$''), so that the fixed effect specification is correct, the DID estimator performs best (alongside MC).
In contrast, if we drop the fixed effects component $\bf$ (``No $\mathbf{F}$'') but keep the interactive component, the SC estimator does best.
If we drop both parts of the systematic component, and there is only noise, 
the superiority of the SDID estimator vanishes and all estimators are essentially equivalent.
On the other hand, if we remove the noise component so that there is only signal, the increased flexibility of the SDID estimator allows it (alongside MC) to outperform the SC and DID estimators dramatically.


\begin{figure}[t]
\begin{center}
\includegraphics[width=.7\textwidth]{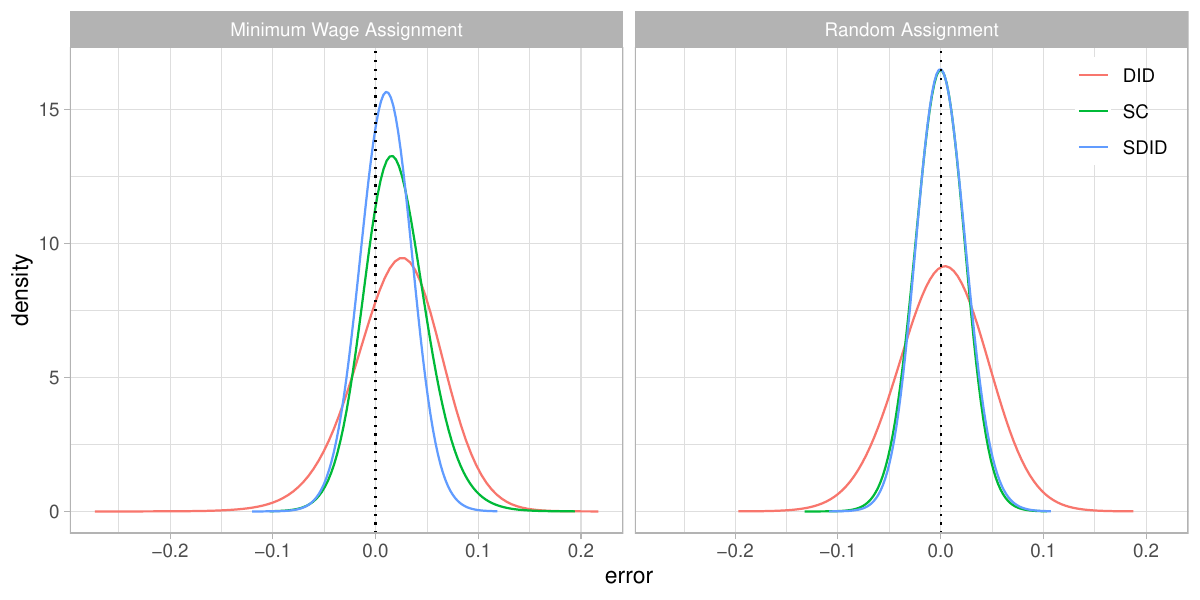} 
\caption{Distribution of the errors of SDID, SC and DID in the setting of the ``baseline'' (i.e., with minimum wage)
and random assignment rows of Table \ref{table1}.}
\label{fig_mw_rand}
\end{center}
\end{figure}

Next, we focus  on two designs of interest: One with the assignment probability model based on parameters estimated in the minimum
wage law model and one where the treatment exposure $D_i$ is assigned uniformly at random.
Figure \ref{fig_mw_rand} shows the errors of the DID, SC and SDID estimators in both settings, and reinforces our observations above.
When assignment is not uniformly random, the distribution of the DID errors is visibly off-center, showing the bias of the estimator. In contrast,
the errors from SDID are nearly centered. Meanwhile, when treatment assignment is uniformly random, both estimators are centered but the errors
of DID are more spread out. We note that the right panel of Figure \ref{fig_mw_rand} is closely related to the simulation
specification of \citet*{Bertrand2004did}. From this perspective, \citet{Bertrand2004did} correctly argue that the error
distribution of DID is centered, and that the error scale can accurately be recovered using appropriate robust estimators.
Here, however, we go further and show that this noise can be substantially reduced by using an estimator like SDID
that can exploit predictable variation by matching on pre-exposure trends.

Finally, we note that Figure \ref{fig_mw_rand} shows that the error distribution of SDID is nearly unbiased and Gaussian in both designs,
thus suggesting that it should be possible to use $\htau^\sdid$ as the basis for valid inference. We postpone a discussion
of confidence intervals until Section \ref{sec:inference}, where we  consider various strategies for inference based on SDID
and show that they attain good coverage here.

\subsection{Penn World Table Placebo Study}
\label{sec:penn}

The simulation based on the CPS  is a natural benchmark for applications that traditionally rely on DID-type methods to estimate the policy effects. In contrast, SC methods are often used in applications where
units tend to be more heterogeneous and are observed over a longer timespan as in, e.g., \citet*{abadie2014}.
To investigate the behavior of SDID in this type of setting, we propose a second set of simulations based on the Penn World
Table. This dataset contains observations on annual real GDP for $N = 111$ countries for $T = 48$ consecutive years, starting from 1959;
we end the dataset in 2007 because we do not want the treatment period to coincide with the Great Recession.
We construct the outcome and the assignment model following the same
procedure outlined in the previous subsection. We select $\log(\mathrm{real\ GDP})$ as
the primary outcome. As with the CPS dataset, the two-way fixed effects explain most of the variation; however, the interactive
component plays a larger role in determining outcomes for this dataset than for the CPS data. We again derive treatment assignment via an exposure
variable $D_i$, and consider both a uniformly random distribution for $D_i$ as well as two non-uniform ones 
 based on predicting  Penn World Table indicators of democracy and education respectively.

Results of the simulation study are presented in Table \ref{table2}. At a high level, these results mirror the ones above: SDID
again performs well in terms of both bias and root-mean squared error and across all simulation settings dominates the other estimators.
In particular, SDID is nearly unbiased, which is important for constructing confidence intervals with accurate coverage rates. The main difference
between Tables \ref{table1} and \ref{table2} is that DID does substantially worse here relative to SC than before. This appears to be due to
the presence of a stronger interactive component in the Penn World Table dataset, and is in line with the empirical practice of preferring
SC over DID in settings of this type. We again defer a discussion of inference to Section \ref{sec:inference}.

\begin{table}[t]
\begin{center}
\begin{adjustbox}{width=1\textwidth}
\begin{tabular}{|l|rrrr|rrrrr|rrrrr|}
\hline
& \multirow{2}{*}{$\frac{\|\mathbf{F}\|_F}{\sqrt{NT}}$} &  \multirow{2}{*}{$\frac{\|\mathbf{M}\|_F}{\sqrt{NT}}$} & \multirow{2}{*}{$\sqrt{\frac{\trace(\Sigma)}{T}}$} & \multirow{2}{*}{AR(2)} &
\multicolumn{5}{c|}{RMSE}& \multicolumn{5}{c|}{Bias}\\ 
& &&  & &SDID & SC & DID &MC& DIFP & SDID & SC & DID &MC& DIFP\\ \hline
Democracy &   0.972 & 0.229 & 0.070 & (.91, -.22) & \bf 0.031 & 0.038 & 0.197 & 0.058 & 0.039 & -0.005 & -0.004 & 0.175 & 0.043 & -0.007\\ 
  Education &  0.972 & 0.229 & 0.070 & (.91, -.22) & \bf 0.030 & 0.053 & 0.172 & 0.049 & 0.039 & -0.003 & 0.025 & 0.162 & 0.040 & -0.005 \\ 
  Random & 0.972 & 0.229 & 0.070 & (.91, -.22) & \bf 0.037 & 0.046 & 0.129 & 0.063 & 0.045 & -0.002 & -0.011 & -0.006 & -0.004 & -0.004 \\ 
   \hline
  \end{tabular}									       
\end{adjustbox}
\caption{Simulation results based on the the Penn World Table dataset.
We use $\log(GDP)$ as the outcome, with $N_\ttt = 10$ out of $N = 111$ treatment countries,
and $T_\post = 10$ out of $T = 48$ treatment periods.
In the first two rows we consider treatment assignment distributions based on democracy status and education metrics,
while in the last row the treatment is assigned completely at random.
All results are based on 1000 simulations.
 \label{table2}}
 \end{center}									       							       
 \end{table}

\section{Formal Results}
\label{sec:formal}

In this section we discuss the formal results.
For the remainder of the paper, we assume that the data generating process follows a generalization of the latent factor model \eqref{basic_model},
\begin{equation}
\label{eq:model}
\by = \bl + \bw \circ \btau + \be, \qquad \mathrm{where} \qquad (\bw \circ \btau)_{it} = \bw_{it} \btau_{it}. 
\end{equation}
The model allows for heterogeneity in treatment effects $\tau_{it}$, as in 
\citet{de2020two}. As above, we
assume block assignment $W_{it} = 1\p{\cb{i > N_\ccc, \, t > T_\pre}}$, where
the subscript ``$\ccc$'' stands for control group,
``$\ttt$'' stands for treatment group, ``$\pre$'' stands for pre-treatment,
and ``$\post$'' stands for post-treatment. 
It is useful to characterize the systematic component $\bl$ as a factor model $\bl=\ba \bbb^{\top}$
as in \eqref{basic_model}, where we define factors $\ba=\mathbf{UD}^{1/2}$ and $\bbb^{\top}=\mathbf{D}^{1/2}\mathbf{V}^{\top}$ in terms of the singular value decomposition $\bl=\mathbf{U D V}^{\top}$. 
Our target estimand is the average treatment effect for the treated units during the periods they were treated,
which under block assignment is
\begin{equation}
\label{eq:estimand}
\tau = \frac{1}{N_\ttt T_\post} \sum_{i=N_\ccc+1}^N\ \sum_{t=T_\pre+1}^T \btau_{it}.
\end{equation}
For notational convenience, we partition the matrix $\by$  as
\[ \by =\left(
\begin{array}{cc}
\by_{\ccc,\pre} & 
\by_{\ccc,\post}\\
\by_{\ttt,\pre} &  
\by_{\ttt,\post}
\end{array}\right),\]
with $\by_{\ccc,\pre}$ a $N_\ccc\times T_\pre$ matrix, 
$\by_{\ccc,\post}$ a $N_\ccc\times T_\post$ matrix,
$\by_{\ttt,\pre}$ a $N_\ttt\times T_\pre$ matrix, and
$\by_{\ttt,\post}$ a $N_\ttt\times T_\post$ matrix,
and similar for $\bl$, $\bw$, $\tau$, and $\be$.
Throughout our analysis, we will assume that the errors $\be_{i.}$ are homoskedastic across units (but not across time),
{\it i.e.}, that $\Var{\be_{i.}} = \Sigma \in \mmr^{T \times T}$ for all units $i = 1, \, \ldots, \, n$. We partition $\Sigma$ as
\[ \Sigma=\left(
\begin{array}{cc} \Sigma_{\pre,\pre} & \Sigma_{\pre,\post}\\
\Sigma_{\post,\pre} & \Sigma_{\post,\post}\end{array}
\right).\]
Given this setting, we are interested in guarantees on how accurately SDID can recover $\tau$.

A simple, intuitively appealing approach to estimating $\tau$ in \eqref{eq:model} is to directly fit both $\bl$ and $\tau$
via methods for low-rank matrix estimation, and several variants of this approach have been proposed in the literature
\citep*[e.g.,][]{athey2017matrix,bai2009panel,xu2017generalized, agarwal2019robustness}.  However, our main interest is in  $\tau$ and not in $\bl$,
and so one might suspect that approaches that provide consistent estimation of $\bl$ may rely on assumptions
that are stronger than what  is necessary for consistent estimation of $\tau$.

Synthetic control methods address confounding bias without  explicitly  estimatin $\bl$ in  \eqref{eq:model}.
Instead, they take an indirect approach more akin to balancing as
in \citet{zubizarreta2015stable} and \citet*{athey2018approximate}. Recall that the SC weights
$\homega^\scc$ seek to balance out the pre-intervention trends in $\by$. Qualitatively, one might hope that doing so also leads us to balance out the unit-factors $\ba$ from \eqref{basic_model}, rendering
$\sum_{i = N_\ccc + 1}^N \homega_i^\scc \ba_{i.} - \sum_{i = 1}^{N_\ccc} \homega_i^\scc \ba_{i.} \approx 0$.
\citet*{Abadie2010} provide some arguments for why this should be the case, and our formal analysis outlines
a further set of conditions under which this type of phenomenon holds. Then, if $\homega^\scc$ in fact succeeds
in balancing out the factors in $\ba$, the SC estimator can be approximated as $\htau^\scc \approx \tau + \sum_{i = 1}^N (2W_i -1) \homega_i^\scc \bar \varepsilon_{i}$ with
$\bar \varepsilon_i=
T_\post^{-1} \sum_{t=T_{pre}+1}^T \varepsilon_{it}$
; in words, SC weighting has succeeded in removing the bias associated with the systematic component $\bl$
and in delivering a nearly unbiased estimate of $\tau$.

Much like the SC estimator, the SDID estimator seeks to recover $\tau$ in \eqref{eq:model} by reweighting to remove the bias associated with $\bl$. However, the SDID estimator takes a 
two--pronged approach. First, instead of only
making use of unit weights $\homega$ that can be used to balance out $\ba$, the estimator also incorporates time weights $\hlambda$
that seek to balance out $\bbb$. This provides a type of double robustness property, whereby if
one of the balancing approaches is effective, the dependence on $\bl$ is approximately removed.
Second, the use of two-way fixed effects in \eqref{main_sdid} and intercept terms in \eqref{unit_weights} and
\eqref{time_weights} makes the SDID estimator invariant to additive shocks to any row or column, i.e.,
if we modify $\bl_{it} \leftarrow \bl_{it} + \alpha_i + \beta_t$ for any choices $\alpha_i$ and $\beta_t$ the
estimator $\htau^\sdid$ remains unchanged. The estimator shares this invariance property with DID
(but not SC).\footnote{More specifically, as suggested by \eqref{main_sc}, SC is invariant to shifts in $\beta_t$
but not $\alpha_i$. In this context, we also note that the DIFP estimator  proposed by \citet{doudchenko2016balancing}
and \citet{ferman2019synthetic} that center each unit's trajectory before applying the synthetic control method is also invariant to shifts in $\alpha_i$.}

The goal of our formal analysis is to understand how and when the SDID weights
succeed in removing the bias due to $\bl$. As discussed below, this requires assumptions
on the signal to noise ratio. The assumptions require that $\be$ does not incorporate too much serial correlation within units, so that we can
attribute persistent patterns in $\by$ to patterns in $\bl$; furthermore, $\ba$ should be stable over time, particularly through the treatment periods. Of course, these are non-trivial assumptions.
However, as discussed further in Section \ref{sec:relworks}, they are considerably weaker than what is required in results of
\citet{bai2009panel} or \citet{moon2015linear,moon2017dynamic} for methods that require explicitly estimating $\bl$ in \eqref{eq:model}.
Furthermore, these assumption are aligned with standard practice in the literature; for example, we can assess the claim that we  balance
all components of $\ba$ by examining the extent to which the method succeeds in balancing pre-intervention periods. Historical context may be needed to justify the assumption that that there were no
other shocks disproportionately affecting the treatment units at the time of the treatment.

\subsection{Weighted Double-Differencing Estimators}
\label{sec:double_diff}

We introduced the SDID estimator \eqref{main_sdid} as the
solution to a weighted two-way fixed effects regression. For the purpose of our formal results,
however, it is convenient to work with the alternative characterization described above
in Equation~\ref{eq:tau-weighted-differences}.
For any weights $\omega \in \Omega$ and $\lambda \in \Lambda$, we can define a weighted
double-differencing estimator\footnote{This weighted double-differencing structure plays a key
role in understanding the behavior of SDID. As discussed further in Section \ref{sec:relworks},
despite relying on a different motivation, certain specifications of the recently proposed ``augmented synthetic
control'' method of \citet*{ben2018augmented} also result in a weighted double-differencing estimator.}
\begin{equation}
\label{eq:tau-weighted-differences}
\htau(\omega,\lambda) =  \omega_{\ttt}^\top \by_{\ttt,\post} \lambda_{\post} - \omega_{\ccc}^\top \by_{\ccc,\post} \lambda_{\post} - \omega_{\ttt}^\top \by_{\ttt,\pre} \lambda_{\pre} +
 \omega_{\ccc}^\top \by_{\ccc,\pre} \lambda_{\pre}.  
\end{equation}
One can verify that the basic DID estimator is of the form \eqref{eq:tau-weighted-differences}, with
constant weights $\omega_\ttt = 1/N_\ttt$, etc. The proposed SDID estimator \eqref{main_sdid} can also be written as
\eqref{eq:tau-weighted-differences}, but now with weights $\homega^\sdid$ and $\hlambda^\sdid$ solving \eqref{unit_weights}
and \eqref{time_weights} respectively. When there is no risk of ambiguity, we will omit the SDID-superscript from the weights
and simply write $\homega$ and $\hlambda$.

Now, note that for any choice of weights $\omega \in \Omega$ and $\lambda \in \Lambda$, we have $\omega_{\ttt} \in \R^{N_\ttt}$
and $\lambda_{\post} \in \R^{T_\post}$ with all elements equal to $1/N_\ttt$ and $1/T_\post$  respectively, and so
$\omega_{\ttt}^\top \btau_{\ttt,\post} \lambda_{\post} = \tau$. Thus, we can decompose the error of any weighted double-differencing
estimator with weights satisfying these conditions as the sum of a bias and a noise component:
\begin{equation}
\label{eq:sdid_error}
\begin{split}
\htau(\omega,\lambda) - \tau 
&= \underbrace{\omega_{\ttt}^\top \bl_{\ttt,\post} \lambda_{\post} - \omega_\ccc^\top \bl_{\ccc,\post} \lambda_{\post} - \omega_{\ttt}^\top \bl_{\ttt,\pre} \lambda_\pre + \omega_\ccc^\top \bl_{\ccc,\pre} \lambda_\pre}_{\text{bias}\ B(\omega,\lambda)} \\
&\quad\quad\quad\quad + \underbrace{\omega_{\ttt}^\top \be_{\ttt,\post} \lambda_{\post}  - \omega_\ccc^\top \be_{\ccc,\post} \lambda_{\post} - \omega_{\ttt}^\top\be_{\ttt,\pre} \lambda_\pre +
 \omega_\ccc^\top \be_{\ccc,\pre} \lambda_\pre.}_{\text{noise}\ \varepsilon(\omega,\lambda)}
\end{split}
\end{equation}
In order to characterize the distribution of $\htau^\sdid - \tau$, it thus remains to carry out two tasks. First, we need
to understand the scale of the errors $ B(\omega,\lambda)$ and $\varepsilon(\omega,\lambda)$, and second, we need
to understand how data-adaptivity of the weights $\homega$ and $\hlambda$  affects the situation.

\subsection{Oracle and Adaptive Synthetic Control Weights}
\label{sec:oracle-and-adaptive}

To address the adaptivity of the SDID weights $\homega$ and $\hlambda$ chosen via \eqref{unit_weights}
and \eqref{time_weights}, we construct alternative ``oracle'' weights that have similar properties to $\homega$ and $\hlambda$
in terms of eliminating bias due to $\bl$, but are deterministic. We can then further decompose the error of $\htau^\sdid$ into
the error of a weighted double-differencing estimator with the oracle weights and the difference between the oracle and feasible
estimators. Under appropriate conditions, we  find  the latter term negligible relative to the error of the
oracle estimator,  opening the door to a simple asymptotic characterization of the error distribution of $\htau^\sdid$.

We define such oracle weights $\tomega$ and $\tlambda$ by minimizing the expectation of the objective functions
$\ell_{unit}(\cdot)$ and $\ell_{time}(\cdot)$ used in \eqref{unit_weights} and \eqref{time_weights} respectively, and set
\begin{equation}
\label{eq:oracle_gen}
\begin{split}
&\p{\tomega_0, \tomega}  = \argmin_{\omega_0 \in \R, \omega \in \Omega} \EE{\ell_{unit}(\omega_0, \omega)}, \ \ \ \
\p{\tlambda_0, \, \tlambda} = \argmin_{\lambda_0 \in \R, \lambda \in \Lambda}\EE{\ell_{time}(\lambda_0, \lambda)}.
\end{split}
\end{equation}
In the case of our model \eqref{eq:model} these weights admit a simplified characterization
\begin{align}
&\p{\tomega_0, \tomega} 
= \argmin_{\omega_0 \in \R, \omega \in \Omega}  \Norm{\omega_0 + \omega_{\ccc}^{\top} \bl_{\ccc,\pre} - \omega_{\ttt}^{\top} \bl_{\ttt, \pre}}^2_2 +  \p{\trace(\Sigma_{\pre,\pre}) + \zeta^2 T_\pre} \Norm{\omega}_2^2,
\label{eq:row_oracle} \\
&\p{\tlambda_0, \, \tlambda} 
= \argmin_{\lambda_0 \in \R, \lambda \in \Lambda} \Norm{\lambda_0 + \bl_{\ccc,\pre} \lambda_{\pre} - \bl_{\ccc,\post}\lambda_{\post}}_2^2 + \Norm{\tilde\Sigma \lambda}_2^2, \ 
\label{eq:col_oracle} \\
& \quad\quad\quad\quad\quad\quad \text{ where }\ \ \ \tilde\Sigma=\begin{pmatrix}\ \Sigma_{\pre,\pre} & -\Sigma_{\pre,\post} \\
-\Sigma_{\post,\pre} & \ \ \ \Sigma_{\post,\post}\end{pmatrix}. \nonumber
\end{align}
The error of the synthetic difference in differences estimator can now be decomposed as follows,
\begin{equation}
\label{eq:full_decomp}
\htau^\sdid - \tau 
            = \underbrace{\varepsilon(\tomega,\tlambda)}_{\text{oracle noise}}
            + \underbrace{B(\tomega,\tlambda)}_{\text{oracle confounding bias}}
            +\quad  \underbrace{\htau(\homega,\hlambda) - \htau(\tomega,\tlambda),}_{\text{deviation from oracle}}
\end{equation}
and our task is to characterize all three terms.

First, the oracle noise term tends to be small when the weights are not too concentrated,
{\it i.e.,} when $\norm{\tomega}_2$ and $\norm{\tlambda}_2$ are small, and we have a sufficient number of exposed units and time periods.
In the case with $\Sigma = \sigma^2 I_{T \times T}$, {\it i.e.,} without any cross-observation correlations, we note that
$\Var{\varepsilon(\tomega,\tlambda)} = \sigma^2 \p{N_\ttt^{-1} + \norm{\tomega}_2^2}\p{T_\post^{-1} + \norm{\tlambda}_2^2}$.
When we move to our asymptotic analysis below, we  work under assumptions that make this oracle noise
term dominant relative to the other error terms in \eqref{eq:full_decomp}.

Second, the oracle confounding bias will be small either when the pre-exposure oracle row regression 
fits well and generalizes to the exposed rows, i.e., 
$\tomega_0 + \tomega_\ccc^\top \bl_{\ccc,\pre} \approx \tomega_{\ttt}^\top \bl_{\ttt,\pre }$ and  
$\tomega_0 + \tomega_\ccc^\top \bl_{\ccc,\post} \approx \tomega_{\ttt}^{\top} \bl_{\ttt,\post}$,
or when the unexposed oracle column regression fits well and generalizes
to the exposed columns,
$\tlambda_0 + \bl_{\ccc,\pre}\tlambda_\pre \approx \bl_{\ccc,\post}\tlambda_{\post}$ and  
$\tlambda_0 + \bl_{\ttt,\pre}\tlambda_\pre \approx \bl_{\ttt,\post} \tlambda_{\post}$.
Moreover, even if neither model generalizes sufficiently well on its own, it suffices for one model to predict the generalization error of the other:
\begin{align*}
B(\omega,\lambda) &= (\omega_{\ttt}^\top \bl_{\ttt,\post} - \omega_{\ccc}^\top \bl_{\ccc,\post})\lambda_{\post} -
                         (\omega_{\ttt}^\top \bl_{\ttt,\pre}    - \omega_{\ccc}^\top \bl_{\ccc,\pre})\lambda_{\pre} \\
                    &= \omega_{\ttt}^\top (\bl_{\ttt,\post}\lambda_{\post} - \bl_{\ttt,\pre}\lambda_{\pre}) -
                         \omega_{\ccc}^\top(\bl_{\ccc,\post}\lambda_{\post} - \bl_{\ccc,\pre}\lambda_\pre).
\end{align*}
The upshot is even if one of the sets of weights fails to remove the bias from the presence of $\bl$, the combination  of weights $\tomega$ and $\tlambda$ can
compensate for such  failures. This double robustness property is similar to that of the augmented inverse
probability weighting estimator, whereby one can trade off between accurate estimates of the outcome
and treatment assignment models \citep*{ben2018augmented, scharfstein1999adjusting}.

We note that although poor fit in the oracle regressions on the unexposed rows and columns of $\bl$ will often be indicated by
a poor fit in the realized regressions on the unexposed rows and columns of $\by$, 
the assumption that one of these regressions generalizes to exposed rows or columns 
is an identification assumption without clear testable implications. 
It is essentially an assumption of no unexplained confounding:
any exceptional behavior of the exposed observations, whether due to exposure or not, can be ascribed to it.

Third, our core theoretical claim, formalized in our asymptotic analysis, is that the SDID estimator will be close to the oracle
when the oracle unit and time weights look promising on their respective training sets, i.e, 
when $\tomega_0 + \tomega_\ccc^{\top} \bl_{\ccc,\pre} \approx \tomega_{\ttt}^{\top} \bl_{\ttt,\pre}$ and $\norm{\tomega}_2$ is not too large and $\tlambda_0 + \bl_{\ccc,\pre}\tlambda_\pre \approx \bl_{\ccc,\post}\tlambda_{\post}$ and $\norm{\tlambda}_2$ is not too large. Although the details differ, as described above
these qualitative properties are also criteria for accuracy of the oracle estimator itself.

Finally, we comment briefly on the behavior of the oracle time weights $\tlambda$ in the presence of
autocorrelation over time. When $\Sigma$ is not diagonal, the effective regularization term in \eqref{eq:col_oracle}
does not shrink $\tlambda_{\pre}$  towards zero, but rather toward an autoregression vector 
\begin{equation}
\label{eq:autoregression-vector}
\psi =  \argmin_{v \in \R^{T_\pre}} \Norm{\tilde\Sigma \begin{pmatrix}v \\ \lambda_{\post} \end{pmatrix}} 
     = \Sigma_{\pre,\pre}^{-1}\Sigma_{\pre,\post}\lambda_{\post}.
\end{equation}
Here $\lambda_{\post}$ is the $T_\post$-component column vector with all elements equal to $1/T_\post$
and $\psi$ is the population regression coefficient in a regression of the average of the post-treatment errors on the pre-treatment errors.
In the absence of autocorrelation, $\psi$ is zero, but when autocorrelation is present, shinkage toward $\psi$
reduces the variance of the SDID estimator---and enables us to gain  precision over the basic DID estimator \eqref{main_did} even when the two-way fixed effects model is correctly specified. This explains some of the behavior noted in the simulations.

\subsection{Asymptotic Properties}
\label{section:asym}

To carry out the analysis plan sketched above, we need to embed our problem into an asymptotic setting.
First, we require the error matrix $\be$ to satisfy some regularity properties. 

\begin{assumption}\label{ass:noise}{\sc (Properties of Errors)}
The rows  $\be_{i}$ of the noise matrix are independent and identically distributed Gaussian
vectors and the eigenvalues of its covariance matrix $\Sigma$ are bounded and bounded away from zero.
\end{assumption}

Next, we spell out assumptions about the sample size. At a high level, we want the panel to be large
({\it i.e.}, $N, \, T \rightarrow \infty$), and for the number of treated cells of the panel to grow to infinity but
slower than the total panel size. We note in particular that we can accommodate sequences where
one of $T_\post$ or $N_\ttt$ is fixed, but not both.

\begin{assumption}\label{ass:sample_sizes} {\sc (Sample Sizes)}
We consider a sequence of populations where \\
$(i)$  the product  $N_\ttt \, T_\post$ goes to infinity, and both $N_\ccc$ and $T_\pre$ go to infinity,\\
$(ii)$ the ratio $T_\pre/N_\ccc$ is bounded and bounded away from zero,\\
$(iii)$  $N_\ccc / (N_\ttt T_\post \max(N_\ttt, T_\post)\log^2(N_\ccc)) \to \infty$.
\end{assumption}

We also need to make assumptions about the spectrum of $\bl$; in particular, $\bl$ cannot have too many large singular values,
although we allow for the possibility of  many small singular values. A sufficient, but not necessary, condition for the assumption
below is that the rank of $\bl$ is less than $\sqrt{\min(T_\pre,N_\ccc)}$. Notice that we do not assume any
lower bounds for non-zero singular values of $\bl$; in fact can accommodate arbitrarily many non-zero but
very small singular values, much like, {\it e.g.}, \citet*{belloni2014inference} can accommodate arbitrarily many
non-zero but very small signal coefficients in a high-dimensional inference problem.
We need that the $\sqrt{\min(T_\pre,N_\ccc)}$th singular value of $\bl_{\ccc,\pre}$ is sufficiently small. Formally:
\begin{assumption}\label{rank}{\sc (Properties of $\bl$)}
 Letting $\sigma_1(\ba), \sigma_2(\ba), \ldots$ denote the singular values of the matrix $\ba$ in decreasing order
and $R$ the largest integer less than $\sqrt{\min(T_\pre,N_\ccc)}$,
\begin{equation}
 \sigma_{R}(\bl_{\ccc,\pre})/R = o\left(\min\Big\{N_\ttt^{-1/2}\log^{-1/2}(N_\ccc), T_\post^{-1/2}\log^{-1/2}(T_\pre)\Bigr\}\right)
 \end{equation}
\end{assumption}

The last---and potentially most interesting---of our assumptions concerns the relation between the factor
structure $\bl$ and the assignment mechanism $\bw$. At a high level, it plays the role of an identifying assumption,
and guarantees that the oracle weights from \eqref{eq:row_oracle} and \eqref{eq:col_oracle} that are directly
defined in terms of $\bl$ are able to adequately cancel out $\bl$ via the weighted double-differencing strategy.
This requires that the optimization problems \eqref{eq:row_oracle} and \eqref{eq:col_oracle} accommodate
reasonably dispersed weights, and that the treated units and after periods not be too dissimilar from
the control units and the before periods respectively.

\begin{assumption}\label{weightss}{\sc (Properties of Weights and $\bl$)}
The oracle unit weights 
 $\oomega$ satisfy
 \begin{equation}
 \label{eq:weightss1}
 \begin{aligned}
 &\|\oomega_{\ccc}\|_2 = o([(N_\ttt T_\post)\log(N_\ccc)]^{-1/2}) \qquad \text{ and } \\
 &\norm{\oomega_0 + \oomega_{\ccc}^{\top}\bl_{\ccc,\pre} - \oomega_{\ttt}^{\top}\bl_{\ttt,\pre}}_2  = o( N_\ccc^{1/4} (N_\ttt T_\post\max(N_\ccc,T_\post))^{-1/4} \log^{-1/2}(N_\ccc)), 
 \end{aligned}
 \end{equation}
the oracle time weights $\olambda$  satisfy  
\begin{equation}
\label{eq:weightss2}
 \begin{aligned}
 &\|\olambda_{\pre}-\psi\|_2 = o([(N_\ttt T_\post)\log(N_\ccc)]^{-1/2}) \qquad \text{ and } \\
 &\norm{\olambda_0 + \bl_{\ccc,\pre}\olambda_\pre - \bl_{\ccc,\post}\olambda_\post}_2 = o(N_\ccc^{1/4} (N_\ttt T_\post)^{-1/8}), 
\end{aligned}
\end{equation}
and the oracle weights jointly satisfy
\begin{equation}
\label{eq:weightss3}
\oomega_\ttt^\top\bl_{\ttt,\post}\olambda_\post
-\oomega_\ccc^\top\bl_{\ccc,\post}\olambda_\post
-
\oomega_\ttt^\top\bl_{\ttt,\pre}\olambda_\pre
+\oomega_\ccc^\top\bl_{\ccc,\pre}\olambda_\pre
 =o\left((N_\ttt T_\post)^{-1/2}\right).
\end{equation}
\end{assumption}

Assumptions  \ref{ass:noise}-\ref{weightss} are substantially weaker than those used to establish asymptotic
normality of comparable methods.\footnote{In particular, note that our assumptions are
satisfied in the well-specified two-way fixed effect setting model. Suppose we have \smash{$L_{it}=\alpha_i+\beta_t$}
with uncorrelated and homoskedastic errors, and that the sample size restrictions in Assumption \ref{ass:sample_sizes} are satisfied. 
Then Assumption \ref{ass:noise} is automatically satisfied, and
the rank condition on  $\bl$ from Assumption \ref{rank} is satisfied with $R=2$.
Next, we see that the oracle unit weights satisfy \smash{$\tilde\omega_{\ccc,i}=1/N_\ccc$} so that
\smash{$\|\oomega\|_2=1/\sqrt N_\ccc$}, and the oracle time weights satisfy \smash{$\tilde\lambda_{\pre,i}=1/T_\pre$}
so that \smash{$\|\olambda-\psi\|_2=1/\sqrt N_\ccc$}. Thus if the restrictions on the rates at which the sample sizes increase
in Assumption \ref{ass:sample_sizes} are satisfied, then \eqref{eq:weightss1} and \eqref{eq:weightss2} are satisfied.
Finally, the additive structure of $\bl$ implies that, as long as the weights for the controls sum to one, 
\smash{$\oomega_\ttt^\top\bl_{\ttt,\post}\olambda_\post
-\oomega_\ccc^\top\bl_{\ccc,\post}\olambda_\post=0$},
and
\smash{$\oomega_\ttt^\top\bl_{\ttt,\pre}\olambda_\pre
+\oomega_\ccc^\top\bl_{\ccc,\pre}\olambda_\pre
 =0$},
 so that \eqref{eq:weightss3} is satisfied.}
We do not require that double differencing alone removes the individual and time effects
as the DID assumptions do. Furthermore, we do not require that unit comparisons alone are sufficient to remove the biases
in comparisons between treated and control units as the SC assumptions do. Finally, we do not require a low rank factor model
to be correctly specified,  as is often assumed in the analysis of methods that estimate $\bl$ explicitly
\citep[e.g.,][]{bai2009panel,moon2015linear,moon2017dynamic}. Rather, we only need the combination of the
three bias-reducing components in the SDID estimator, $(i)$ double differencing, $(ii)$ the unit weights, and $(iii)$ the time
weights, to reduce the bias to a sufficiently small level.

Our main formal result states that under these assumptions, our estimator is asymptotically normal.
Furthermore, its asymptotic variance is optimal, coinciding with the variance we would get 
if we knew $\bl$ and $\Sigma$ a-priori and could therefore estimate $\tau$ by a simple average
of $\tau_{it}$ plus unpredictable noise,  
$N_{\ttt}^{-1}\sum_{i=N_{\ccc}+1}^N [T_{\post}^{-1}\sum_{t=T_\pre+1}^T (\btau_{it} + \varepsilon_{it}) - \be_{i,\pre}\psi]$.

\begin{theorem}
\label{theo:asymptotic-linearity}
Under the model \eqref{eq:model} with $\bl$ and $\bw$ taken as fixed, suppose that we run the
SDID estimator \eqref{main_sdid} with regularization parameter $\zeta$
satisfying 
$(N_{\ttt}T_{\post})^{1/2} 
\log(N_\ccc) = o(\zeta^2)$.
Suppose moreover that Assumptions \ref{ass:noise}-\ref{weightss} hold. Then,
\begin{equation}
\label{eq:asymptotic-linearity}
\htau^\sdid-\tau=\frac{1}{N_\ttt}\sum_{i=N_{\ccc}+1}^N \p{\frac{1}{T_\post}\sum_{t=T_\pre+1}^T \varepsilon_{it} - \be_{i,\pre}\psi} 
    + o_p\left((N_\ttt T_\post)^{-1/2}\right),
\end{equation}
and consequently
\begin{equation}
\label{eq:CLT1}
\p{ \htau^\sdid - \tau} \,\big/\, {V_{\tau}^{1/2}}  \, \Rightarrow \, {\cal N}\left(0, \, 1\right), \ \ {\rm where}\ \
V_{\tau} = \frac{1}{N_\ttt} \Var{\frac{1}{T_\post}\sum_{t=T_\pre+1}^T \varepsilon_{it} -  \be_{i,\pre} \psi}.
\end{equation}
Here $V_\tau$ is on the order of $1/(N_{\ttt}T_{\post})$, i.e., $N_\ttt T_\post V_{\tau}$ is bounded and bounded away from zero.
\end{theorem}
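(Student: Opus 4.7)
The plan is to work from the three-term decomposition \eqref{eq:full_decomp}: I would control the oracle confounding bias $B(\tomega,\tlambda)$ and the deviation-from-oracle term separately, and reduce the oracle noise $\varepsilon(\tomega,\tlambda)$ to the asymptotically linear expression of \eqref{eq:asymptotic-linearity}. The oracle bias requires no extra work: \eqref{eq:weightss3} in Assumption \ref{weightss} is precisely the assertion $B(\tomega,\tlambda) = o((N_\ttt T_\post)^{-1/2})$.

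For the oracle noise I would split
\begin{equation*}
\varepsilon(\tomega,\tlambda) = \underbrace{\tomega_\ttt^\top \varepsilon_{\ttt,\post}\tlambda_\post - \tomega_\ttt^\top \varepsilon_{\ttt,\pre}\psi}_{Z_N} - \tomega_\ttt^\top \varepsilon_{\ttt,\pre}(\tlambda_\pre - \psi) - \tomega_\ccc^\top \varepsilon_{\ccc,\post}\tlambda_\post + \tomega_\ccc^\top \varepsilon_{\ccc,\pre}\tlambda_\pre,
\end{equation*}
where $\psi$ is the autoregression vector of \eqref{eq:autoregression-vector} and $Z_N$ matches the leading term in \eqref{eq:asymptotic-linearity}. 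Under Assumption \ref{ass:noise} each of the last three summands is mean-zero Gaussian; a direct variance calculation that combines the spectral bounds on $\Sigma$ with \eqref{eq:weightss1}-\eqref{eq:weightss2} and the identities $\|\tomega_\ttt\|_2^2 = 1/N_\ttt$ and $\|\tlambda_\post\|_2^2 = 1/T_\post$ shows each is $o_p((N_\ttt T_\post)^{-1/2})$. The leading term $Z_N$ is a sum of $N_\ttt$ independent Gaussian summands (by the row-independence in Assumption \ref{ass:noise}), so \eqref{eq:CLT1} follows from a Lindeberg argument; the order $V_\tau \asymp 1/(N_\ttt T_\post)$ comes from the bounded-and-bounded-away-from-zero eigenvalue assumption on $\Sigma$, applied to the vector $u = (-\psi,\tlambda_\post)^\top$.

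The main obstacle is controlling $\htau(\homega,\hlambda) - \htau(\tomega,\tlambda)$. I would telescope this as $[\htau(\homega,\hlambda)-\htau(\tomega,\hlambda)] + [\htau(\tomega,\hlambda)-\htau(\tomega,\tlambda)]$; by the bilinearity of $\htau(\cdot,\cdot)$ and the fact that $\omega_\ttt$ and $\lambda_\post$ are pinned to $1/N_\ttt$ and $1/T_\post$ by the constraints in $\Omega$ and $\Lambda$, each bracket collapses to an inner product of $(\homega_\ccc - \tomega_\ccc)$ or $(\hlambda_\pre - \tlambda_\pre)$ with a residual vector whose norm can be controlled through \eqref{eq:weightss1}-\eqref{eq:weightss2}. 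To bound $\|\homega-\tomega\|_2$ and $\|\hlambda-\tlambda\|_2$ I would use strong convexity of the expected objectives in \eqref{eq:row_oracle}-\eqref{eq:col_oracle}---with modulus $\zeta^2 T_\pre + \trace(\Sigma_{\pre,\pre})$ on the unit side and the smallest eigenvalue of $\tilde\Sigma$ on the time side---combined with a uniform concentration of $\ell_{unit}(\cdot) - \EE{\ell_{unit}(\cdot)}$ and $\ell_{time}(\cdot) - \EE{\ell_{time}(\cdot)}$ over the simplices $\Omega$ and $\Lambda$. Assumption \ref{rank} is essential here: the decay of the singular values of $\bl_{\ccc,\pre}$ past order $\sqrt{\min(T_\pre,N_\ccc)}$ makes the noise--signal cross term $\varepsilon_{\ccc,\pre}^\top \bl_{\ccc,\pre}$ concentrate faster than the trivial $\sqrt{N_\ccc T_\pre}$ rate, and the lower-bound rate imposed on $\zeta$ in the hypothesis of the theorem is exactly what makes the resulting $\ell_2$-error in $\homega$ small enough that Cauchy-Schwarz against the residual vectors yields an $o_p((N_\ttt T_\post)^{-1/2})$ bound. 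The hardest subtask within this step is the uniform concentration over the non-smooth simplex in the presence of a low- but not fixed-rank $\bl_{\ccc,\pre}$; I would handle this with a Hanson--Wright-style quadratic-form inequality together with a spectral decomposition that separates the top-$R$ singular subspace of $\bl_{\ccc,\pre}$ (where standard Gaussian inner-product concentration applies) from its orthogonal complement (where the smallness of the tail singular values in Assumption \ref{rank} makes the bias contribution negligible). Assembling the three pieces delivers \eqref{eq:asymptotic-linearity}, and \eqref{eq:CLT1} then follows from the Gaussianity of $Z_N$ and the order of $V_\tau$ established above.
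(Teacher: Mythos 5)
Your overall architecture matches the paper's: the three-term oracle decomposition \eqref{eq:full_decomp}, with the oracle confounding bias disposed of by \eqref{eq:weightss3} and the oracle noise reduced to the linear term by direct variance calculations on the correction terms involving $\tomega_\ccc$ and $\tlambda_\pre-\psi$ (this is the paper's Lemma~\ref{lemma:reduction-to-oracle-deviation}, and your version of it is fine, as is the CLT step). Your plan for bounding $\norm{\homega-\tomega}_2$ and $\norm{\hlambda-\tlambda}_2$ --- strong convexity of the population objective plus uniform concentration over the simplex, with a top-$R$/tail split of the spectrum of $\bl_{\ccc,\pre}$ --- is also in the right spirit; the paper imports exactly such bounds from a companion errors-in-variables least-squares result (Lemma~\ref{lemma:weight-consistency}), and note that both a parameter-error bound on $\norm{\homega-\tomega}_2$ and a prediction-error bound on $\norm{\homega_0-\tomega_0+(\homega-\tomega)^\top\bl_{\ccc,\pre}}_2$ are needed downstream.

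The genuine gap is in how you close the deviation-from-oracle term. After telescoping, each bracket is an inner product of $\homega_\ccc-\tomega_\ccc$ or $\hlambda_\pre-\tlambda_\pre$ with a \emph{realized} residual vector built from $\by$, and you propose to finish by Cauchy--Schwarz against norms "controlled through \eqref{eq:weightss1}--\eqref{eq:weightss2}." Those assumptions only control the $\bl$-part of the residual. The noise part, e.g.\ $\varepsilon_{\ccc,\post}\lambda_\post-\varepsilon_{\ccc,\pre}\psi$, has Euclidean norm of order $\sqrt{N_\ccc/T_\post}$, and multiplying this by the achievable rate for $\norm{\homega_\ccc-\tomega_\ccc}_2$ does not give $o_p((N_\ttt T_\post)^{-1/2})$: plain Cauchy--Schwarz loses a factor of roughly $N_\ccc^{1/2}$. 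The paper's Lemma~\ref{lemma:lowrank-general} avoids this by exploiting statistical structure rather than norms: (i) $\varepsilon_{\ccc,\post}\lambda_\post-\varepsilon_{\ccc,\pre}\psi$ is mean-zero and independent of the data determining $\homega$, so Hoeffding's inequality applied conditionally yields a bound scaling with $\norm{\homega-\tomega}_2$ times the $L_2$-norm of a \emph{single} residual rather than the Euclidean norm of the whole residual vector; (ii) the pieces $(\homega-\tomega)^\top\varepsilon_{\ccc,\pre}(\psi-\tlambda_\pre)$ and $\tomega_\ccc^\top\varepsilon_{\ccc,\pre}(\hlambda_\pre-\tlambda_\pre)$ are controlled by generic chaining over the localized simplex, whose Gaussian width is only of order $\sqrt{\log N_\ccc}$; and (iii) the bilinear cross term $(\homega-\tomega)^\top\varepsilon_{\ccc,\pre}(\hlambda_\pre-\tlambda_\pre)$ requires Chevet's inequality. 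Likewise the cross term in the signal, $(\homega-\tomega)^\top\bl_{\ccc,\pre}(\hlambda_\pre-\tlambda_\pre)$, cannot be bounded via the operator norm of $\bl_{\ccc,\pre}$ (which is unbounded); the paper handles it with the prediction-error bounds and a spectral-gap argument. Without these ingredients the Cauchy--Schwarz step in your sketch fails, so as written the proof of the deviation bound is incomplete.
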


\section{Large-Sample Inference}
\label{sec:inference}

The asymptotic result from  the previous section can be used to motivate practical methods for large-sample inference using SDID. Under appropriate conditions, the estimator is asymptotically normal and zero-centered; thus, if these conditions hold and 
we have a consistent estimator for its asymptotic variance $V_\tau$, we can use conventional confidence intervals 
\begin{equation}
\label{eq:GCI}
 \tau \in \htau^\sdid \pm z_{\alpha/2}\sqrt{\hV_\tau}  
\end{equation}
to conduct asymptotically valid inference. In this section, we discuss three approaches to variance estimation for use in confidence intervals
of this type.

\begin{algorithm}[t]
 \KwData{$\mathbf{Y},\mathbf{W}, B$}
 \KwResult{ Variance estimator $\hV^{cb}_{\tau}$ }
 \For{$i \leftarrow 1$ \KwTo $B$}{
    Construct a bootstrap dataset $(\mathbf{Y}^{(b)}, \mathbf{W}^{(b)})$ by sampling $N$ rows of \\ $(\mathbf{Y}, \mathbf{W})$ with replacement. \\ 
    \If{the bootstrap sample has no treated units or no control units}{ Discard and resample (\textbf{go to 2})}
    Compute the SDID estimator $\hat \tau^{(b)}$ based on $(\mathbf{Y}^{(b)}, \mathbf{W}^{(b)})$
}
Define $\hV^{b}_\tau=\frac{1}{B}\sum_{b=1}^{B} (\hat{\tau}^{(b)} - \frac{1}{B}\sum_{b=1}^B\hat{\tau}^{(b)})^2$\;
 \caption{Bootstrap Variance Estimation}
 \label{alg:boot}
\end{algorithm}

The first proposal we consider, described in detail in Algorithm \ref{alg:boot}, involves a clustered bootstrap
\citep{efron1979bootstrap} where we independently resample units. As argued in \citet*{Bertrand2004did},
unit-level bootstrapping presents a natural approach to inference with panel data when repeated observations
of the same unit may be correlated with each other. The bootstrap is simple to implement and, in our experiments,
appears to yield robust performance in large panels.
The main downside of the bootstrap is that it may be computationally costly as it involves running
the full SDID algorithm for each bootstrap replication, and for large datasets this can be prohibitively expensive.

\begin{algorithm}[t]
 \KwData{$\hat\omega, \hat \lambda, \mathbf{Y}, \mathbf{W}, \hat \tau$}
 \KwResult{ Variance estimator $\hV_{\tau}$ }
 \For{$i \leftarrow 1$ \KwTo $N$}{
  Compute 
 $\hat \tau^{(-i)}:\argmin_{\tau, \{\alpha_j,\beta_t\}_{j \ne i, t}}\sum_{j\ne i,t}\left(\mathbf{Y}_{jt} - \alpha_j - \beta_t - \tau \mathbf{W}_{it}\right)^2\hat\omega_j\hat \lambda_t$
 }
Compute $\hV^{\mathrm{jack}}_\tau=(N-1)N^{-1}\sum_{i=1}^{N} (\hat{\tau}^{(-i)} - \hat{\tau})^2$\;
 \caption{Jackknife Variance Estimation}
 \label{alg:jack}
\end{algorithm}

To address this issue we next consider an approach to inference that is more closely tailored to the SDID method and only involves
running the full SDID algorithm once, thus dramatically decreasing the computational burden.
Given weights $\homega$ and $\hlambda$ used to get the SDID point estimate, Algorithm \ref{alg:jack} applies the jackknife
\citep{miller1974jackknife} to the weighted SDID regression \eqref{main_sdid}, with the weights treated as fixed.
The validity of this procedure is not implied directly by asymptotic linearity as in \eqref{eq:asymptotic-linearity}; however, as shown below,
we still recover conservative confidence intervals under considerable generality.

\begin{theorem}
\label{theo:jack}
Suppose that the elements of $\bl$ are bounded.
Then, under the conditions of Theorem \ref{theo:asymptotic-linearity}, 
the jackknife variance estimator described in Algorithm \ref{alg:jack} yields conservative confidence intervals,
i.e., for any $0 < \alpha < 1$,
\begin{equation}
\label{eq:jack_cons}
\liminf \PP{\tau \in \htau^\sdid \pm z_{\alpha/2}\sqrt{\hV^{\mathrm{jack}}_\tau}} \geq 1 - \alpha.
\end{equation}
Moreover, if the treatment effects $\btau_{it} = \tau$ are constant\footnote{When
treatment effects are heterogeneous, the jackknife implicitly treats the estimand \eqref{eq:estimand} as
random whereas we treat it as fixed, thus resulting in excess estimated variance; see \citet{imbens2004}
for further discussion.}
and
\begin{equation}
\label{eq:regr_cons}
T_{\post} N_\ttt^{-1} \Norm{\hlambda_0 + \bl_{\ttt,\pre} \hlambda_{\pre} - \bl_{\ttt,\post}\hlambda_{\post}}_2^2 \rightarrow_p 0,
\end{equation}
i.e., the time weights $\hlambda$ are predictive enough on the exposed units,
then the jackknife yields exact confidence intervals and \eqref{eq:jack_cons} holds with equality.
\end{theorem}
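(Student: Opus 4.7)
The plan is to build on the asymptotic linearity representation from Theorem \ref{theo:asymptotic-linearity} and exploit the fact that, since Algorithm \ref{alg:jack} fixes the weights $\homega, \hlambda$, the leave-one-unit-out estimator $\htau^{(-i)}$ admits a closed-form block-jackknife identity coming from weighted least squares. First I would derive this identity by partitioned regression: dropping all $T$ observations of unit $i$ from the weighted two-way fixed-effects fit gives
\[
\htau^{(-i)} - \htau \;=\; c_i(\homega,\hlambda)\,\cdot\, r_i(\mathbf{Y}),
\]
where $c_i$ is a deterministic function of the weights (of order $1/N_\ttt$ for treated units and of order $\homega_i$ for controls) and $r_i$ is the unit-level weighted residual contrast of $Y_{i,\post}$ against $Y_{i,\pre}$ under the time weights $\hlambda$, centered at the corresponding weighted average across the other units.

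Next I would split $\hV_\tau^{\mathrm{jack}}$ by treatment status. For control units $i \le N_\ccc$, the summand $(\htau^{(-i)}-\htau)^2$ is bounded by $O_p(\homega_i^2)$ using boundedness of $\bl$ (given in the hypothesis) together with the Gaussian tail control on $\be$ from Assumption \ref{ass:noise}. Assumption \ref{weightss} forces $\|\homega_\ccc\|_2^2 = o((N_\ttt T_\post \log N_\ccc)^{-1})$, and since $V_\tau \asymp (N_\ttt T_\post)^{-1}$ by Theorem \ref{theo:asymptotic-linearity}, the total control-unit contribution to $\hV_\tau^{\mathrm{jack}}$ is $o_p(V_\tau)$ and hence asymptotically negligible.

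The substantive work is in the treated-unit contribution. For $i > N_\ccc$, a first-order expansion gives $\htau^{(-i)} - \htau \approx -(N_\ttt - 1)^{-1}(\xi_i - \bar\xi_\ttt)$, where
\[
\xi_i \;=\; \frac{1}{T_\post}\sum_{t > T_\pre}(L_{it} + \varepsilon_{it}) \;-\; \hlambda_0 - \sum_{t \le T_\pre}\hlambda_t(L_{it} + \varepsilon_{it}),
\]
so that $\hV_\tau^{\mathrm{jack}} \approx N_\ttt^{-1} \cdot (N_\ttt)^{-1}\sum_{i > N_\ccc}(\xi_i - \bar\xi_\ttt)^2$. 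Writing $\xi_i = \xi_i^\varepsilon + \xi_i^L$ with $\xi_i^L = T_\post^{-1}\sum_{t > T_\pre}L_{it} - \hlambda_0 - \sum_{t \le T_\pre}\hlambda_t L_{it}$, the noise empirical variance converges in probability to $\Var{T_\post^{-1}\sum_{t > T_\pre}\varepsilon_{it} - \varepsilon_{i,\pre}\psi} = N_\ttt V_\tau$ by the law of large numbers, using $\hlambda_\pre \to_p \psi$ (on the relevant subspace) from \eqref{eq:weightss2}, Gaussianity of $\be$, and the fact that $\be_i \perp \hlambda$ because $\hlambda$ is a function only of control rows. The structural-plus-cross terms contribute a non-negative quantity of order $T_\post N_\ttt^{-1}\|\hlambda_0 + \bl_{\ttt,\pre}\hlambda_\pre - \bl_{\ttt,\post}\hlambda_\post\|_2^2$, which by definition is the quantity controlled in \eqref{eq:regr_cons}. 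Combining with the asymptotic linearity \eqref{eq:asymptotic-linearity} and applying Slutsky's lemma, one obtains \eqref{eq:jack_cons} as an inequality in general, and as an equality under \eqref{eq:regr_cons}; under heterogeneity of $\tau_{it}$, a further non-negative term tracking the empirical variance of $\tau_{it}$ across treated units appears, preserving conservativeness (this is the source of the footnote).

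The hardest step, I expect, is justifying the first-order expansion of $\htau^{(-i)} - \htau$ for treated units uniformly, since removing a single row in a weighted two-way fixed-effects fit simultaneously perturbs $\hat\mu, \hat\alpha, \hat\beta$ and $\htau$, and one must show that the higher-order correction terms are $o_p(N_\ttt^{-1})$ even when that unit carries non-trivial weight $1/N_\ttt$ in the post-period. Controlling this uniformly over $i > N_\ccc$ will require leveraging Assumption \ref{ass:sample_sizes}$(iii)$ to bound the leverages of the dropped block and Assumption \ref{rank} to control the conditioning of the design after partialling out the fixed effects restricted to the support of $\hlambda$.
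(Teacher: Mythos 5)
Your route is essentially the paper's: exploit the fact that with the weights frozen, the weighted two-way fixed-effects fit reduces to the double-differencing form, so the leave-one-out estimates have closed form; kill the control-unit contribution using $\norm{\homega_\ccc}^2 \ll (N_\ttt T_\post\log N_\ccc)^{-1}$ together with boundedness of $\bl$ and a Gaussian maximal bound; and split the treated-unit contribution into a noise part converging to $V_\tau$ and an $\bl$-part whose relative size is exactly the quantity in \eqref{eq:regr_cons}. Three points, however, deserve attention. First, your ``hardest step'' is a non-issue: because $\htau^{(-i)}$ is the same weighted double-differencing estimator with the unit weights renormalized within the affected group, the identity $\htau^{(-i)} - \htau = -\homega_i(\hDelta_i - \hmu_\ccc)/(1-\homega_i)$ (and its treated analogue with $\homega_i = N_\ttt^{-1}$) is \emph{exact}; there are no higher-order corrections to control, and no leverage or conditioning argument is needed.

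Second, two of your inputs are attributed to Assumption \ref{weightss}, which constrains the \emph{oracle} weights $\tomega,\tlambda$, not the estimated ones: the bound $\norm{\homega_\ccc}^2 \ll (N_\ttt T_\post \log N_\ccc)^{-1}$ and the statement that $\hlambda_\pre$ is close to $\psi$ both require transferring from $\tomega,\tlambda$ to $\homega,\hlambda$ via the weight-consistency bounds (Lemma~\ref{lemma:weight-consistency}); the paper devotes two subsections to exactly this. Third, and most substantively, your claim that the ``structural-plus-cross terms contribute a non-negative quantity'' is false as stated: the cross term $2\sum_i(\hDelta_i(L)-\hmu_\ttt(L))(\hDelta_i(\varepsilon)-\hmu_\ttt(\varepsilon))$ can be negative and, if left uncontrolled, of the same order as the noise term, which would destroy the conservativeness conclusion (e.g.\ if the $L$-contrasts were anticorrelated with the noise contrasts the jackknife could undershoot $V_\tau$). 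The paper's fix is to observe that $\hlambda$ depends only on control rows, so conditional on $\hlambda$ the cross term is mean-zero and concentrates out, leaving only the genuinely non-negative quadratic in $\bl$ plus the noise variance; you need this step explicitly before you can conclude \eqref{eq:jack_cons} as an inequality.
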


In other words, we find that the jackknife is in general conservative and is exact 
when treated and control units are similar enough that time weights that fit the control units generalize
to the treated units. 
This result depends on specific structure of the SDID estimator, and does not hold for
related methods such as the SC estimator. In particular, an analogue to Algorithm \ref{alg:jack} for SC would be
severely biased upwards, and would not be exact even in the well-specified fixed effects model. Thus, we do
not recommend (or report results for) this type of jackknifing with the SC estimator. We do report results for
jackknifing DID since, in this case, there are no random weights $\homega$ or $\hlambda$ and so our jackknife
just amounts to the regular jackknife.

Now, both the bootstrap and jackknife-based methods discussed so far are designed with the setting of Theorem \ref{theo:asymptotic-linearity} in
mind, i.e., for large panels with many treated units. These methods may be less reliable when the number of treated units $N_\ttt$
is small, and the jackknife is not even defined when $N_\ttt = 1$. However, many applications of synthetic controls have $N_\ttt = 1$, e.g.,
the California smoking application from Section \ref{sec:calif}. To this end, we consider a third variance estimator that is motivated by placebo evaluations as often considered in the literature
on synthetic controls \citep*{Abadie2010,abadie2014}, and that can be applied with $N_\ttt =1$. The main idea of such placebo evaluations
is to consider the behavior of synthetic control estimation when we replace the unit that was exposed to the treatment with
different units that were not exposed.\footnote{Such a placebo test is closely connected to permutation tests in randomization inference;
however, in many synthetic controls applications, the exposed unit was not chosen at random, in which case placebo tests do not have
the formal properties of randomization tests \citep{firpo2018synthetic,hahn2016}, and so may need to be interpreted via a
more qualitative lens.}
Algorithm \ref{alg:placebo} builds on this idea, and uses placebo predictions using only the unexposed units to estimate
the noise level, and then uses it to get \smash{$\hV_\tau$} and build confidence intervals as in \eqref{eq:GCI}. See \citet{bottmer2021design} for a discussion of the properties of such placebo variance estimators in small samples.

\begin{algorithm}[t]
 \KwData{$\mathbf{Y}_{\ccc,\cdot}, N_{\ttt}, B$}
 \KwResult{ Variance estimator $\hV^{\mathrm{placebo}}_{\tau}$ }
 \For{$b \leftarrow 1$ \KwTo $B$}{
    Sample $N_{\ttt}$ out of the $N_{\ccc}$ control units without replacement to `receive the placebo'\;
    Construct a placebo treatment matrix $\mathbf{W}_{\ccc,\cdot}^{(b)}$ for the controls\;
    Compute the SDID estimator $\hat \tau^{(b)}$\
    based on $(\mathbf{Y}_{\ccc,\cdot}, \mathbf{W}_{\ccc,\cdot}^{(b)})$
    ;
}

Define $\hV^{\mathrm{placebo}}_\tau=\frac{1}{B}\sum_{b=1}^{B} (\hat{\tau}^{(b)} - \frac{1}{B}\sum_{b=1}^B\hat{\tau}^{(b)})^2$\;
 \caption{Placebo Variance Estimation}
 \label{alg:placebo}
\end{algorithm}

Validity of the placebo approach relies fundamentally on homoskedasticity across units, because if the exposed and unexposed units
have different noise distributions then there is no way we can learn \smash{$V_\tau$} from unexposed units alone. We also
note that non-parametric variance estimation for treatment effect estimators is in general impossible if we only have one
treated unit, and so homoskedasticity across units is effectively a necessary assumption in order for inference to be possible
here.\footnote{In Theorem \ref{theo:asymptotic-linearity}, we also assumed homoskedasticity. In contrast to the case of placebo
inference, however, it's likely that a similar result would also hold without homoskedasticity; homoskedasticity
is used in the proof essentially only to simplify notation and allow the use of concentration inequalities which have been 
proven in the homoskedastic case but can be generalized.}
Algorithm \ref{alg:placebo} can also be seen as an adaptation of the method of \citet{conley2011inference} for inference in DID models with
few treated units and assuming homoskedasticity, in that both rely on the empirical distribution of residuals for placebo-estimators
run on control units to conduct inference. We refer to \citet{conley2011inference} for a detailed analysis of this class of algorithms.

\begin{table}[t]
\begin{center}
\begin{tabular}{|l|rrr|rrr|rrr|}
\hline
&\multicolumn{3}{c|}{Bootstrap} & \multicolumn{3}{c|}{Jackknife}& \multicolumn{3}{c|}{Placebo}\\ 
& SDID & SC & DID & SDID & SC & DID& SDID & SC & DID \\ \hline
 Baseline & 0.96 & 0.93 & 0.89 & 0.93 & --- & 0.92 & 0.95 & 0.89 & 0.96 \\ 
   \hline
Gun Law & 0.97 & 0.96 & 0.93 & 0.94 & --- & 0.93 & 0.94 & 0.95 & 0.93 \\ 
  Abortion & 0.96 & 0.94 & 0.93 & 0.93 & --- & 0.95 & 0.97 & 0.91 & 0.96 \\ 
  Random & 0.96 & 0.96 & 0.92 & 0.93 & --- & 0.94 & 0.96 & 0.96 & 0.94 \\ 
  Hours & 0.92 & 0.96 & 0.94 & 0.89 & --- & 0.95 & 0.91 & 0.89 & 0.96 \\ 
  Urate & 0.91 & 0.90 & 0.57 & 0.86 & --- & 0.64 & 0.88 & 0.89 & 0.62 \\ 
   \hline
$T_{post}= 1$ &0.93 & 0.94 & 0.84 & 0.92 & --- & 0.88 & 0.93 & 0.91 & 0.92\\ 
$N_{tr} = 1$ & --- & --- & --- & --- & --- & --- & 0.97 & 0.95 & 0.96  \\ 
$T_{post}= N_{tr} = 1  $ & --- & --- & --- & --- & --- & --- & 0.95 & 0.94 & 0.94\\ 
   \hline
Resample, $N=200$ & 0.94 & 0.95 & 0.92 & 0.95 & --- & 0.93 & 0.96 & 0.94 & 0.94 \\ 
  Resample, $N=400$ & 0.95 & 0.92 & 0.96 & 0.96 & --- & 0.95 & 0.96 & 0.91 & 0.96\\ 
   \hline
Democracy & 0.93 & 0.96 & 0.55 & 0.94 & --- & 0.59 & 0.98 & 0.97 & 0.79 \\ 
Education &0.95 & 0.95 & 0.30 & 0.95 & --- & 0.34 & 0.99 & 0.90 & 0.94 \\ 
Random & 0.93 & 0.95 & 0.89 & 0.96 & --- & 0.91 & 0.95 & 0.94 & 0.91 \\ 
\hline
  \end{tabular}
\caption{Coverage results for nominal 95\% confidence intervals in the CPS and Penn World Table simulation setting from Tables \ref{table1} and  \ref{table2}.
The first three columns show coverage of confidence intervals obtained via  the Placebo method.
The second set of columns show coverage from the jackknife method.
The last set of columns show coverage from the clustered bootstrap.
Unless otherwise specified, all settings have $N = 50$ and $T = 40$
cells, of which at most $N_\ttt = 10$ units and $T_\post = 10$ periods are treated. In rows 7-9, we reduce the number
of treated cells. In rows 10 and 11, we artificially make the panel larger by adding rows, which makes the assumption that the number of treated units is small relative to the number of control units more accurate (we set $N_\ttt$ to $10\%$ of the total number of units).
We do not report jackknife and bootstrap coverage rates for $N_\ttt = 1$ because the estimators are not well-defined. We do not report jackknife coverage rates for SC because, as discussed in the text, the variance estimator is not well justified in this case.
All results are based on 400 simulation replications.}
\label{table_cps_cov}
 \end{center}								       
\end{table}

Table \ref{table_cps_cov} shows the coverage rates for the experiments described in Section \ref{sec:cps} and \ref{sec:penn},
using Gaussian confidence intervals \eqref{eq:GCI} with variance estimates obtained as described above.
In the case of the SDID estimation, the bootstrap estimator performs particularly well, yielding nearly nominal $95\%$ coverage,
while both placebo and jackknife variance estimates also deliver results that are close to the  nominal $95\%$ level.
This is encouraging, and aligned with our previous observation that the SDID estimator
appeared to have low bias. That being said, when assessing the performance of the placebo estimator, recall
that the data in Section \ref{sec:cps} was generated with noise that is both Gaussian and homoskedastic across
units---which were assumptions that are both heavily used by the placebo estimator.

In contrast, we see that coverage rates for DID and SC can be relatively low, especially in cases with significant bias
such as the setting with the  state unemployment rate as the outcome. This is again in line with what one may have expected
based on the distribution of the errors of each estimator as discussed in Section \ref{sec:cps}, e.g., in
Figure \ref{fig_mw_rand}: If the point estimates $\htau$ from DID and SC are dominated by bias, then we should not
expect confidence intervals that only focus on variance to achieve coverage.

\section{Related Work}
\label{sec:relworks}

Methodologically, our work draws most directly from the literature on SC methods, including
\citet{abadie2003}, \citet*{Abadie2010, abadie2014}, \citet{abadie2016},  \citet{doudchenko2016balancing}, and \citet*{ben2018augmented}.
Most methods in this line of work can be thought of as  focusing on constructing unit weights that create comparable
(balanced) treated and control units, without relying on any modeling or weighting across time.
\citet*{ben2018augmented} is an interesting exception. Their augmented synthetic control estimator,
motivated by the augmented inverse-propensity weighted estimator of \citet*{robins1994estimation},
combines synthetic control weights with a regression adjustment for improved accuracy (see also \citet*{kellogg2020combining} which explicitly connects SC to matching). 
They focus on the case of $N_\ttt = 1$ exposed units and $T_\post = 1$ post-exposure periods, and their method involves fitting a
model for the conditional expectation $m(\cdot)$ for $Y_{iT}$ in terms of the lagged outcomes $\by_{i,\pre}$, and
then using this fitted model to ``augment'' the basic synthetic control estimator as follows.
\begin{equation}
\label{eq:asc}
\begin{split}
 \htau_\asc
 &= Y_{NT} - \p{\sum_{i=1}^{N-1} \homega^\scc_i Y_{iT}+ \left(
\hat m(\by_{N,\pre})-\sumi\homega^\scc_i \hat m(\by_{i,\pre})\right)}.
\end{split}
\end{equation}
Despite their different motivations, the augmented synthetic control and synthetic difference in differences methods share an
interesting connection: with a linear model $m(\cdot)$, 
$\htau_\sdid$ and $\htau_\asc$ are very similar. In fact, had we fit $\homega^{\sdid}$ without intercept,
they would be equivalent for $\hat m(\cdot)$ fit by least squares on the controls,
imposing the constraint that its coefficients are nonnegative and to sum to one, that is, 
for $\hat m(\by_{i,\pre}) = \hlambda_0^{\sdid} + \by_{i,\pre}\hlambda_{\pre}^{\sdid}$.
This connection suggests that weighted two-way bias-removal methods are a natural way of working with
panels where we want to move beyond simple difference in difference approaches.

We also note recent work of \citet{roth2018pre} and \citet{rambachan2019honest},
who focus on valid inference in difference in differences settings when users look at past outcomes to
check for parallel trends. Our approach uses past data not only to check
whether the trends are parallel, but also to construct the weights to make them parallel. In this setting, we show
that one can still conduct valid inference, as long as $N$ and $T$ are large enough and the size of the treatment block is small.

In terms of our formal results, our paper fits broadly in the literature on panel models with interactive fixed effects and the matrix completion literature \citep{athey2017matrix, bai2009panel, moon2015linear,moon2017dynamic, robins1985comparison,xu2017generalized}.
Different types of problems of this form have a long tradition in the econometrics literature,
with early results going back to \citet*{ahn2001gmm}, \citet{chamberlain1992efficiency} and \citet*{holtz1988estimating} in the case
of finite-horizon panels (i.e., in our notation, under asymptotics where $T$ is fixed and only $N \rightarrow \infty$).
More recently, \citet{freyberger2018non} extended the work of \citet{chamberlain1992efficiency} to a setting that's
closely related to ours, and emphasized the role of the past outcomes for constructing moment restrictions in the fixed-$T$ setting.
\citet{freyberger2018non} attains identification by assuming that the errors $\be_{it}$ are uncorrelated,
and thus past outcomes act as valid instruments. In contrast, we allow for correlated errors within rows, and thus need to work in a large-$T$ setting.

Recently, there has considerable interest in models of type \eqref{basic_model} under asymptotics where both $N$ and $T$
get large. One popular approach, studied by \citet{bai2009panel} and \citet{moon2015linear,moon2017dynamic}, involves
fitting \eqref{basic_model} by ``least squares'', i.e., by minimizing squared-error loss while constraining $\widehat{\bl}$ to
have bounded rank $R$. While these results do allow valid inference for $\tau$, they require strong assumptions. First,
they require the rank of $\bl$ to be known a-priori (or, in the case of \citet{moon2015linear}, require a known upper bound
for its rank), and second, they require a $\beta_{\min}$-type condition whereby the normalized non-zero singular values of $\bl$
are well separated from zero. In contrast, our results require no explicit limit on the rank of $\bl$ and allow for $\bl$ to have
to have positive singular values that are arbitrarily close to zero, thus suggesting that the SDID method may be more robust than the
least squares method in cases where the analyst wishes to be as agnostic as possible regarding properties of $\bl$.\footnote{By
analogy, we also note that, in the literature on high-dimensional inference, methods that do no assume a uniform lower bound on
the strength of non-zero coefficients of the signal vector are generally considered more robust than ones that do
\citep*[e.g.,][]{belloni2014inference,zhang2014confidence}.}

\citet*{athey2017matrix}, \citet*{amjad2018robust}, \citet{moon2018nuclear} and \citet{xu2017generalized} build on this line of work, and replace the fixed-rank constraint with data-driven regularization on $\widehat{\bl}$. This innovation is very helpful from a computational perspective; however, results for inference about
$\tau$ that go beyond what was available for least squares estimators are currently not available. We also note recent papers that
draw from these ideas in connection to synthetic control type analyses, including \citet{chan2020pcdid} and  \citet{gobillon2016regional}.
Finally, in a paper contemporaneous to ours, \citet*{agarwal2019robustness} provide improved bounds from principal component
regression in an errors-in-variables model closely related to our setting, and discuss implications for estimation in synthetic
control type problems. Relative to our results, however, \citet{agarwal2019robustness} still require assumptions on the behavior
of the small singular values of $\bl$, and do not provide methods for inference about $\tau$.

In another direction, several authors have recently proposed various methods that implicitly control for the systematic component
$\bl$ in models of time \eqref{basic_model}. In one early example, \citet*{hsiao2012panel} start with a factor model similar to ours and
show that under certain assumptions it implies the moment condition
\begin{equation}
\label{eq:linear_model_stat}
Y_{Nt}= a+ \sum_{j=1}^{N-1}\beta_j Y_{jt} + \epsilon_{Nt}, \ \ \ \ \ \EE{\varepsilon_{Nt} \,\big|\, \{Y_{jt}\}_{j=1}^{N-1}}=0,
\end{equation}
for all $t = 1, \, \ldots, \, T$. The authors then estimate $\beta_j$ by (weighted) OLS.  This approach is further refined by \cite{li2017estimation},
who additionally propose to penalizing the coefficients $\beta_j$ using the lasso \citep{tibshirani1996regression}.
In a recent paper,  \citet*{chernozhukov2018inference} use the model \eqref{eq:linear_model_stat} as a starting point for inference.

While this line of work shares a conceptual connection with us, the formal setting is very different. In order to derive a representation of
the type \eqref{eq:linear_model_stat}, one essentially needs to assume a random specification for \eqref{basic_model} where both $\bl$ and
$\be$ are stationary in time. \citet{li2017estimation} explicitly assumes that the outcomes $\by$ themselves are weakly stationary, while 
\citet*{chernozhukov2018inference} makes the same assumption to derive the results that are valid under general misspecification.
In our results, we do not assume stationarity anywhere: $\bl$ is taken as deterministic and the errors $\be$ may be non-stationary.
Moreover, in the case of most synthetic control and difference in differences analyses, we believe stationarity to be a fairly restrictive
assumption. In particular, in our model, stationarity would imply that a simple pre-post comparison for exposed units would be an
unbiased estimator of $\tau$ and, as a result, the only purpose of the unexposed units would be to help improve efficiency.
In contrast, in our analysis, using unexposed units for double-differencing is crucial for identification.

\citet{ferman2019synthetic} analyze the performance of synthetic control estimator using essentially the same model as we do. They focus on the situations where $N$ is small, while $T_\pre$ (the number of control periods) is growing. They show that unless time factors have strong trends ({\it e.g.}, polynomial) the synthetic control estimator is asymptotically biased. Importantly \cite{ferman2019synthetic} focus on the standard synthetic control estimator, without time weights and regularization, but with an intercept in the construction of the weights.

Finally, from a statistical perspective, our approach bears some similarity to the work on ``balancing'' methods for
program evaluation under unconfoundedness, including \citet*{athey2018approximate}, \citet*{graham1},
\citet{hirshberg2017augmented}, \citet{imai2014covariate}, \citet{kallus2020generalized}, \citet{zhao2019covariate}
and \citet{zubizarreta2015stable}. One major result of this line of work is that, by algorithmically finding weights that balance
observed covariates across treated and control observations, we can derive robust estimators with good asymptotic properties
(such as efficiency). In contrast to this line of work, rather than balancing observed covariates, we here need to balance unobserved
factors $\ba$ and $\bbb$ in \eqref{basic_model} to achieve consistency; and accounting for this forces us to follow a different formal
approach than existing studies using balancing methods.


\bibliographystyle{plainnat}
\bibliography{references}

\newpage
\section{Appendix}
\subsection{Placebo Study Details}
\label{sec:placebo-study-details}
For the placebo studies we use three indicators $D_i$ to estimate the assignment model via
logistic regression as described in \eqref{eq:assignment_lr}.
The first is equal to an indicator that state $i$ has
 a minimum wage that is higher than the federal minimum
wage in the year 2000. This indicator was taken from \url{http://www.dol.gov/whd/state/stateMinWageHis.htm};
see \citet{barrios2012clustering} for details.
The second indicator comes from a state having an open-carry gun law. This was taken from
 \url{https://lawcenter.giffords.org/gun-laws/policy-areas/guns-in-public/open-carry/}.
  The third indicator comes from the state not having  a ban on partial birth abortions. This was taken from 
 \url{https://www.guttmacher.org/state-policy/explore/overview-abortion-laws}.
  Table \ref{laws} presents the values for these indicators.

\begin{table}[h]
\centering
\begin{adjustbox}{width=0.55\textwidth}
\begin{tabular}{lccc}
  \hline
State & Minimum Wage  & Unrestricted Open Carry & Abortion  \\ 
  \hline
  Alaska & 0 & 1 & 0\\
Alabama &  0 & 0 & 0 \\ 
  Arkansas & 0 & 1 & 0  \\ 
  Arizona & 0 & 1 & 0\\
  California & 1 & 0 & 1\\
  Colorado & 0 & 0 & 1\\ 
  Connecticut &  0 & 0 & 1\\ 
  Delaware & 1 & 1 & 1\\ 
  Florida & 0 & 0 & 0\\
  Georgia & 0 & 0 & 0\\ 
  Hawaii & 0 & 0 & 1\\
  Idaho &  0 & 1 & 0\\ 
  Illinois & 0 & 0 & 1\\ 
  Indiana & 0 & 0 & 0\\ 
  Iowa &  0 & 0 & 0\\ 
  Kansas &  0 & 1 & 0\\ 
  Kentucky &  0 & 1 & 0\\ 
  Louisiana & 0 & 1 & 0\\ 
  Massachusetts & 1 & 0 & 1\\
  Maine &  0 & 1 & 1\\ 
  Maryland & 0 & 0 & 1\\
  Michigan & 0 & 1 & 0\\
  Minnesota & 0 & 0& 1\\ 
  Mississippi & 0 & 1 & 0\\ 
  Missouri &  0 & 0 & 0\\ 
  Montana &  0 & 1 & 0\\ 
  Nebraska & 0 & 1 & 0\\ 
  Nevada &  0 & 1 & 1\\ 
  New Hampshire &  0 & 1 & 0\\ 
  New Mexico &  0 & 1 & 0\\ 
  North Carolina & 0 & 1 & 1\\ 
  North Dakota & 0 & 0 & 0\\ 
  New York & 0 & 0 & 1\\
  New Jersey & 0 & 0 & 0\\
  Ohio &  0 & 1 & 0\\ 
  Oklahoma &  0 & 0 & 0\\ 
  Oregon & 1 & 1 & 1\\
  Pennsylvania &  0 & 0 & 1\\ 
  Rhode Island &  1 & 0 & 0\\ 
  South Carolina &  0 & 0 & 0\\ 
  South Dakota &  0 & 1 & 0\\ 
  Tennessee &  0 & 0 & 0\\ 
  Texas & 0 & 0 & 0\\ 
  Utah &  0 & 0 & 0\\ 
  Vermont &  1 & 1 & 1\\ 
  Virginia &  0 & 0 & 0\\ 
  Washington & 1 & 0 & 1\\
  West Virginia &  0 & 1 & 0\\ 
  Wisconsin &0 & 1 & 0 \\ 
  Wyoming &  0 & 1 & 1\\ 
   \hline
\end{tabular}
\end{adjustbox}
\caption{State Regulations}
\label{laws}
\end{table}

\begin{table}[ht]
\centering
\begin{tabular}{|l|rr|rr|}
  \hline
 & SC  & SC (reg) & DIFP & DIFP (reg) \\ 
  \hline
Baseline & 0.037 & 0.078 & 0.032 & 0.036 \\ 
\hline
No Correlation & 0.038 & 0.079 & 0.032 & 0.036 \\ 
No $\mathbf{M}$ & 0.018 & 0.034 & 0.016 & 0.014 \\ 
No $\mathbf{F}$ & 0.023 & 0.025 & 0.032 & 0.036 \\ 
Only noise & 0.014 & 0.012 & 0.016 & 0.014 \\ 
No noise & 0.017 & 0.034 & 0.011 & 0.020 \\ 
\hline
Gun Law & 0.027 & 0.034 & 0.030 & 0.040 \\ 
Abortion & 0.031 & 0.065 & 0.027 & 0.035 \\ 
Random & 0.025 & 0.031 & 0.027 & 0.035 \\ 
\hline
Hours & 0.203 & 0.329 & 0.197 & 0.191 \\ 
U-rate & 0.184 & 0.259 & 0.187 & 0.288 \\ 
\hline
$T_{post} = 1$ & 0.059 & 0.065 & 0.054 & 0.050 \\ 
$N_{tr} = 1$ & 0.072 & 0.085 & 0.083 & 0.087 \\ 
$T_{post} = N_{tr} = 1$ & 0.124 & 0.124 & 0.117 & 0.112 \\ 
\hline
Resample, $N = 200$ & 0.017 & 0.016 & 0.018 & 0.018 \\ 
Resample, $N = 400$ & 0.014 & 0.011 & 0.015 & 0.012 \\ 
\hline
Democracy & 0.038 & 0.035 & 0.039 & 0.031 \\ 
Education & 0.053 & 0.062 & 0.039 & 0.029 \\ 
Random & 0.046 & 0.047 & 0.045 & 0.046 \\ 
   \hline
\end{tabular}
\caption{Comparison of SC and DIFP estimators without regularization and with the regularization parameter used to compute SDID unit weights. Simulation designs correspond to those of Table \ref{table1} and \ref{table2}. All results are based on 1000 simulations.}
 \label{table_reg_vs_unreg}
\end{table}

\FloatBarrier

\subsection{Unit/time weights for California}
\begin{table}[ht]
\centering
\begin{tabular}{r|rrr}
  \hline
 & DID & SC & SDID \\ 
  \hline
1988 & 0.053 & 0.000 & 0.427 \\ 
  1987 & 0.053 & 0.000 & 0.206 \\ 
  1986 & 0.053 & 0.000 & 0.366 \\ 
  1985 & 0.053 & 0.000 & 0.000 \\ 
  1984 & 0.053 & 0.000 & 0.000 \\ 
  1983 & 0.053 & 0.000 & 0.000 \\ 
  1982 & 0.053 & 0.000 & 0.000 \\ 
  1981 & 0.053 & 0.000 & 0.000 \\ 
  1980 & 0.053 & 0.000 & 0.000 \\ 
  1979 & 0.053 & 0.000 & 0.000 \\ 
  1978 & 0.053 & 0.000 & 0.000 \\ 
  1977 & 0.053 & 0.000 & 0.000 \\ 
  1976 & 0.053 & 0.000 & 0.000 \\ 
  1975 & 0.053 & 0.000 & 0.000 \\ 
  1974 & 0.053 & 0.000 & 0.000 \\ 
  1973 & 0.053 & 0.000 & 0.000 \\ 
  1972 & 0.053 & 0.000 & 0.000 \\ 
  1971 & 0.053 & 0.000 & 0.000 \\ 
  1970 & 0.053 & 0.000 & 0.000 \\ 
   \hline
\end{tabular}
\end{table}

\begin{table}[ht]
\centering
\begin{tabular}{r|rrr}
  \hline
 & DID & SC & SDID \\ 
  \hline
Alabama & 0.026 & 0.000 & 0.000 \\ 
  Arkansas & 0.026 & 0.000 & 0.003 \\ 
  Colorado & 0.026 & 0.013 & 0.058 \\ 
  Connecticut & 0.026 & 0.104 & 0.078 \\ 
  Delaware & 0.026 & 0.004 & 0.070 \\ 
  Georgia & 0.026 & 0.000 & 0.002 \\ 
  Idaho & 0.026 & 0.000 & 0.031 \\ 
  Illinois & 0.026 & 0.000 & 0.053 \\ 
  Indiana & 0.026 & 0.000 & 0.010 \\ 
  Iowa & 0.026 & 0.000 & 0.026 \\ 
  Kansas & 0.026 & 0.000 & 0.022 \\ 
  Kentucky & 0.026 & 0.000 & 0.000 \\ 
  Louisiana & 0.026 & 0.000 & 0.000 \\ 
  Maine & 0.026 & 0.000 & 0.028 \\ 
  Minnesota & 0.026 & 0.000 & 0.039 \\ 
  Mississippi & 0.026 & 0.000 & 0.000 \\ 
  Missouri & 0.026 & 0.000 & 0.008 \\ 
  Montana & 0.026 & 0.232 & 0.045 \\ 
  Nebraska & 0.026 & 0.000 & 0.048 \\ 
  Nevada & 0.026 & 0.204 & 0.124 \\ 
  New Hampshire & 0.026 & 0.045 & 0.105 \\ 
  New Mexico & 0.026 & 0.000 & 0.041 \\ 
  North Carolina & 0.026 & 0.000 & 0.033 \\ 
  North Dakota & 0.026 & 0.000 & 0.000 \\ 
  Ohio & 0.026 & 0.000 & 0.031 \\ 
  Oklahoma & 0.026 & 0.000 & 0.000 \\ 
  Pennsylvania & 0.026 & 0.000 & 0.015 \\ 
  Rhode Island & 0.026 & 0.000 & 0.001 \\ 
  South Carolina & 0.026 & 0.000 & 0.000 \\ 
  South Dakota & 0.026 & 0.000 & 0.004 \\ 
  Tennessee & 0.026 & 0.000 & 0.000 \\ 
  Texas & 0.026 & 0.000 & 0.010 \\ 
  Utah & 0.026 & 0.396 & 0.042 \\ 
  Vermont & 0.026 & 0.000 & 0.000 \\ 
  Virginia & 0.026 & 0.000 & 0.000 \\ 
  West Virginia & 0.026 & 0.000 & 0.034 \\ 
  Wisconsin & 0.026 & 0.000 & 0.037 \\ 
  Wyoming & 0.026 & 0.000 & 0.001 \\ 
   \hline
\end{tabular}
\end{table}

\FloatBarrier

\section{Staggered Adoption}
\label{sec:staggered}

In the paper so far we have focused on the case where some units start receiving the treatment at a common point in time, what \citet{athey2017matrix} call block assignment. Under block assignment the $N\times T$ matrix of treatment assignments $\bw$ has the form like the following matrix, where units 3-6 all adopt the treatment in period 5:
\[ \bw=\left(
		\begin{array}{ccccccccc}
		& 1 & 2 & 3 & 4 & 5 & 6 & 7\\
1&		\tick & \tick & \tick & \tick & \tick  & \tick & \tick \\
		2&\tick & \tick  & \tick & \tick   & \tick & \tick & \tick  \\
		3&\tick & \tick & \tick & \tick & \tock   & \tock & \tock  \\
		4&\tick  & \tick& \tick & \tick & \tock  & \tock & \tock  \\
		5&\tick & \tick & \tick & \tick & \tock   & \tock & \tock  \\
		6&\tick & \tick & \tick & \tick & \tock   & \tock & \tock  \\
		\end{array}
		\right)\,.
		\]
This is a common setting, but there are other settings that are of interest. Another important special case is that of {\it staggered adoption} ({\it e.g.,} \citet{athey2021design}) with multiple dates at which the treatment is started. For example,  in the following assignment matrix  units 5 and 6 adopt the treatment in period 3, and units 3 and 4 adopt the treatment in period 5 (and units 1 and 2 never adopt the treatment):
\[ \bw=\left(
		\begin{array}{ccccccccc}
		& 1 & 2 & 3 & 4 & 5 & 6 & 7\\
	1&	\tick & \tick & \tick & \tick & \tick  & \tick & \tick \\
	2&	\tick & \tick  & \tick & \tick   & \tick & \tick & \tick  \\
	3&	\tick & \tick & \tick & \tick & \tock   & \tock & \tock  \\
	4&			\tick & \tick & \tick & \tick & \tock   & \tock & \tock  \\
	5&	\tick & \tick & \tock & \tock & \tock   & \tock & \tock  \\
	6&	\tick & \tick & \tock & \tock & \tock   & \tock & \tock  \\
			\end{array}
		\right)\,.
		\]
With staggered adoption the weighted DID  regression approach in SDID does not work directly.
However, there are various alternatives. Here we discuss a simple modification to estimate the average treatment effect for the treated in that setting by applying the SDID estimator repeatedly, once for every adoption date. An alternative is the procedure developed in \citet{ben2019synthetic}. In the above example with two adoption dates, we can create two assignment matrices, $\bw^1$ and $\bw^2$, that both fit into the block assignment setting. We can then apply the SDID estimator to both samples, and calculated a weighted average of the two estimators, with the weight equal to the fraction of treated unit/time-period pairs in each of the two samples. In the above example, the first sample would consist of units 1, 2, 5 and 6, and the second sample would consist of units 1, 2, 3, and 4, as illustrated in the two assignment matrices below:
\[ \bw^{\rm 1}=\left(
		\begin{array}{ccccccccc}
		& 1 & 2 & 3 & 4 & 5 & 6 & 7\\
	1&	\tick & \tick & \tick & \tick & \tick  & \tick & \tick \\
	2&	\tick & \tick  & \tick & \tick   & \tick & \tick & \tick  \\
	5&	\tick & \tick & \tock & \tock & \tock   & \tock & \tock  \\
	6&	\tick & \tick & \tock & \tock & \tock   & \tock & \tock  \\
			\end{array}
		\right)\,.
		\hskip1cm
		\bw^{\rm 2}=\left(
		\begin{array}{ccccccccc}
		& 1 & 2 & 3 & 4 & 5 & 6 & 7\\
	1&	\tick & \tick & \tick & \tick & \tick  & \tick & \tick \\
	2&	\tick & \tick  & \tick & \tick   & \tick & \tick & \tick  \\
	3&	\tick & \tick & \tick & \tick & \tock   & \tock & \tock  \\
	4&			\tick & \tick & \tick & \tick & \tock   & \tock & \tock  \\
			\end{array}
		\right)\,.
		\]
Alternatively we can create the two samples by splitting the data up by time periods. In that case the first sample would consist of time periods 1, 2, 3, and 4, and the second sample would consist of time periods 1, 2, 5, 6, and 7, as illustrated below:
\[ \bw^{\rm 1}=\left(
		\begin{array}{ccccccccc}
		& 1 & 2 & 3 & 4 \\
	1&	\tick & \tick & \tick & \tick \\
	2&	\tick & \tick  & \tick & \tick   \\
	3&	\tick & \tick & \tick & \tick   \\
	4&			\tick & \tick & \tick & \tick   \\
	5&	\tick & \tick & \tock & \tock  \\
	6&	\tick & \tick & \tock & \tock  \\
			\end{array}
		\right)\,.
		\hskip1cm
		\bw^{\rm 2}=\left(
		\begin{array}{ccccccccc}
		& 1 & 2  & 5 & 6 & 7\\
	1&	\tick & \tick  & \tick  & \tick & \tick \\
	2&	\tick & \tick   & \tick & \tick & \tick  \\
	3&	\tick & \tick  & \tock   & \tock & \tock  \\
	4&			\tick & \tick & \tock   & \tock & \tock  \\
	5&	\tick & \tick  & \tock   & \tock & \tock  \\
	6&	\tick & \tick  & \tock   & \tock & \tock  \\
			\end{array}
		\right)\,.
		\]

\section{Formal Results}

In this section, we will outline the proof of Theorem~\ref{theo:asymptotic-linearity}. 
Recall from Section~\ref{sec:oracle-and-adaptive} the decomposition of the SDID estimator's error into three terms:
oracle noise, oracle confounding bias, and the deviation of the SDID estimator from the oracle.
Our main task is bounding the deviation term. To do this, we prove an abstract high-probability bound,
then derive a more concrete bound using results from a companion paper on penalized
high-dimensional least squares with errors in variable \citep{hirshberg2020leastsquares},
and then show that this bound is $o\left((N_\ttt T_\post)^{-1/2}\right)$ under the assumptions of Theorem~\ref{theo:asymptotic-linearity}.
Detailed proofs for each step are included in the next section.

\paragraph{Notation}
Throughout, each instance of $c$ will denote a potentially different universal constant; 
$a \lesssim b$, $a \ll b$, and $a \sim b$ will mean $a \le cb$, $a/b \to 0$, 
and $c \le a/b \le c$ respectively.
$\norm{v}$ and $\norm{A}$ will denote the Euclidean norm $\norm{v}_2$ for a vector $v$
and the operator norm $\sup_{\norm{v}_2 \le 1}\norm{Av}$ for a matrix $A$ respectively;
$\sigma_{1}(A), \sigma_{2}(A), \ldots$ will denote the singular values of $A$; 
$A_{i\cdot}$ and $A_{\cdot j}$ will denote the $i$th row and $j$th column of $A$;
$v'$ and $A'$ will denote the transposes of a vector $v$ and matrix $A$;
and $[v; w] \in \R^{m+n}$ will denote the concatenation of vectors $v \in \R^m$ and $w \in \R^n$.

\subsection{Abstract Setting}
\label{sec:abstract-setting}
We will begin by describing an abstract setting 
that arises as a condensed form of the setting considered in our formal results in Section~\ref{sec:formal}.
We observe an $N \times T$ matrix $Y$, which we will decompose as the sum $Y_{it}=L_{it}+ 1(i=N,j=T)\tau + \varepsilon$ of
a deterministic matrix $L$ and a random matrix $\varepsilon$.
We will refer to four blocks,
\[ Y = \begin{pmatrix} Y_{::} & Y_{:T} \\ Y_{N:} & Y_{NT} \end{pmatrix}, \]
where $Y_{::}$ is a submatrix that omits the last row and column, 
$Y_{N:}$ is the last row omitting its last element, and
$Y_{:T}$ is the last column omitting its last element. 
We will use analogous notation for the parts of $L$ and $\varepsilon$ and let $N_0=N-1$ and $T_0=T-1$.

We assume that rows of $\varepsilon$ are independent and subgaussian and that for $i \le N_0$ they are identically distributed with 
linear post-on-pretreatment autoregression function $\E[\varepsilon_{iT} \mid \varepsilon_{i:}] = \varepsilon_{i:}\psi$ and
covariance $\Sigma = \E \varepsilon_{i\cdot}' \varepsilon_{i\cdot}$ and let $\Sigma^N$ be the covariance matrix of $\varepsilon_{N\cdot}$.
We will refer to the covariance of the subvectors $\varepsilon_{i:}$ and $\varepsilon_{N:}$ as $\Sigma_{::}$ and $\Sigma_{::}^N$ respectively.

Our abstract results involve a bound $K$ characterizing the concentration of the rows $\varepsilon_{i\cdot}$.
\begin{equation}
\label{eq:K-conditions}
\begin{aligned}
K &\ge \max\p{1,\ \norm{\varepsilon_{1:}\Sigma_{::}^{-1/2}}_{\psi_2},\ \norm{\varepsilon_{N:}(\Sigma^N_{::})^{-1/2}}_{\psi_2} \ 
\frac{\norm{\varepsilon_{1T} - \varepsilon_{1:}\psi}_{\psi_2 \mid \varepsilon_{1:}}}{\norm{\varepsilon_{1T} - \varepsilon_{1:}\psi}_{L_2}}}, \\
&P\p{\abs{\norm{\varepsilon_{1:}}^2 - \E \norm{\varepsilon_{1:}}^2 } \ge u} \
    \le c\exp\p{-c \min\p{\frac{u^2}{K^4 \E\norm{\varepsilon_{1:}}^2},\  \frac{u}{K^2\norm{\Sigma_{::}}}}} \ \text{ for all }\ u \ge 0.
\end{aligned}
\end{equation}
Here we follow the convention \citep[e.g.,][]{vershynin2018high} that the subgaussian norm of a random vector $\xi$ is
$\norm{\xi}_{\psi_2} := \sup_{\norm{x}\le 1}\norm{x'\xi}_{\psi_2}$.
The conditional subgaussian norm $\norm{\cdot}_{\psi_2 \mid Z}$ is defined like the subgaussian norm the conditional distribution given $Z$.
When the rows of $\varepsilon$ are gaussian vectors, these conditions are satisfied for $K$ equal to a sufficiently large universal constant.
In the gaussian case, $\varepsilon_{1T} - \varepsilon_{1:}\psi$ is independent of $\varepsilon_{i:}$, 
the squared subgaussian norm of a gaussian random vector is bounded by a multiple of the operator norm of its covariance,
and the concentration of $\norm{\varepsilon_{1:}}^2$ as above is implied by the Hanson-Wright inequality 
\citep[e.g.,][Theorem 6.2.1]{vershynin2018high}.

\subsection{Concrete Setting}
\label{sec:concrete-setting}
We map from the setting considered in Section~\ref{sec:formal} to our condensed form by averaging within blocks as follows.
\[
\begin{pmatrix} Y_{::} & Y_{:T} \\ Y_{N:} & Y_{NT} \end{pmatrix} =
\begin{pmatrix} \by_{\ccc,\pre} & \by_{\ccc,\post}\lambda_{\post} \\ \omega_{\ttt}'\by_{\ttt,\pre} & \omega_{\ttt}'\by_{\ttt,\post}\lambda_{\post} \end{pmatrix}.
\]
Here $\lambda_{\post} \in \R^{T_{\post}}$ and $\omega_{\ttt} \in \R^{N_{\ttt}}$ 
are vectors with equal weight $1/T_{\post}$ and $1/N_{\ttt}$ respectively. When working with this condensed form,
we write $\omega$ and $\lambda$ for what is rendered $\omega_{\ccc}$ and $\lambda_{\ttt}$ in Section~\ref{sec:formal}.
We will also use $\Omega$ and $\Lambda$ to denote the sets that would be written 
$\set{\omega_{\ccc} : \omega \in \Omega}$ and $\set{\lambda_{\pre} : \lambda \in \Lambda}$ 
in the notation used in Equations~\ref{unit_weights} and \ref{time_weights}. Note that these sets $\Omega$ and $\Lambda$ 
are the unit simplex in $\R^{N_0}=\R^{N_{\ccc}}$ and $\R^{T_0}=\R^{T_{\pre}}$ respectively. 

In this condensed form, rows $\varepsilon_{i\cdot}$ are independent gaussian vectors with mean zero
and covariance matrix $\Sigma$ for $i \le N_0$ and $N_{\ttt}^{-1}\Sigma$ for $i=N$. This matrix $\Sigma$ satisfies,
with quantities on the right defined as in Section~\ref{sec:formal},
\[ \Sigma = \begin{pmatrix} 
	    \Sigma_{\pre,\pre}			       & \Sigma_{\pre,\post}\lambda_{\post} \\
	    \lambda_{\post}'\Sigma_{\post,\pre}	       & \lambda_{\post}' \Sigma_{\post,\post} \lambda_{\post} 
	    \end{pmatrix}. \]
Note that because all rows have the same covariance up to scale, they have the same autoregression vector,
$\psi = \argmin_{v \in \R^{T_0}}\E(\varepsilon_{i:}v - \varepsilon_{iT})^2$. 
This definition is equivalent to the one given in Section~\ref{sec:formal}. 
And this characterization of $\varepsilon_{i:}\psi$ as a least squares projection implies that $\varepsilon_{i:}\psi - \varepsilon_{iT}$ 
and $\varepsilon_{i:}$ are uncorrelated and, being jointly normal, therefore independent.

That the eigenvalues of non-condensed-form $\Sigma$ 
are bounded and bounded away from zero implies that the eigenvalues of the submatrix $\Sigma_{::}=\Sigma_{\pre,\pre}$
are bounded and bounded away from zero. Furthermore, it implies 
the variance of $\varepsilon_{i:}\psi - \varepsilon_{iT}$ is on the order of $1/T_{\post}$.

To show this, we establish an upper and lower bound of that order. We will write $\sigma_{\min}(\Sigma)$ and $\sigma_{\max}(\Sigma)$ for the smallest
and largest eigenvalues of $\Sigma$. 
For the lower bound, we calculate its variance $\E\p{\varepsilon_{i\cdot} \cdot [\psi;\ -\lambda_{\post}]}^2= [\psi;\ -\lambda_{\post}]\ \Sigma \ [\psi;\ -\lambda_{\post}]$,
and observe that this is at least $\norm{[\psi; -\lambda_{\post}]}^2 \sigma_{\min}(\Sigma)$.
This implies an order $1/T_{\post}$ lower bound, as $\norm{[\psi; -\lambda_{\post}]}^2 \ge \norm{\lambda_{\post}}^2 = 1/T_{\post}$. 
For the upper bound, observe that because $\varepsilon_{iT} - \varepsilon_{i:}\psi$ is the orthogonal projection of $\varepsilon_{iT}$ on a subspace,
specifically the subspace orthogonal to $\set{\varepsilon_{i:}v : v \in \R^{T_{\pre}}}$,
its variance is bounded by that of $\varepsilon_{iT}$. 
This is $[0;\lambda_{\post}] \ \Sigma\ [0; \lambda_{\post}] \le \sigma_{\max}(\Sigma) \norm{\lambda_{\post}}^2 = \sigma_{\max}(\Sigma)/T_{\post}$.
\subsection{Theorem~\ref{theo:asymptotic-linearity} in Condensed Form}

In the abstract setting we've introduced above, we can write a weighted difference-in-differences treatment effect estimator 
as the difference between our (aggregate) treated observation $Y_{NT}$ and an estimate $\hat{Y}_{NT}$ of the corresponding (aggregate) control potential outcome.
In the concrete setting considered in Section~\ref{sec:formal}, this coincides with the estimator defined in \eqref{eq:tau-weighted-differences}. 
\begin{equation}
\label{eq:treatment-effect-condensed-form}
\hat \tau(\lambda,\omega) = Y_{NT} - \hat Y_{NT}(\lambda,\omega) \ \text{ where }\ \hat Y_{NT}(\lambda,\omega) := Y_{N:}\lambda + \omega' Y_{:T} - \omega' Y_{::} \lambda. 
\end{equation}
And the following weights coincide with the definitions used in Section~\ref{sec:formal}.
\begin{equation}
\label{eq:weights-condensed-form}
\begin{aligned}
&\homega_0,  \homega  = \argmin_{\omega_0,  \omega \in  \R \times \Omega} \norm{\omega_0 + \omega' Y_{::} - Y_{N:}}^2 + \zeta^2 T_0 \norm{\omega}^2,\\ 
&\somega_0,  \somega  = \argmin_{\omega_0,  \omega \in  \R \times \Omega} \norm{\omega_0 + \omega' L_{::} - L_{N:}}^2 + (\zeta^2+\sigma^2)T_0 \norm{\omega}^2, \\
&\hlambda_0, \hlambda = \argmin_{\lambda_0, \lambda \in \R \times \Lambda} \norm{\lambda_0 + Y_{::}\lambda - Y_{:T}}^2,\\ 
&\slambda_0, \slambda = \argmin_{\lambda_0, \lambda \in \R \times \Lambda} \norm{\lambda_0 + L_{::}\lambda - L_{:T}}^2 + N_0 \norm{\Sigma_{::}^{1/2}(\lambda - \psi)}^2.
\end{aligned}
\end{equation}

The following assumptions on the condensed form hold in the setting considered in Theorem~\ref{theo:asymptotic-linearity}.
The first summarizes our condensed-form model. 
The second is implied by Assumption~\ref{ass:noise} for $N_1=N_{\ttt}$ and $T_1 \sim T_{\post}$ as described above in Section~\ref{sec:concrete-setting}. 
And the remaining three are condensed-form restatements of Assumptions~\ref{ass:sample_sizes}-\ref{weightss}, 
differing only in that we substitute $T_1 \sim T_{\post}$ for $T_{\post}$ itself.

\begin{assumption}[Model]
\label{ass:model-condensed}
We observe $Y_{it}=L_{it} + 1(i=N, t=T) \tau + \varepsilon_{it}$ for
deterministic $\tau \in \R$ and $L \in \R^{N \times T}$ and random $\varepsilon \in \R^{N \times T}$.
And we define $N_0=N-1$ and $T=T_0-1$.
\end{assumption}

\begin{assumption}[Properties of Errors]
\label{ass:error-condensed}
The rows $\varepsilon_{i\cdot}$ of the noise matrix are independent gaussian vectors with mean zero
and covariance matrix $\Sigma$ for $i \le N_0$ and $N_1^{-1}\Sigma$ for $i=N$ where the eigenvalues of 
$\Sigma_{::}$ are bounded and bounded away from zero. Here $N_1 > 0$ can be arbitrary
and we define $T_1= 1/\var[\varepsilon_{i:} \psi - \varepsilon_{iT}]$ and 
$\psi = \argmin_{v \in \R^{T_0}}E(\varepsilon_{i:}v - \varepsilon_{iT})^2$.
\end{assumption}

\begin{assumption}[Sample Sizes]
\label{ass:sample-sizes-condensed}
We consider a sequence of problems where $T_0/N_0$ is bounded and bounded away from zero,
$T_1$ and $N_1$ are bounded away from zero, and
$N_0 / (N_1 T_1 \max(N_1, T_1)\log^2(N_0)) \to \infty$.
\end{assumption}

\begin{assumption}[Properties of $L$]
\label{ass:rank-condensed}
For the largest integer $K \le \sqrt{\min(T_0,N_0)}$,
\[ \sigma_{K}(L_{::})/K \ll \min(N_1^{-1/2}\log^{-1/2}(N_0), T_1^{-1/2}\log^{-1/2}(T_0)).  \]
\end{assumption}

\begin{assumption}[Properties of Oracle Weights]
\label{ass:properties-of-oracle-weights-condensed}
We use weights as in \eqref{eq:weights-condensed-form} for \\
$\zeta \gg (N_1T_1)^{1/4}\log^{1/2}(N_0)$ and the oracle weights satisfy
 \begin{align*} 
(i)\   &\max(\norm{\somega}, \norm{\slambda-\psi}) \ll (N_1 T_1)^{-1/2} \log^{-1/2}(N_0), \\
(ii.\omega)\ &\norm{\somega_0 + \somega'L_{::} - L_{N:}}  \ll N_0^{1/4} (N_1 T_1\max(N_1,T_1))^{-1/4} \log^{-1/2}(N_0), \\
(ii.\lambda)\ &\norm{\slambda_0 + L_{::}\slambda - L_{:T}} \ll N_0^{1/4} (N_1 T_1)^{-1/8}, \\
(iii)\ &L_{NT} - \somega' L_{:T} - L_{N:}\slambda + \somega' L_{::} \slambda \ll (N_1 T_1)^{-1/2}. 
\end{align*}
\end{assumption}

\noindent The following condensed form asymptotic linearity result  
implies Theorem~\ref{theo:asymptotic-linearity}.
\begin{theorem}
\label{theorem:asymptotic-linearity-condensed}
If Assumptions~\ref{ass:model-condensed}-\ref{ass:properties-of-oracle-weights-condensed} hold,
then $\hat \tau(\hlambda, \homega) - \tau = \varepsilon_{NT} - \varepsilon_{N:}\psi + o_p((N_1 T_1)^{-1/2}).$
\end{theorem}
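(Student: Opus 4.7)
I will work with the three-term decomposition of Section~\ref{sec:oracle-and-adaptive} in its condensed form,
\[
 \hat\tau(\hlambda, \homega) - \tau = \varepsilon(\somega,\slambda) + B(\somega,\slambda) + \bigl[\hat\tau(\hlambda,\homega) - \hat\tau(\slambda,\somega)\bigr],
\]
and show (a) the oracle noise $\varepsilon(\somega,\slambda) = \varepsilon_{NT} - \varepsilon_{N:}\psi + o_p((N_1T_1)^{-1/2})$, (b) the oracle confounding bias $B(\somega,\slambda) = o((N_1T_1)^{-1/2})$, and (c) the deviation from the oracle is $o_p((N_1T_1)^{-1/2})$. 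Step (b) is immediate from Assumption~\ref{ass:properties-of-oracle-weights-condensed}(iii), which is literally the statement that this deterministic scalar is $o((N_1T_1)^{-1/2})$.

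For step (a) I would rewrite
$\varepsilon(\somega,\slambda) = (\varepsilon_{NT} - \varepsilon_{N:}\psi) - \varepsilon_{N:}(\slambda - \psi) - \somega^\top \varepsilon_{:T} + \somega^\top \varepsilon_{::}\slambda$
and bound each of the last three mean-zero Gaussian scalars by a direct variance calculation. The weight-norm bounds $\|\slambda-\psi\|, \|\somega\| \ll (N_1T_1)^{-1/2}\log^{-1/2}(N_0)$ from Assumption~\ref{ass:properties-of-oracle-weights-condensed}(i), the $1/N_1$ scaling in the covariance of row $N$ (Assumption~\ref{ass:error-condensed}), the bound $\Sigma_{TT} = O(1/T_1)$ inherited from aggregation to condensed form, the bounded operator norm of $\Sigma_{::}$, and $\|\slambda\|_2 \le 1$ (from the simplex constraint) together give each of these three terms standard deviation $o((N_1T_1)^{-1/2})$.

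Step (c) is the main obstacle. Using $\mathbf{1}^\top(\hlambda-\slambda) = \mathbf{1}^\top(\homega-\somega) = 0$ so that the intercepts drop out of the inner products, the deviation rewrites as
\[
\hat\tau(\hlambda,\homega) - \hat\tau(\slambda,\somega) = \hat r^{\,\omega}_Y\,(\hlambda - \slambda) + (\homega - \somega)^\top \tilde r^{\,\lambda}_Y,
\]
where $\hat r^{\,\omega}_Y = \homega_0 + \homega^\top Y_{::} - Y_{N:}$ is the fitted residual of the feasible unit regression, and $\tilde r^{\,\lambda}_Y = \slambda_0 + Y_{::}\slambda - Y_{:T}$ splits into the oracle time-weight signal residual $\slambda_0 + L_{::}\slambda - L_{:T}$ (bounded by Assumption~\ref{ass:properties-of-oracle-weights-condensed}(ii.$\lambda$)) plus a mean-zero noise part $\varepsilon_{::}\slambda - \varepsilon_{:T}$. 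By Cauchy--Schwarz, it then suffices to bound $\|\homega - \somega\|$, $\|\hlambda - \slambda\|$, $\|\hat r^{\,\omega}_Y\|$, and $\|\tilde r^{\,\lambda}_Y\|$ at rates whose pairwise products are $o_p((N_1T_1)^{-1/2})$. These bounds come from the companion paper \cite{hirshberg2020leastsquares}, whose analysis of penalized errors-in-variables least squares converts Assumption~\ref{ass:rank-condensed} on the approximate rank of $L_{::}$, the choice of $\zeta$ in the theorem statement, and the qualitative oracle-weight conditions of Assumption~\ref{ass:properties-of-oracle-weights-condensed} into exactly the needed high-probability rates, after which the sample-size restrictions of Assumption~\ref{ass:sample-sizes-condensed} close the gap. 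Matching these rates is where the argument is tight: the seemingly asymmetric exponents in the oracle-weight conditions are calibrated precisely so that each product in the Cauchy--Schwarz bound lands at or below $(N_1T_1)^{-1/2}$, and this is where every assumption genuinely enters.
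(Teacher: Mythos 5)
Your steps (a) and (b) are correct and match the paper: (b) is literally Assumption~\ref{ass:properties-of-oracle-weights-condensed}(iii), and (a) is the paper's Lemma~\ref{lemma:reduction-to-oracle-deviation}, a direct variance computation using $\norm{\somega}, \norm{\slambda - \psi} \ll (N_1T_1)^{-1/2}\log^{-1/2}(N_0)$. Your algebraic identity in step (c) is also valid (the intercepts do drop out because the increments sum to zero).

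The gap is the claim that Cauchy--Schwarz closes step (c). It cannot, because the residual vectors you propose to pair with the weight increments contain raw noise whose Euclidean norm grows with the panel dimensions. Concretely, $\hat r^{\,\omega}_Y$ contains $\varepsilon_{N:} - \homega'\varepsilon_{::}$, so $\norm{\hat r^{\,\omega}_Y} \gtrsim (T_0/N_1)^{1/2}$, while the best available bound on the weight error (from the companion paper, i.e.\ Lemma~\ref{lemma:weight-consistency}) is the nonparametric rate $\norm{\hlambda - \slambda} \lesssim N_0^{-1/4}T_1^{-1/4}\log^{1/4}(T_0)$. Their product is of order $N_0^{1/4}N_1^{-1/2}T_1^{-1/4}\log^{1/4}(T_0)$ (using $T_0 \sim N_0$), which exceeds the target $(N_1T_1)^{-1/2}$ by a diverging factor $ (N_0 T_1\log(T_0))^{1/4}$; the symmetric term $\norm{\homega - \somega}\cdot\norm{\tilde r^{\,\lambda}_Y}$ fails the same way since $\norm{\tilde r^{\,\lambda}_Y} \gtrsim (N_0/T_1)^{1/2}$. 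To reach the target you would need $\norm{\hlambda-\slambda} \ll N_0^{-1/2}T_1^{-1/2}$, essentially the square of what is provable. So ``pairwise products of norms'' is not where the assumptions are calibrated to land, and no choice of exponents in Assumption~\ref{ass:properties-of-oracle-weights-condensed} rescues a plain Cauchy--Schwarz argument.

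What actually closes the gap (Lemma~\ref{lemma:lowrank-general} in the paper) is the near-orthogonality of the noise to the weight increments, which requires empirical-process machinery rather than norm bounds: split each term into its $L$-part and $\varepsilon$-part; for the $\varepsilon$-parts use that $\hlambda$ is independent of $\varepsilon_{N:}$ and that $\varepsilon_{:T} - \varepsilon_{::}\psi$ is mean-zero given $(\varepsilon_{::},\varepsilon_{N:})$, then apply conditional subgaussian tail bounds, generic chaining over the localized increment sets $\sLambda_{s_\lambda}, \sOmega_{s_\omega}$ (whose Gaussian widths are only $\sqrt{\log T_0}, \sqrt{\log N_0}$), and Chevet's inequality for the bilinear term $(\homega-\somega)'\varepsilon_{::}(\hlambda-\slambda)$. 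These yield bounds scaling like $\norm{\hlambda - \slambda}$ times logarithmic factors rather than $\norm{\varepsilon_{N:}}\cdot\norm{\hlambda - \slambda}$. The $L$-parts are handled by a refined Cauchy--Schwarz with an optimized intercept and ridge-type projections $S_\omega, S_\lambda$, plus a spectral-gap bound $\min_k \sigma_k(L_{::}^c)^{-1}r_\lambda r_\omega + \sigma_{k+1}(L_{::}^c)s_\lambda s_\omega$ for $(\homega - \somega)'L_{::}(\hlambda-\slambda)$. Relatedly, your decomposition keeps the \emph{feasible} residual $\hat r^{\,\omega}_Y$ in the first term; the paper's version keeps both residuals at the \emph{oracle} weights and isolates an explicit cross term $(\homega-\somega)'Y_{::}(\hlambda-\slambda)$, precisely so that the deterministic weights $\somega,\slambda$ make the independence and chaining arguments go through.
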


The following lemma reduces its proof to
demonstrating the negligibility of the difference $\Delta_{oracle} := \hat \tau(\homega,\hlambda) - \hat\tau(\somega,\slambda)$
    between the SDID estimator and the corresponding oracle estimator. Its proof is a straightforward calculation.
 Note that the bounds it requires on the oracle weights are looser than what is required by Assumption~\ref{ass:properties-of-oracle-weights-condensed}(i);
those tighter bounds are used to control $\Delta_{oracle}$. 
\begin{lemma}
\label{lemma:reduction-to-oracle-deviation}
If deterministic $\somega,\slambda$ satisfy $\norm{\somega} = o(N_1^{-1/2})$ and $\norm{\slambda - \psi} = o(T_1^{-1/2})$
and Assumptions~\ref{ass:model-condensed}, \ref{ass:error-condensed}, and \ref{ass:properties-of-oracle-weights-condensed}(iii) hold,
then $\hat \tau(\somega,\slambda) - \tau = \varepsilon_{NT} - \varepsilon_{N:}\psi + o_p((N_1 T_1)^{-1/2})$.
\end{lemma}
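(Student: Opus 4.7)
\textbf{Proof plan for Lemma~\ref{lemma:reduction-to-oracle-deviation}.} The plan is to decompose the oracle estimator's error into a deterministic bias piece (directly controlled by Assumption~\ref{ass:properties-of-oracle-weights-condensed}(iii)) and a linear noise piece, then further decompose the noise piece as the leading term $\varepsilon_{NT}-\varepsilon_{N:}\psi$ plus a mean-zero remainder, and finally show the remainder has variance $o((N_1T_1)^{-1})$. Chebyshev's inequality then does the rest.

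\textbf{Step 1: bias/noise split.} Substituting $Y = L + 1(i=N,t=T)\tau + \varepsilon$ into \eqref{eq:treatment-effect-condensed-form} gives
\[
\hat\tau(\somega,\slambda)-\tau \;=\; B(\somega,\slambda) + \mathcal{E}(\somega,\slambda),
\]
where $B(\somega,\slambda) = L_{NT} - \somega' L_{:T} - L_{N:}\slambda + \somega' L_{::}\slambda$ and $\mathcal{E}(\somega,\slambda) = \varepsilon_{NT}-\varepsilon_{N:}\slambda-\somega'\varepsilon_{:T}+\somega'\varepsilon_{::}\slambda$. By Assumption~\ref{ass:properties-of-oracle-weights-condensed}(iii), $B(\somega,\slambda)=o((N_1T_1)^{-1/2})$, so it suffices to show $\mathcal{E}(\somega,\slambda) = \varepsilon_{NT}-\varepsilon_{N:}\psi + o_p((N_1T_1)^{-1/2})$.

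\textbf{Step 2: isolate the leading term.} Writing $\slambda = \psi + (\slambda-\psi)$ in $\mathcal{E}$ yields the identity
\[
\mathcal{E}(\somega,\slambda) - (\varepsilon_{NT}-\varepsilon_{N:}\psi) \;=\; -\varepsilon_{N:}(\slambda-\psi) \;-\; \somega'\bigl(\varepsilon_{:T}-\varepsilon_{::}\psi\bigr) \;+\; \somega'\varepsilon_{::}(\slambda-\psi).
\]
Each of the three summands on the right is mean zero (all $\varepsilon$'s are mean zero) and $\somega,\slambda$ are deterministic, so I only need to bound the variance of each.

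\textbf{Step 3: variance bounds.} I would use three facts from Section~\ref{sec:concrete-setting}: (a) $\varepsilon_{N:}$ has covariance $N_1^{-1}\Sigma_{::}$ with bounded operator norm; (b) rows of $\varepsilon_{::}$ are independent with covariance $\Sigma_{::}$, so for any deterministic vector $v$ and weights $w$, $\var(w'\varepsilon_{::}v)\le \|w\|^2\sigma_{\max}(\Sigma_{::})\|v\|^2$; (c) by the normal-equations characterization of $\psi$, the components $\varepsilon_{iT}-\varepsilon_{i:}\psi$ are independent across $i$, independent of $\varepsilon_{::}$, and have variance exactly $1/T_1$ (by Assumption~\ref{ass:error-condensed}). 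Combining these with the hypothesized bounds $\|\somega\|=o(N_1^{-1/2})$ and $\|\slambda-\psi\|=o(T_1^{-1/2})$:
\begin{align*}
\var\!\bigl(\varepsilon_{N:}(\slambda-\psi)\bigr) &\;\le\; N_1^{-1}\,\sigma_{\max}(\Sigma_{::})\,\|\slambda-\psi\|^2 \;=\; o\bigl((N_1T_1)^{-1}\bigr),\\
\var\!\bigl(\somega'(\varepsilon_{:T}-\varepsilon_{::}\psi)\bigr) &\;=\; \|\somega\|^2/T_1 \;=\; o\bigl((N_1T_1)^{-1}\bigr),\\
\var\!\bigl(\somega'\varepsilon_{::}(\slambda-\psi)\bigr) &\;\le\; \|\somega\|^2\,\sigma_{\max}(\Sigma_{::})\,\|\slambda-\psi\|^2 \;=\; o\bigl((N_1T_1)^{-1}\bigr).
\end{align*}

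\textbf{Step 4: conclude.} The remainder has mean zero and variance $o((N_1T_1)^{-1})$, so by Chebyshev it is $o_p((N_1T_1)^{-1/2})$. Adding back the leading term and the $o((N_1T_1)^{-1/2})$ bias yields the claim. There is really no obstacle here beyond bookkeeping; the only subtle point is making sure to reparametrise $\slambda$ as $\psi + (\slambda-\psi)$ before bounding so that the contribution of $\somega'(\varepsilon_{:T}-\varepsilon_{::}\psi)$ picks up the small factor $1/T_1$ (rather than merely $\sigma_{\max}(\Sigma)$) from the residual variance, which is what makes the overall rate $(N_1T_1)^{-1}$ rather than $N_1^{-1}$.
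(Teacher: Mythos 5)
Your proposal is correct and follows essentially the same route as the paper: split off the deterministic bias via Assumption~\ref{ass:properties-of-oracle-weights-condensed}(iii), reparametrise $\slambda$ around $\psi$ to isolate $\varepsilon_{NT}-\varepsilon_{N:}\psi$, and kill the three mean-zero remainder terms by the same variance bounds ($\norm{\somega}^2/T_1$, $\norm{\slambda-\psi}^2\norm{\Sigma_{::}}/N_1$, and $\norm{\somega}^2\norm{\slambda-\psi}^2\norm{\Sigma_{::}}$) plus Chebyshev. If anything, your write-up of the third bound is slightly cleaner than the paper's, which has a typo writing $\norm{\slambda}$ where $\norm{\slambda-\psi}$ is needed.
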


To show that this difference $\Delta_{oracle}$ is small, 
we use bounds on the difference between the estimated and oracle weights
based on \citet[Theorem 1]{hirshberg2020leastsquares}. We summarize these bounds
in Lemma~\ref{lemma:weight-consistency} below.

\begin{lemma}
\label{lemma:weight-consistency}
If Assumptions~\ref{ass:model-condensed}, \ref{ass:error-condensed}, and \ref{ass:rank-condensed} hold;
$T_1$ and $N_1$ are bounded away from zero;
$N_0,T_0 \to \infty$ with $N_0 \ge \log^2(T_0)$ and $T_0 \ge \log^2(N_0)$;
and we choose weights as in \eqref{eq:weights-condensed-form} for 
unit simplices $\Omega \subseteq \R^{N_0}$ and $\Lambda \subseteq \R^{T_0}$,
then the following bounds hold on an event of probability $1-c\exp(-c\min(N_0^{1/2}, T_0^{1/2}, N_0/\norm{L_{::}\slambda + \slambda_0 - L_{:T}}, T_0/\norm{\somega'L_{::} + \somega_0 - L_{N:}}))$:
\begin{align*}
&\norm{\hlambda_0 - \slambda_0  + L_{::} (\hlambda - \slambda)} \le cvr_{\lambda}, 
&&\norm{\hlambda - \slambda} \le cvN_0^{-1/2}r_{\lambda}, \\
&\norm{\homega_0 - \somega_0  + L_{::}' (\homega - \somega)} \le cvr_{\omega}, 
&&\norm{\homega - \somega} \le cv(\eta^2 T_0)^{-1/2}r_{\omega}
\end{align*}
for $\eta^2 = \zeta^2 + 1$, some universal constant $c$, and
\begin{align*}
r_{\lambda}^2 &= (N_0/\Teff)^{1/2} \sqrt{\log(T_0)}
				+ \norm{L_{::}\slambda + \slambda_0 - L_{:T}} \sqrt{\log(T_0)}, && \Teff^{-1/2} = \norm{\slambda - \psi} + T_1^{-1/2} \\
r_{\omega}^2 &= (T_0/\Neff)^{1/2} \sqrt{\log(N_0)} 
				+ \norm{L_{::}'\somega + \somega_0 - L_{N:}'} \sqrt{\log(N_0)}, && \Neff^{-1/2} = \norm{\somega} + N_1^{-1/2}. 
\end{align*}
\end{lemma}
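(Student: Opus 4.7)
The proof proposal is to apply \citet[Theorem 1]{hirshberg2020leastsquares} separately to the $\lambda$ problem and to the $\omega$ problem in \eqref{eq:weights-condensed-form}, then combine the resulting bounds via a union bound. That theorem is engineered precisely for penalized least squares with noisy covariates and response, which is exactly our situation: we see $Y_{::} = L_{::} + \varepsilon_{::}$ in place of the latent design $L_{::}$, and similarly noisy versions of $L_{:T}$ and $L_{N:}$. The role of our proof is to recognize the setup, identify the oracle problem as the expectation of the empirical one after an appropriate centering, verify the companion paper's hypotheses, and translate its generic rate bounds into the $r_{\lambda}$ and $r_{\omega}$ stated here.

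For the $\lambda$ problem, the oracle penalty $N_0\Norm{\Sigma_{::}^{1/2}(\lambda - \psi)}^2$ emerges directly from expanding $\E\Norm{\varepsilon_{::}\lambda - \varepsilon_{:T}}^2 = N_0(\lambda'\Sigma_{::}\lambda - 2\lambda'\Sigma_{\pre,\post}\lambda_{\post} + \mathrm{const})$ and completing the square around $\psi = \Sigma_{::}^{-1}\Sigma_{\pre,\post}\lambda_{\post}$ from \eqref{eq:autoregression-vector}. The theorem then yields prediction-norm and parameter-norm bounds governed by a rate $r_{\lambda}^2$ with two additive components: a sub-gaussian fluctuation term $(N_0/\Teff)^{1/2}\sqrt{\log T_0}$, in which $\Teff$ rather than $T_0$ appears because the prior information that $\slambda$ is near $\psi$ effectively reduces the dimensionality along the regularization direction; and a coupling term $\Norm{\slambda_0 + L_{::}\slambda - L_{:T}}\sqrt{\log T_0}$ capturing the interaction between noise and oracle misspecification residual. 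The $\omega$ problem is treated symmetrically. The only structural difference is that $\varepsilon_{::}$ and $\varepsilon_{N:}$ are independent, so the implicit penalty has no centering and reduces to $\trace(\Sigma_{::})\Norm{\omega}^2$; combining with the explicit ridge $\zeta^2 T_0\Norm{\omega}^2$ and using that $\trace(\Sigma_{::}) \le \sigma_{\max}(\Sigma_{::}) T_0 \lesssim T_0$ by Assumption~\ref{ass:error-condensed} yields the effective regularization strength $\eta^2 T_0$ with $\eta^2 = \zeta^2 + 1$.

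The main obstacle is verifying the structural assumption on $L$ required by \citet{hirshberg2020leastsquares}: roughly, that $L_{::}$ admits a low-rank approximation of rank about $\sqrt{\min(N_0,T_0)}$ with a controllable residual. Assumption~\ref{ass:rank-condensed} provides exactly this by bounding $\sigma_R(L_{::})/R$ for $R = \lfloor\sqrt{\min(N_0,T_0)}\rfloor$ at a rate such that the companion paper's per-rank error budget, multiplied by $R$, stays negligible relative to $r_{\lambda}$ and $r_{\omega}$. The Gaussian noise hypothesis in Assumption~\ref{ass:error-condensed} implies the concentration condition \eqref{eq:K-conditions} with $K$ a universal constant, via standard sub-gaussian norm bounds and the Hanson--Wright inequality for the quadratic-form tail on $\Norm{\varepsilon_{1:}}^2$. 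The stated failure probability $c\exp(-c\min(\cdot))$ then follows by union-bounding the companion paper's tail estimates over the two applications: the $N_0^{1/2}$ and $T_0^{1/2}$ scales come from sub-gaussian tails on linear noise functionals, while the $N_0/\Norm{\cdots}$ and $T_0/\Norm{\cdots}$ scales arise from sub-exponential tails on noise--residual cross-products, which dominate when the oracle fit is poor.
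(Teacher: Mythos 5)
Your proposal is correct and follows essentially the same route as the paper: both invoke \citet[Theorem 1]{hirshberg2020leastsquares} once for each weight problem (with the transposed/column-independent application for $\omega$ and effective regularization $\eta^2 \asymp 1+\zeta^2$ since $\trace(\Sigma_{::})/T_0$ is bounded and bounded away from zero), verify the rank and subgaussian-concentration hypotheses via Assumptions~\ref{ass:rank-condensed} and \ref{ass:error-condensed}, use the $\sqrt{\log}$ Gaussian-width bounds for the simplices, and combine the two tail bounds by a union bound to obtain the stated probability.
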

When Assumptions~\ref{ass:sample-sizes-condensed} and \ref{ass:properties-of-oracle-weights-condensed}(i-ii) hold as well,
these bounds hold with probability $1-c\exp(-cN_0^{1/2})$, as together 
those assumptions they imply the lemma's conditions on $N_0,T_0,N_1,T_1$ and that
$N_0/\norm{L_{::}\slambda + \slambda_0 - L_{:T}} \gg N_0^{3/4}$
and $T_0/\norm{\somega'L_{::} + \somega_0 - L_{N:}} \gg N_0^{3/4}$.

We conclude by using bounds of this form, in conjunction with the
first order orthogonality of the weighted difference-in-differences estimator $\hat \tau(\lambda,\omega)$
to the weights $\lambda$ and $\omega$, to control $\Delta_{oracle}$.
We do this abstractly in Lemma~\ref{lemma:lowrank-general},
then derive from it a simplified bound from which it will
be clear that $\Delta_{oracle} = o_p((N_1 T_1)^{-1/2})$ under our assumptions.

\begin{lemma}
\label{lemma:lowrank-general}
In the setting described in Section~\ref{sec:abstract-setting}, let $\Lambda \subseteq \R^{T_0}$ and $\Omega \subseteq \R^{N_0}$ be sets with the property that 
$\sum_{t \le T_0} \lambda_t = \sum_{i \le N_0} \omega_i = 1$ for all $\lambda \in \Lambda$ and $\omega \in \Omega$.
Let $\hlambda_0,\hlambda \in \R \times \Lambda$ and $\homega_0,\homega \in \R \times \Omega$ be random and
$\slambda_0,\slambda \in \R \times \Lambda$ and $\slambda_0,\slambda \in \R \times \Omega$ be deterministic.
On the intersection of an event of probability $1-c\exp(-u^2)$ and one on which
\begin{equation}
\label{eq:lowrank-general-consistency}
\begin{aligned} 
\sigma \norm{\omega-\somega} \le s_{\lambda}
\quad &\text{ and } \quad \norm{\homega_0 - \somega_0 + (\homega - \somega)'L_{::}} \le r_{\omega}, \\
\norm{\Sigma_{::}^{1/2}(\hlambda-\slambda)} \le s_{\omega}
\quad &\text{ and } \quad  \norm{\hlambda_0-\slambda_0 + L_{::}(\hlambda - \slambda)} \le r_{\lambda},
\end{aligned}
\end{equation}
the corresponding treatment effect estimators defined in \eqref{eq:treatment-effect-condensed-form}
are close in the sense that  
\begin{align*}
\smallabs{\hat \tau(\hlambda,\homega) - \hat \tau(\slambda,\somega)} 
&\le cuK [\Neff^{-1/2}s_{\lambda} + \Teff^{-1/2} s_{\omega} + \sigma^{-1}s_{\omega}s_{\lambda}] \\
&+ cK[ (\norm{\somega} + \sigma^{-1}s_{\omega})\width(\Sigma_{::}^{1/2}\sLambda_{s_{\lambda}}) + 
       (\norm{\Sigma_{::}^{1/2}(\psi-\slambda)} + s_{\lambda})\width(\sOmega_{s_{\omega}})] \\
& +     \sigma^{-1}s_{\omega}\min_{\lambda_0 \in \R}\norm{S_{\lambda}^{1/2} (L_{::}\slambda + \lambda_0 - L_{:T})} 
  +	s_{\lambda}\min_{\omega_0 \in \R}\norm{S_{\omega}^{1/2}\Sigma_{::}^{-1/2}(L_{::}'\somega + \omega_0 - L_{N:}')} \\ 
&+     \min\p{\norm{\Sigma_{::}^{-1/2}} r_{\omega} s_{\lambda},\  
	      \sigma^{-1} s_{\omega} r_{\lambda},\   
	      \min_{k \in \mathbb{N}}\sigma_k(L_{::}^c)^{-1} r_{\lambda} r_{\omega} + \sigma^{-1}\norm{\Sigma_{::}^{-1/2}}\sigma_{k+1}(L_{::}^c) s_{\lambda} s_{\omega}}
\end{align*}
Here $c$ is a universal constant, $\width(S)$ is the gaussian width of the set $S$, and 
\begin{align*}
&\Teff^{-1/2} = \sigma^{-1}(\norm{\Sigma_{::}^{1/2}(\slambda - \psi)} + \norm{\tilde\varepsilon_{iT}}_{L_2}), 
&&\Neff^{-1/2} = \norm{\somega} + \norm{(\Sigma_{::}^{N})^{1/2}\Sigma_{::}^{-1/2}}, \\
&\sLambda_{s} = \set{\lambda - \slambda : \lambda \in \sLambda, \norm{\Sigma_{::}^{1/2}(\lambda - \slambda)} \le s}, 
&&\sOmega_{s} = \set{\omega - \somega : \omega \in \sOmega, \sigma\norm{\omega - \somega} \le s}, \\
&S_{\lambda} = I-L_{::}(L_{::}'L_{::} + (\sigma r_{\omega}/s_{\omega})^2 I)^{-1}L_{::}', 
&&S_{\omega} = I-\Sigma_{::}^{-1/2}L_{::}'(L_{::}\Sigma_{::}^{-1}L_{::}' + (r_{\lambda}/s_{\lambda})^2 I)^{-1}L_{::}\Sigma_{::}^{-1/2}, \\
&L_{::}^{c} = L_{::} - N_0^{-1}1_{N_0} 1_{N_0}' L_{::} - L_{::} T_0^{-1} 1_{T_0} 1_{T_0}'. &&
\end{align*}
\end{lemma}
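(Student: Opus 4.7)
The plan is to expand $\hat\tau(\hlambda,\homega) - \hat\tau(\slambda,\somega)$ algebraically and control the resulting pieces separately. Setting $\Delta\omega := \homega-\somega$ and $\Delta\lambda := \hlambda-\slambda$, a direct calculation from \eqref{eq:treatment-effect-condensed-form} gives
\[
\hat\tau(\hlambda,\homega) - \hat\tau(\slambda,\somega)
 = -(Y_{N:} - \somega'Y_{::})\Delta\lambda + \Delta\omega'(Y_{::}\slambda - Y_{:T}) + \Delta\omega'Y_{::}\Delta\lambda.
\]
Since $\Delta\omega$ and $\Delta\lambda$ sum to zero as differences of simplex points, I may freely insert the oracle intercepts $\somega_0,\slambda_0$ inside the parentheses. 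Writing $Y=L+\varepsilon$ off the treated cell, this splits into four groups: noise-linear terms $(\somega'\varepsilon_{::}-\varepsilon_{N:})\Delta\lambda$ and $\Delta\omega'(\varepsilon_{::}\slambda-\varepsilon_{:T})$; a noise-bilinear term $\Delta\omega'\varepsilon_{::}\Delta\lambda$; residual-deviation terms $(\somega_0+\somega'L_{::}-L_{N:})\Delta\lambda$ and $\Delta\omega'(L_{::}\slambda+\slambda_0-L_{:T})$; and the signal bilinear $\Delta\omega'L_{::}\Delta\lambda$. Each piece is controlled under the constraints \eqref{eq:lowrank-general-consistency}.

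For the noise-linear terms I would apply sub-gaussian concentration indexed by the feasible deviation sets. The noise vector $\somega'\varepsilon_{::}-\varepsilon_{N:}$ has covariance of scale $\sigma^2\Neff^{-1}$ (combining $\norm{\somega}^2$ with the $N_1^{-1}$-scaled covariance of $\varepsilon_{N:}$), and $\varepsilon_{i:}\slambda-\varepsilon_{iT}$ has coordinate-variance $\sigma^2\Teff^{-1}$ (by independence of $\varepsilon_{i:}\psi-\varepsilon_{iT}$ from $\varepsilon_{i:}$). After absorbing the deviation bounds into the index set, the sub-gaussian tail yields the $uK$ line and the expectation yields the $\width(\Sigma_{::}^{1/2}\sLambda_{s_\lambda})$ and $\width(\sOmega_{s_\omega})$ contributions, with prefactors $\norm{\somega}+\sigma^{-1}s_\omega$ and $\norm{\Sigma_{::}^{1/2}(\psi-\slambda)}+s_\lambda$ coming from combining the oracle sizes with the feasible deviation sizes. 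The noise-bilinear $\Delta\omega'\varepsilon_{::}\Delta\lambda$ is a gaussian chaos; conditioning on one factor and applying a Hanson-Wright inequality with $\sigma\norm{\Delta\omega}\le s_\lambda$ and $\norm{\Sigma_{::}^{1/2}\Delta\lambda}\le s_\omega$ gives the $uK\sigma^{-1}s_\omega s_\lambda$ term.

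For the residual-deviation terms I would invoke the first-order optimality of the oracle regressions. The oracle time weights $\slambda$ minimise a ridge-regularised least squares loss on $L$ with effective penalty set by $(\sigma r_\omega/s_\omega)^2$; the ridge KKT identity projects the residual onto the complement of the column space of $L_{::}$, and this is precisely what the operator $S_\lambda^{1/2}$ encodes. Chaining this identity with Cauchy-Schwarz and $\sigma\norm{\Delta\omega}\le s_\lambda$ produces $\sigma^{-1}s_\omega\min_{\lambda_0}\norm{S_\lambda^{1/2}(L_{::}\slambda+\lambda_0-L_{:T})}$; the symmetric manipulation — now accounting for the $\Sigma_{::}^{-1/2}$ coming from the $\norm{\Sigma_{::}^{1/2}\Delta\lambda}\le s_\omega$ constraint — gives the $s_\lambda\min_{\omega_0}\norm{S_\omega^{1/2}\Sigma_{::}^{-1/2}(L_{::}'\somega+\omega_0-L_{N:}')}$ piece.

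The remaining and trickiest piece is the signal bilinear $\Delta\omega'L_{::}\Delta\lambda$. Because both deviations sum to zero, it equals $\Delta\omega'L_{::}^c\Delta\lambda$ for the doubly-centered matrix $L_{::}^c$. Splitting $L_{::}^c = P_k L_{::}^c + (I-P_k)L_{::}^c$ via the top-$k$ SVD projector, the low-rank part satisfies $|\Delta\omega'P_k L_{::}^c\Delta\lambda|\le \sigma_k(L_{::}^c)^{-1}\norm{L_{::}'\Delta\omega}\norm{L_{::}\Delta\lambda}\le \sigma_k^{-1}r_\omega r_\lambda$ by \eqref{eq:lowrank-general-consistency}, while the tail obeys $|\Delta\omega'(I-P_k)L_{::}^c\Delta\lambda|\le \sigma_{k+1}(L_{::}^c)\norm{\Delta\omega}\norm{\Delta\lambda}\le \sigma^{-1}\norm{\Sigma_{::}^{-1/2}}\sigma_{k+1}s_\lambda s_\omega$. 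Minimising over $k$ delivers the SVD-split bound; the two plainer Cauchy-Schwarz products appearing in the outer $\min$ cover the regimes where no nontrivial truncation helps (they use directly $\norm{L_{::}\Delta\lambda}\le r_\lambda$ paired with $\norm{\Delta\omega}\le \sigma^{-1}\norm{\Sigma_{::}^{-1/2}}s_\omega$, or its dual). Summing all four groups and taking a union bound over the concentration events of Steps 2--3 completes the proof, with a failure probability of the stated form $c\exp(-u^2)$. I expect the main obstacle to be the residual-deviation step, where the KKT manipulation must reproduce exactly the ridge projectors $S_\omega,S_\lambda$ appearing in the statement, and the signal-bilinear step, where the centering $L_{::}\leadsto L_{::}^c$ and the $\sigma_k/\sigma_{k+1}$ trade-off require careful bookkeeping.
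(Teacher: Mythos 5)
Your overall architecture matches the paper's: the same expansion of $\hat\tau(\hlambda,\homega)-\hat\tau(\slambda,\somega)$ into linear-in-noise, bilinear-in-noise, signal-residual, and signal-bilinear pieces, the same use of the zero-sum property to insert intercepts and pass from $L_{::}$ to the doubly-centered $L_{::}^c$, and the same head/tail SVD split giving $\min_k \sigma_k^{-1}r_\lambda r_\omega + \sigma_{k+1}s_\lambda s_\omega$. However, your mechanism for the terms involving $S_\omega$ and $S_\lambda$ is a genuine gap. You propose to derive them from ``first-order optimality of the oracle regressions,'' but the lemma assumes nothing about $\somega,\slambda$ beyond membership in $\R\times\Omega$ and $\R\times\Lambda$ and determinism -- they need not minimize anything -- and the ridge parameter appearing in $S_\lambda$ is $(\sigma r_\omega/s_\omega)^2$, a ratio of the two constraint radii in \eqref{eq:lowrank-general-consistency}, not the oracle problem's actual penalty (which is $N_0$, and whose KKT conditions are complicated by the simplex constraint in any case). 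The paper instead introduces a \emph{free} auxiliary weight vector $\omega$, writes $(L_{N:}-\somega'L_{::})(\hlambda-\slambda)=(L_{N:}-\omega'L_{::}-\omega_0)(\hlambda-\slambda)+(\omega-\somega)'L_{::}(\hlambda-\slambda)$, applies Cauchy--Schwarz to each piece using the two different constraints from \eqref{eq:lowrank-general-consistency}, and then explicitly minimizes the resulting quantity $\alpha^2\norm{A\omega-b}^2+\beta^2\norm{\omega-\somega}^2$ over $\omega$ via a closed-form identity equal to $\alpha^2\norm{S^{1/2}(A\somega-b)}^2$; that is where the shrinkage operators come from. Your route would not reproduce these operators.

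A second, more repairable issue is the bilinear noise term $(\homega-\somega)'\varepsilon_{::}(\hlambda-\slambda)$: conditioning on one factor and applying Hanson--Wright does not work, because both $\homega$ and $\hlambda$ are functions of $\varepsilon_{::}$, so neither factor can be frozen while treating $\varepsilon_{::}$ as fresh randomness. The paper handles this with Chevet's inequality, i.e.\ a uniform bound over the product of the two deterministic localization sets $\sOmega_{s_\omega}\times\sLambda_{s_\lambda}$, which is exactly what produces the width-times-radius structure $\width(\sOmega_{s_\omega})s_\lambda+\width(\Sigma_{::}^{1/2}\sLambda_{s_\lambda})\sigma^{-1}s_\omega+u\sigma^{-1}s_\omega s_\lambda$. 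Relatedly, to obtain the stated prefactor $\norm{\somega}+\sigma^{-1}s_\omega$ on $\width(\Sigma_{::}^{1/2}\sLambda_{s_\lambda})$ rather than $\Neff^{-1/2}$, you must treat $\varepsilon_{N:}(\hlambda-\slambda)$ separately from $\somega'\varepsilon_{::}(\hlambda-\slambda)$, exploiting that $\hlambda$ is independent of $\varepsilon_{N:}$ so that piece needs only a pointwise subgaussian tail and contributes no width term; your proposal lumps the two together.
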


We simplify this using bounds $s_{\omega},s_{\lambda},r_{\omega},r_{\lambda}$ from Lemma~\ref{lemma:weight-consistency} 
and bounds $\width(\sOmega_{s_\omega}) \lesssim \sqrt{\log(N_0)}$ 
and $\width(\sLambda_{s_\lambda}) \lesssim \sqrt{\log(T_0)}$ that hold for the specific sets $\Omega,\Lambda$ used in our concrete setting 
\citep[Example 1]{hirshberg2020leastsquares}. 

\begin{corollary}
\label{cor:lowrank-simplified}
Suppose Assumptions~\ref{ass:model-condensed}, \ref{ass:error-condensed}, and \ref{ass:rank-condensed} hold
with $T_0 \sim N_0$ and that $\log(N_0)$, $T_1$ and $N_1$ are bounded away from zero. Let $m_0=N_0$, $m_1=\sqrt{N_1 T_1}$, and $\bar m_1=\max(N_1, T_1)$.
Consider the weights defined in \eqref{eq:weights-condensed-form} 
with $\Omega \subseteq \R^{N_0}$ and $\Lambda \subseteq \R^{T_0}$ taken to be the unit simplices and $\zeta \gg  m_1^{1/2} \log^{1/2}(m_0)$.
With probability $1-2\exp(-\min(T_1\log(T_0), N_1\log(N_0))) - c\exp(-cN_0^{1/2}))$,
$\hat \tau(\homega,\hlambda) - \hat\tau(\slambda,\somega) = o_p((N_1 T_1)^{-1/2})$ if
\begin{align*} 
\max(\norm{\somega},\norm{\psi-\slambda}) &\ll m_1^{-1} \log^{-1/2}(m_0), \\
\norm{\somega_0 + \somega'L_{::} - L_{N:}}  &\ll m_0^{1/4} m_1^{-1/2} \bar m_1^{-1/4} \log^{-1/2}(m_0), \\
\norm{\slambda_0 + L_{::}\slambda - L_{:T}} &\ll m_0^{1/4} m_1^{-1/4}, 
\end{align*}
and the latter two bounds go to infinity.
\end{corollary}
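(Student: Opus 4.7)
The plan is to instantiate Lemma~\ref{lemma:lowrank-general} with the weight-deviation bounds from Lemma~\ref{lemma:weight-consistency} together with the gaussian width bounds specific to the unit simplex, and then verify term-by-term that the resulting bound is $o(m_1^{-1})$ on the stated high-probability event. First I would invoke Lemma~\ref{lemma:weight-consistency}. The Corollary's hypotheses $\norm{\somega_0 + \somega'L_{::} - L_{N:}} \ll m_0^{1/4} m_1^{-1/2}\bar m_1^{-1/4}\log^{-1/2}(m_0)$ and $\norm{\slambda_0 + L_{::}\slambda - L_{:T}} \ll m_0^{1/4} m_1^{-1/4}$ imply that $N_0/\norm{L_{::}\slambda+\slambda_0 - L_{:T}}$ and $T_0/\norm{\somega'L_{::}+\somega_0 - L_{N:}}$ both dominate $N_0^{1/2}$, so the weight-consistency bounds hold with probability at least $1-c\exp(-cN_0^{1/2})$. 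Using $\norm{\somega},\norm{\slambda-\psi}\ll m_1^{-1}\log^{-1/2}(m_0)$ to infer $\Neff\asymp N_1$ and $\Teff\asymp T_1$, and the eigenvalue boundedness of $\Sigma_{::}$, this yields $s_\omega := \norm{\Sigma_{::}^{1/2}(\hlambda - \slambda)} \lesssim N_0^{-1/2} r_\lambda$ and $s_\lambda := \sigma\norm{\homega - \somega} \lesssim (\zeta^2 T_0)^{-1/2} r_\omega$, with $r_\lambda^2, r_\omega^2$ given explicitly by Lemma~\ref{lemma:weight-consistency}.

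Next I would apply Lemma~\ref{lemma:lowrank-general} with $u = \min(T_1^{1/2}\log^{1/2}(T_0), N_1^{1/2}\log^{1/2}(N_0))$ so that its $c\exp(-u^2)$ failure probability matches the one claimed in the Corollary, and substitute the simplex-specific width bounds $\width(\sOmega_{s_\omega})\lesssim \sqrt{\log N_0}$ and $\width(\Sigma_{::}^{1/2}\sLambda_{s_\lambda})\lesssim \sqrt{\log T_0}$ from Example~1 of \citet{hirshberg2020leastsquares}. This reduces the task to verifying that each of the four groups of summands in the conclusion of Lemma~\ref{lemma:lowrank-general} is $o(m_1^{-1})$. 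After substitution, the $uK[\cdots]$ and width groups are combinations of $N_0^{-1/2}r_\lambda$, $(\zeta^2 T_0)^{-1/2}r_\omega$, $\norm{\somega}$, and $\norm{\slambda-\psi}$ scaled by $\sqrt{\log m_0}$; the prescribed choice $\zeta \gg \max(1, m_0^{-1/4}\bar m_1^{1/2}) m_1^{1/2}\log^{1/2}(m_0)$ is precisely tuned to drive the $s_\lambda$-dependent pieces below $m_1^{-1}$, while the assumptions on $\norm{\somega}$ and $\norm{\slambda-\psi}$ handle the rest. The generalization-residual terms involving $S_\lambda$ and $S_\omega$ are controlled from above by the raw oracle approximation errors, whose scales are exactly what the Corollary assumes.

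The main obstacle is the last minimum in Lemma~\ref{lemma:lowrank-general}, where the near-low-rank structure of $L$ is essential. Here I would use the third candidate, $\min_k[\sigma_k(L_{::}^c)^{-1}r_\lambda r_\omega + \sigma^{-1}\norm{\Sigma_{::}^{-1/2}}\sigma_{k+1}(L_{::}^c) s_\lambda s_\omega]$, choosing $k$ near the $R$ from Assumption~\ref{ass:rank-condensed}. Because $L_{::}^c$ differs from $L_{::}$ by a rank-two correction, Weyl's inequality gives $\sigma_{j+2}(L_{::}^c)\le \sigma_j(L_{::})$ and $\sigma_j(L_{::}^c)\ge \sigma_{j+2}(L_{::})$, so choosing $k$ slightly above $R$ yields $\sigma_{k+1}(L_{::}^c)\le \sigma_R(L_{::})$ which is killed by Assumption~\ref{ass:rank-condensed}, while $\sigma_k(L_{::}^c)$ remains comparable to $\sigma_{k+2}(L_{::})$, which is acceptable since $R\asymp N_0^{1/2}$ is large. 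Balancing these two pieces, and using the assumed scaling of $\zeta$ to neutralize the $(\zeta^2 T_0)^{-1/2}$ factors, drives this contribution below $m_1^{-1}$. The bulk of the difficulty is mechanical bookkeeping: every factor of $\sqrt{\log m_0}$, $\bar m_1^{1/2}$, and $m_0^{1/4}$ must be tracked through the chain, and the two regimes where $N_1$ or $T_1$ dominates $\bar m_1$ must be handled separately to confirm each summand is $o(m_1^{-1})$. Once all four groups are so bounded on the intersection of the high-probability events from the two lemmas, the Corollary's conclusion follows.
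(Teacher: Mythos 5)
Your overall architecture matches the paper's: invoke Lemma~\ref{lemma:weight-consistency} to get $s_\omega, s_\lambda, r_\omega, r_\lambda$, feed these into Lemma~\ref{lemma:lowrank-general} together with the simplex width bounds $\width(\sOmega_{s_\omega})\lesssim\sqrt{\log N_0}$ and $\width(\Sigma_{::}^{1/2}\sLambda_{s_\lambda})\lesssim\sqrt{\log T_0}$, absorb $S_\omega, S_\lambda, \Sigma_{::}$ into constants, and verify each group of terms is $o(m_1^{-1})$. The probability accounting is also essentially right. However, there is a genuine gap in your treatment of the one delicate term, the bilinear piece $(\homega-\somega)'L_{::}(\hlambda-\slambda)$, i.e., the final minimum in Lemma~\ref{lemma:lowrank-general}. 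You propose to use the third candidate, $\min_k \sigma_k(L_{::}^c)^{-1}r_\lambda r_\omega + \sigma_{k+1}(L_{::}^c)(\cdots)s_\lambda s_\omega$, with $k$ near the $R$ of Assumption~\ref{ass:rank-condensed}. That candidate is only useful if $\sigma_k(L_{::}^c)$ admits a \emph{lower} bound, but Assumption~\ref{ass:rank-condensed} provides only an \emph{upper} bound on $\sigma_K(L_{::})/K$ and explicitly permits $L$ to have arbitrarily many nonzero but arbitrarily small singular values; your statement that ``$\sigma_k(L_{::}^c)$ remains comparable to $\sigma_{k+2}(L_{::})$, which is acceptable since $R\asymp N_0^{1/2}$ is large'' conflates the index being large with the singular value being large. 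If $\sigma_k(L_{::}^c)$ is tiny, $\sigma_k(L_{::}^c)^{-1}r_\lambda r_\omega$ is uncontrolled, so this branch of the minimum cannot be used under the stated hypotheses.

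The paper instead takes the second candidate, $M=(\eta^2 T_0)^{-1/2}$ with $\eta^2=1+\zeta^2$ (it notes explicitly that $M=N_0^{-1/2}$ is not good enough and that the spectral-gap candidate is a refinement it deliberately does not pursue). This is precisely why the corollary requires the strong regularization $\zeta \gg \max(1,\, m_0^{-1/4}\bar m_1^{1/2})\,m_1^{1/2}\log^{1/2}(m_0)$: the large $\zeta$ shrinks $\norm{\homega-\somega}\lesssim(\eta^2 T_0)^{-1/2}r_\omega$ enough that the crude Cauchy--Schwarz bound $r_\omega r_\lambda(\eta^2 T_0)^{-1/2}$ on the cross term is already $o(m_1^{-1})$, with no appeal to the spectrum of $L_{::}^c$ at all. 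So the $\zeta$ condition is not merely ``tuned to drive the $s_\lambda$-dependent pieces'' down, as you put it; it exists specifically to make the second branch of the minimum work in place of the spectral branch you rely on. To repair your argument, replace the spectral-gap step with the bound $M=(\eta^2 T_0)^{-1/2}$ and verify \eqref{eq:m-conditions}-type conditions for that choice, which is where the displayed lower bound on $\zeta$ comes from.
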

These assumptions are implied by Assumptions~\ref{ass:model-condensed}-\ref{ass:properties-of-oracle-weights-condensed}.
Assumption~\ref{ass:sample-sizes-condensed} states
our assumptions $T_0 \sim N_0$, $\log(N_0),T_1,N_1 \not\to 0$, and that
the (fourth power of) the second bound above goes to infinity; when the second bound does go to infinity, so does the third. 
As Assumption~\ref{ass:sample-sizes-condensed} implies that that $T_0 \sim N_0 \to \infty$, it implies the probability stated in the lemma above goes to one. And Assumption~\ref{ass:properties-of-oracle-weights-condensed}(i-ii) states that the bound above hold.

As our assumptions imply the conclusions of Lemma~\ref{lemma:reduction-to-oracle-deviation} and Corollary~\ref{cor:lowrank-simplified},
and those two results imply the conclusions of Theorem~\ref{theorem:asymptotic-linearity-condensed}, this concludes our proof.

\section{Proof Details}
In this section, we complete our proof by proving the lemmas used in the sketch above.

\subsection{Proof of Lemma~\ref{lemma:reduction-to-oracle-deviation}}
First, consider the oracle estimator's bias,
\[ \E \hat\tau(\slambda,\somega) - \tau =  (L_{NT} + \tau) - \somega' L_{:T} - L_{N:}\slambda + \somega' L_{::} \slambda - \tau. \]
Assumption~\ref{ass:properties-of-oracle-weights-condensed}(iii) is that this is $o_p((N_1 T_1)^{-1/2})$.

Now consider the oracle estimator's variation around its mean,
\begin{align*}
\hat\tau(\slambda,\somega)- \E \hat\tau(\slambda,\somega) 
&= \varepsilon_{NT} - \varepsilon_{N:}\slambda + \somega' \varepsilon_{:T} + \somega' \varepsilon_{::} \slambda \\
&= (\varepsilon_{NT} - \varepsilon_{N:}\slambda) - \somega'(\varepsilon_{:T} - \varepsilon_{::} \slambda) \\
&= (\varepsilon_{NT} - \varepsilon_{N:}\psi) - \somega'(\varepsilon_{:T} - \varepsilon_{::} \psi) 
- \varepsilon_{N:}(\slambda - \psi) + \somega'\varepsilon_{::}(\slambda - \psi).
\end{align*}
The conclusion of our lemma holds if all but the first term in the decomposition above are $o_p((N_1 T_1)^{-1/2})$.
We do this by showing that each term has $o((N_1 T_1)^{-1})$ variance.
\begin{align*}
&\E (\somega'(\varepsilon_{:T} - \varepsilon_{::} \psi))^2 
= \norm{\somega}^2 \E (\varepsilon_{1T} - \varepsilon_{i:}\psi)^2 = \norm{\somega}^2/T_1, \\
&\E (\varepsilon_{N:}(\slambda - \psi))^2 
= (\slambda-\psi)'(\E \varepsilon_{N:}' \varepsilon_{N:})  (\slambda-\psi) \le \norm{\slambda - \psi}^2 \norm{\Sigma_{::}}/N_1, \\
& \E (\somega'\varepsilon_{::}(\slambda - \psi))^2 = \norm{\somega}^2 \E (\varepsilon_{1:}(\slambda-\psi))^2 \le \norm{\somega}^2\norm{\slambda}^2 \norm{\Sigma_{::}}.
\end{align*}
Our assumption that $\norm{\Sigma_{::}}$ is bounded and our assumed bounds on $\norm{\somega}$ and $\norm{\slambda}$ 
imply that each of these is $o((N_1 T_1)^{-1})$ as required.

\subsection{Proof of Lemma~\ref{lemma:weight-consistency}}
The bounds involving $\lambda$ follow from the application of \citet[Theorem 1]{hirshberg2020leastsquares} with $\eta^2=1$, $A=L_{::}$, $b=L_{:T}$, 
and $[\varepsilon, \nu]= [\varepsilon_{::}, \varepsilon_{:T}]$ with independent rows, using the bound 
$\width(\sLambda_s)\lesssim \sqrt{\log(T_0)}$ mentioned in its Example 1.
The bounds for $\omega$ follow from the application of the same theorem with $\eta^2=1+\zeta^2/\sigma^2$
for $\sigma^2=\trace(\Sigma_{::})/T_0$, $A=L_{::}'$, $b=L_{N:}'$, and $[\varepsilon, \nu]=\varepsilon_{::}', \varepsilon_{N:}']$
with independent columns, using the analogous bound $\width(\sOmega_s) \lesssim \sqrt{\log(N_0)}$.

In the first case, \citet[Theorem 1]{hirshberg2020leastsquares} gives bounds of the claimed form for 
\begin{align*} 
&r_{\lambda}^2 = [(N_0/\Teff)^{1/2} + \norm{L_{::}\slambda + \slambda_0 - L_{:T}}] \sqrt{\log(T_0)} + 1 \quad \text{ holding with probability } \\
&1-c\exp\p{-c\min(N_0\log(T_0)/r_{\lambda}^2, v^2 R, N_0)} \text{ if }\ \sigma_{R+1}(L_{::})/R \le cv T_1^{-1/2}\log^{-1/2}(T_0) \quad \text{ and } \\
&R \le \min(v^2 (N_0 \Teff)^{1/2}, v^2 N_0 / \log(T_0), cN_0).
\end{align*}
To see this, ignore constant order factors of $\phi\ (\ge 1)$ and $\norm{\Sigma}$ in \citet[Theorem 1]{hirshberg2020leastsquares} and
substitute $s^2 = cv^2 r_{\lambda}^2 / (\eta^2 n)$ for problem-appropriate parameters $\eta^2=1$, $n=N_0$, $n_{eff}^{-1/2}=\Teff^{-1/2}\ (\ge T_1^{-1/2})$, and
$\bar \width(\Theta_s) = \sqrt{\log(T_0)}$.

In the second case, \citet[Theorem 1]{hirshberg2020leastsquares} gives bounds of the claimed form for 
\begin{align*} 
&r_{\omega}^2 = [(T_0/\Neff)^{1/2} + \norm{\somega'L_{::} + \somega_0 - L_{N:}}] \sqrt{\log(N_0)} + \log(N_0) \quad \text{ holding with probability } \\
&1-c\exp\p{-c\min(\eta^2 T_0\log(N_0)/r_{\omega}^2, v^2 R, T_0)} \text{ if }\ \sigma_{R+1}(L_{::})/R \le cv N_1^{-1/2}\log^{-1/2}(N_0) \quad \text{ and } \\
&R \le \min(v^2 (T_0 \Neff)^{1/2}, v^2 \eta^2 T_0 / \log(N_0), cT_0).
\end{align*}
To see this, ignore constant order factors of $\phi\ (\ge 1)$ and $\norm{\Sigma}$ in \citet[Theorem 1]{hirshberg2020leastsquares} and 
substitute $s^2 = cv^2 r_{\lambda}^2 / (\eta^2 n)$ for problem-appropriate parameters $\eta^2=1+\zeta^2/\sigma^2$, $n=T_0$, $n_{eff}^{-1/2}=\Neff^{-1/2}\ (\ge N_1^{-1/2})$,
and $\bar \width(\Theta_s) = \sqrt{\log(N_0)}$.

We will now simplify our conditions on $R$. As we have assumed that $N_1$ and $T_1$ and therefore $\Neff$ and $\Teff$ are bounded away from zero,
we can choose $v$ of constant order with $v \ge \max( c/\Teff, c/\Neff, 1)$, so our upper bounds on $R$ simplify to
\[ R \le \min(N_0^{1/2}, N_0/\log(T_0), cN_0) \quad  \text{ and } \quad R \le \min(T_0^{1/2}, \eta^2 T_0 / \log(N_0), T_0) \]
respectively. Having assumed that that $N_0, T_0 \to \infty$ with $N_0 \ge \log^2(T_0)$ and $T_0 \ge \log^2(N_0)$,
these conditions simplify to $R \le N_0^{1/2}$ and $R \le T_0^{1/2}$. Thus, it suffices that the largest integer $R \le \min(N_0,T_0)^{1/2}$
satisfy $\sigma_{R+1}(L_{::})/R \le c\min(N_1^{-1/2}\log^{-1/2}(N_0), T_1^{-1/2}\log^{-1/2}(T_0))$.
This is implied, for any constant $c$, by Assumption~\ref{ass:rank-condensed}.

We conclude by simplifing our probability statements. As noted above, we take $R \sim \min(N_0,T_0)^{1/2}$, so we may make this substitution.
Furthermore, again using our assumption that $\Neff$ and $\Teff$ are bounded away from zero,
\begin{align*} 
\frac{N_0\log(T_0)}{r_{\lambda}^2} 
&\gtrsim \min\p{ \frac{ N_0\log(T_0)}{(N_0/\Teff)^{1/2}\sqrt{\log(T_0)}},\ 
		 \frac{ N_0\log(T_0)}{ \norm{L_{::}\slambda + \slambda_0 - L_{:T}}\sqrt{\log(T_0)}},
		 \frac{ N_0\log(T_0)}{1}}\\
&\gtrsim \min\p{\sqrt{N_0},\ N_0/\norm{L_{::}\slambda + \slambda_0 - L_{:T}}}, \\
\frac{T_0\log(N_0)}{r_{\omega}^2}
&\gtrsim \min\p{\frac{T_0\log(N_0)}{(T_0/\Neff)^{1/2}\sqrt{\log(N_0)}}, \ 
		\frac{T_0\log(N_0)}{\norm{\somega'L_{::} + \somega_0 - L_{N:}} \sqrt{\log(N_0)}}, \
		\frac{T_0\log(N_0)}{\log(N_0)}} \\
&\gtrsim \min\p{\sqrt{T_0},\ T_0/\norm{\somega'L_{::} + \somega_0 - L_{N:}}}.
\end{align*}
Thus, each bound holds with probability at least 
$1-c\exp(-c\min(N_0^{1/2}, T_0^{1/2}, N_0/\norm{L_{::}\slambda + \slambda_0 - L_{:T}}, T_0/\norm{\somega'L_{::} + \somega_0 - L_{N:}}))$.
And by the union bound, doubling our leading constant $c$, both simultaneously with such a probability.

\subsection{Proof of Lemma~\ref{lemma:lowrank-general}}
We begin with a decomposition of the difference between the SDID estimator and the oracle.
\begin{align*}
&\tau(\slambda,\somega) - \hat \tau(\hlambda,\homega) \\ 
&=\hat Y_{NT}(\hlambda,\homega) - Y_{NT}(\slambda,\somega) \\
&=\sqb{Y_{N:}\hlambda + \homega' Y_{:T} - \homega' Y_{::} \hlambda}  -  \sqb{ Y_{N:} \slambda + \somega' Y_{:T} - \somega' Y_{::} \slambda } \\
&=Y_{N:}(\hlambda-\slambda) + (\homega-\somega)' Y_{:T} - \sqb{(\homega-\somega)' Y_{::} (\hlambda-\slambda) + \somega' Y_{::} (\hlambda - \slambda) +  (\homega-\somega)' Y_{::} \slambda } \\
&=(Y_{N:} - \somega'Y_{::})(\hlambda-\slambda) + (\homega-\somega)' (Y_{:T} - Y_{::}\slambda) - (\homega-\somega)' Y_{::} (\hlambda-\slambda). 
\end{align*}
We bound these terms. As $Y_{it}= L_{it} + 1(i=N, t=T)\tau + \varepsilon$, we can decompose each of these three terms into two parts, 
one involving $L$ and the other $\varepsilon$. We will begin by treating the parts involving $\varepsilon$.
\begin{enumerate}
\item The first term is a sum $\varepsilon_{N:}(\hlambda-\slambda) - \somega'\varepsilon_{::}(\hlambda-\slambda)$. 
Because $\hlambda$ is independent of $\varepsilon_{N:}$, the first of these is subgaussian conditional on $\hlambda$,
with conditional subgaussian norm $\norm{\varepsilon_{N:} (\hlambda-\slambda)}_{\psi_2 \mid \hlambda} 
\le \norm{\varepsilon_{N:}(\Sigma^N_{::})^{-1/2}}_{\psi_2}\norm{(\Sigma^N_{::})^{1/2}\Sigma_{::}^{-1/2}}\norm{\Sigma_{::}^{1/2}(\hlambda-\slambda)}$.
It follows that it satisfies a subgaussian tail bound $\smallabs{\varepsilon_{N:}(\hlambda-\slambda)} 
\le cu\norm{\varepsilon_{N:}(\Sigma^N_{::})^{-1/2}}_{\psi_2}$ $\norm{(\Sigma^N_{::})^{1/2}\Sigma_{::}^{-1/2}} \norm{\Sigma_{::}^{1/2}(\hlambda-\slambda)}$
 with conditional probability $1-2\exp(-u^2)$. This implies that the same bound holds unconditionally on an event of probability $1-2\exp(-u^2)$.

Furthermore, via generic chaining \citep[e.g.,][Theorem 8.5.5]{vershynin2018high}, on an event of probability $1-2\exp(-u^2)$,
either $\Sigma_{::}^{1/2}(\hlambda-\slambda) \not\in \sLambda_{s_{\lambda}}$ or $\smallabs{\somega'\varepsilon_{::}(\hlambda-\slambda)} \le 
c\norm{\somega'\varepsilon_{::}\Sigma_{::}^{-1/2}}_{\psi_2}(\width(\Sigma_{::}^{1/2}\sLambda_{s_{\lambda}}) + u\rad(\Sigma_{::}^{1/2}\sLambda_{s_{\lambda}})) 
\le c \norm{\varepsilon_{i:}\Sigma_{::}^{-1/2}}_{\psi_2} \norm{\somega}(\width(\Sigma_{::}^{1/2}\sLambda_{s_{\lambda}}) + u s_{\lambda})$. 
The second comparison here follows from Hoeffding's inequality \citep[e.g.,][Theorem 2.6.3]{vershynin2018high}.
Thus, by the union bound, on the intersection of an event of probability $1-c\exp(-u^2)$ and one on which \eqref{eq:lowrank-general-consistency} holds,
\begin{align*}  
&\smallabs{(\varepsilon_{N:} - \somega'\varepsilon_{::})(\hlambda-\slambda)} \\
&\le cu\norm{\varepsilon_{N:}(\Sigma^N_{::})^{-1/2}}_{\psi_2}\norm{(\Sigma_{::}^N)^{1/2} \Sigma_{::}^{-1/2}} s_{\lambda}
+    c\norm{\varepsilon_{1:}\Sigma^{-1/2}}_{\psi_2}\norm{\somega}(\width(\Sigma_{::}^{1/2}\sLambda_{s_{\lambda}}) + us_{\lambda}) \\
&\le cuK\Neff^{-1/2}s_{\lambda} +
     cK\norm{\somega}\width(\Sigma_{::}^{1/2}\sLambda_{s_{\lambda}}).
\end{align*}
 
\item The second term is similar to the first. It is a sum $(\homega-\somega)'\tilde\varepsilon_{:T} + (\homega - \somega)' \varepsilon_{::}(\psi - \slambda)$ for $\tilde \varepsilon_{:T} = \varepsilon_{:T} - \varepsilon_{::}\psi$. Because $\homega$ is a function of $\varepsilon_{::}, \varepsilon_{N:}$
and $\tilde\varepsilon_{:T}$ is mean zero conditional on them, the first of these terms is 
a weighted average of conditionally independent mean-zero subgaussian random variables.
Applying Hoeffding's inequality conditionally, it follows that 
its magnitude is bounded by $cu \norm{\homega-\somega} \max_{i<N} \norm{\tilde\varepsilon_{iT}}_{\psi_2 \mid \varepsilon{::}, \varepsilon_{N:}} \le 
cuK \norm{\homega-\somega} \norm{\tilde\varepsilon_{1T}}_{L_2}$ on an event of probability $1-2\exp(-u^2)$. In the second comparison, 
we've used the independence of rows $\varepsilon_{i\cdot}$,
the identical distribution of rows for $i < N$, 
and the assumption that $\norm{\tilde\varepsilon_{1T}}_{\psi_2 \mid \varepsilon_{1:}} \le K \norm{\tilde\varepsilon_{1T}}_{L_2}$.

Furthermore, via generic chaining, on an event of probability $1-c\exp(-u^2)$,
either $(\homega - \somega) \not\in \sOmega_{s_{\omega}}$ or $\smallabs{(\homega-\somega)\varepsilon_{::}(\psi-\slambda)} \le 
c\norm{\varepsilon_{::}(\psi - \slambda)}_{\psi_2}(\width(\sOmega_{s_{\omega}}) + u\rad(\sOmega_{s_{\omega}})) 
\le cK\norm{\Sigma_{::}^{1/2}(\psi-\slambda)}(\width(\sOmega_{s_{\omega}}) + u \rad(\sOmega_{s_{\omega}}))$. 
The second comparison here follows from Hoeffding's inequality.
Thus, by the union bound, on the intersection of an event of probability $1-c\exp(-u^2)$ and one on which \eqref{eq:lowrank-general-consistency} holds,
\begin{align*}
&\smallabs{(\homega-\somega)'(\varepsilon_{:T} - \varepsilon_{::}\slambda)} \\
&\le cuK \norm{\tilde\varepsilon_{1T}}_{L_2} \sigma^{-1}s_{\omega}
    +  cuK\norm{\Sigma_{::}^{1/2}(\psi-\slambda)}\sigma^{-1}s_{\omega} + cK\norm{\Sigma_{::}^{1/2}(\psi-\slambda)}\width(\sOmega_{s_{\omega}}) \\
&\le cuK \Teff^{-1/2} s_{\omega}
    + cK\norm{\Sigma_{::}^{1/2}(\psi-\slambda)}\width(\sOmega_{s_{\omega}}).
\end{align*}

\item Via Chevet's inequality \citep[Lemma~3]{hirshberg2020leastsquares}, on an event of probability $1-c\exp(-u^2)$, either $(\homega - \somega) \not\in \sOmega_{s_{\omega}}$,
$(\hlambda - \slambda) \not\in \sLambda_{s_{\lambda}}$, or
$\smallabs{(\homega - \somega)'\varepsilon_{::} (\hlambda - \slambda)} \le 
cK[\width(\sOmega_{s_{\omega}})\rad(\Sigma_{::}^{1/2}\sLambda_{s_{\lambda}}) +
   \rad(\sOmega_{s_{\omega}})\width(\Sigma_{::}^{1/2}\sLambda_{s_{\lambda}}) + u \rad(\sOmega_{s_{\omega}})\rad(\Sigma_{::}^{1/2}\sLambda_{s_{\lambda}})] 
\le cK[\width(\sOmega_{s_{\omega}})s_{\lambda} + \width(\Sigma_{::}^{1/2}\sLambda_{s_{\lambda}})\sigma^{-1}s_{\omega} + u \sigma^{-1}s_{\omega}s_{\lambda} ].$
On the intersection of this event and one on which \eqref{eq:lowrank-general-consistency} holds, 
the first two possibilities are ruled out and our bound on $\smallabs{(\homega - \somega)'\varepsilon_{::} (\hlambda - \slambda)}$ holds.
\end{enumerate}

\noindent By the union bound, these three bounds are satisfied on the intersection of one of probability 
$1-c\exp(-u^2)$ and one on which \eqref{eq:lowrank-general-consistency} holds. 
And by the triangle inequality, adding our bounds yields a bound on our terms involving $\varepsilon$.
\begin{equation}
\label{eq:epsilon-terms}
\begin{aligned} 
&\smallabs{(\varepsilon_{N:} - \somega'\varepsilon_{::})(\hlambda-\slambda) 
+ (\homega-\somega)' (\varepsilon_{:T} - \varepsilon_{::}\slambda) 
- (\homega-\somega)' \varepsilon_{::} (\hlambda-\slambda)} \\ 
    &\le cuK [\Neff^{-1/2}s_{\lambda} + \phi \Teff^{-1/2} s_{\omega} + \sigma^{-1}s_{\omega}s_{\lambda}] \\
    &+ cK[ (\norm{\somega} + \sigma^{-1}s_{\omega})\width(\Sigma_{::}^{1/2}\sLambda_{s_{\lambda}}) + 
	      (\norm{\Sigma_{::}^{1/2}(\psi-\slambda)} + s_{\lambda})\width(\sOmega_{s_{\omega}})] 
\end{aligned}
\end{equation}

We now turn our attention to the terms involving $L$. For any $\omega_0, \omega \in \R \times \R^{N_0}$, 
$(L_{N:} - \somega'L_{::})(\hlambda-\slambda)=(L_{N:} - \omega' L_{::} - \omega_0) (\hlambda - \slambda) + (\omega - \somega)'L_{::}(\hlambda - \slambda)$.
The value of the constant $\omega_0$ does not affect the expression because the sum of the elements of $\hlambda - \slambda$ is zero. 
By the Cauchy-Schwarz and triangle inequalities, it follows that  
\[ \smallabs{ (L_{N:} - \somega'L_{::})(\hlambda-\slambda)} \le  \norm{(L_{N:} - \omega' L_{::} - \omega_0)\Sigma_{::}^{-1/2}}\norm{\Sigma_{::}^{1/2}(\hlambda - \slambda)} + \norm{\omega-\somega}\norm{L_{::}(\hlambda - \slambda)} \]
Furthermore, substituting bounds implied by \eqref{eq:lowrank-general-consistency} and using the elementary bound $x + y \le 2\sqrt{x^2+y^2}$, 
we get a quantity that we can minimize explicitly over $\omega$. The following result; for $A=\Sigma_{::}^{-1/2}L_{::}'$, 
$b=\Sigma_{::}^{-1/2}(L_{N:}'-\omega_0 1)$, $\alpha=s_{\lambda}$, and $\beta=r_{\lambda}$ satisfying $\beta/\alpha=cN_0^{1/2}$;
implies the bound
\begin{align*}
\smallabs{ (L_{N:} - \somega'L_{::})(\hlambda-\slambda)} &\le 
2 s_{\lambda}\min_{\omega_0} \norm{S_{\omega}^{1/2}\Sigma_{::}^{-1/2}(L_{::}'\somega + \omega_0 - L_{N:}')} \\
S_{\omega} &= I-\Sigma_{::}^{-1/2}L_{::}'(L_{::}\Sigma_{::}^{-1}L_{::}' + (r_{\lambda}/s_{\lambda})^2 I)^{-1}L_{::}\Sigma_{::}^{-1/2}.
\end{align*}

\begin{lemma}
For any real matrix $A$ and appropriately shaped vectors $\tilde x$ and $b$,
$\min_x \alpha^2 \norm{Ax - b}^2 + \beta^2 \norm{x - \tilde x}^2 = \alpha^2 \norm{S^{1/2} (A\tilde x - b)}^2$
for $S=I-A(A'A + (\beta/\alpha)^2 I)^{-1}A'$. If $\beta=0$, the same holds for
$S=I-A(A'A)^{\dagger}A$.
\end{lemma}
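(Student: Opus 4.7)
The plan is to reduce the stated minimization to a standard Tikhonov/ridge problem by translation, solve it via first-order conditions, and then simplify the optimal value using the orthogonality of the fitted residual.

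First I would substitute $y = x - \tilde x$ and $c = b - A\tilde x$, so the objective becomes $g(y) = \alpha^2 \norm{Ay - c}^2 + \beta^2 \norm{y}^2$, and the infimum over $x$ equals the infimum over $y$. Assume $\beta > 0$; setting $\nabla g(y) = 0$ gives the normal equations $(A'A + \gamma I)y = A'c$ with $\gamma = (\beta/\alpha)^2$, whose unique solution is $y^\star = (A'A + \gamma I)^{-1} A' c$. Taking the inner product of the normal equations with $y^\star$ and rearranging yields the identity $\alpha^2 \norm{Ay^\star}^2 - \alpha^2 (Ay^\star)'c + \beta^2 \norm{y^\star}^2 = 0$, which I will use to telescope the optimal value.

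Expanding $g(y^\star) = \alpha^2 \norm{Ay^\star}^2 - 2\alpha^2(Ay^\star)'c + \alpha^2\norm{c}^2 + \beta^2\norm{y^\star}^2$ and subtracting the identity above collapses this to $\alpha^2 \norm{c}^2 - \alpha^2 c'A y^\star = \alpha^2 c' \bigl[I - A(A'A + \gamma I)^{-1}A'\bigr]c = \alpha^2 c' S c$. Since the eigenvalues of $A(A'A + \gamma I)^{-1}A'$ are $\sigma_i^2/(\sigma_i^2 + \gamma) \in [0,1)$ (via the SVD of $A$), $S$ is positive semidefinite, so $c'Sc = \norm{S^{1/2}c}^2 = \norm{S^{1/2}(A\tilde x - b)}^2$, which is the claimed formula.

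For the degenerate case $\beta = 0$, the penalty disappears and the minimum value is independent of $\tilde x$: it equals $\alpha^2 \norm{(I - P_A)b}^2$, where $P_A = A(A'A)^{\dagger} A'$ is the orthogonal projector onto the column space of $A$ (as one sees from the SVD). So the stated $S = I - A(A'A)^\dagger A'$ is itself an orthogonal projector, hence $S^{1/2} = S$ and $SA = 0$. Therefore $S^{1/2}(A\tilde x - b) = -Sb$ for every $\tilde x$, and the formula $\alpha^2\norm{S^{1/2}(A\tilde x - b)}^2 = \alpha^2 \norm{Sb}^2$ matches the least-squares minimum. There is no real obstacle here; the only care required is in the $\beta=0$ case, where one must recognize the pseudoinverse $S$ as a projector and check that $SA = 0$ so the $\tilde x$ dependence drops out as claimed.
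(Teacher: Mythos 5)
Your proof is correct and follows essentially the same route as the paper's: reparametrize by $y = x-\tilde x$, solve the first-order (normal) equations, and substitute back to obtain the optimal value $\alpha^2 c' S c$. You are somewhat more careful than the paper in that you verify $S$ is positive semidefinite (justifying the $\norm{S^{1/2}\cdot}$ notation) and explicitly handle the $\beta=0$ case via the pseudoinverse projector, which the paper states but does not prove.
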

\begin{proof}
Reparameterizing in terms of $y=x-\tilde x$ and defining $v=A\tilde x - b$ and $\lambda^2=\beta^2/\alpha^2$,
 this is $\alpha^2$ times $\min_y \norm{v + Ay}^2 + \lambda^2 \norm{y}^2 = \min_y \norm{v}^2 + 2y'A'v + y'(A'A + \lambda^2 I)y$. 
Setting the derivative of the expression to zero, we solve for the minimizer $y=-(A'A + \lambda^2 I)^{-1}A'v$ 
and the minimum $v'[I - A(A'A + \lambda^2 I)^{-1}A']v$, then multiply by $\alpha^2$.
\end{proof}

Analogously, for any $\lambda_0, \lambda \in \R \times \R^{T_0}$,
\[ \smallabs{  (\homega-\somega)' (L_{:T} - L_{::}\slambda)} \le \norm{L_{:T} - L_{::}\lambda - \lambda_0}\norm{\homega - \somega} + \norm{\lambda-\slambda}\norm{(\homega - \somega)'L_{::}}. \]
and therefore, when \eqref{eq:lowrank-general-consistency} holds,
\begin{align*} 
\smallabs{  (\homega-\somega)' (L_{:T} - L_{::}\slambda)} &\le
2\sigma^{-1}s_{\omega}\min_{\lambda_0}\norm{S_{\lambda}^{1/2} (L_{::}\slambda - \lambda_0 - L_{:T})} \\
S_{\lambda} &= I-L_{::}(L_{::}'L_{::} + (\sigma r_{\omega}/s_{\omega})^2 I)^{-1}L_{::}'.
\end{align*}

Finally, we can take the minimum of two Cauchy-Schwarz bounds on the third term, 
\begin{align*}
\smallabs{(\homega-\somega)' L_{::} (\hlambda-\slambda)}  
&= \smallabs{[(\homega_0 - \somega_0) + (\homega-\somega)' L_{::}] (\hlambda-\slambda)} \\ 
&\le \norm{  (\homega_0 - \somega_0) + (\homega-\somega)' L_{::}}\norm{\Sigma_{::}^{-1/2}}\norm{\Sigma_{::}^{1/2}(\hlambda - \slambda)}, \\
\smallabs{(\homega-\somega)' L_{::} (\hlambda-\slambda)}  
&= \smallabs{(\homega-\somega)' [(\hlambda_0 - \slambda_0) + L_{::} (\hlambda-\slambda)]} \\ 
&\le \norm{\homega-\somega} \norm{(\hlambda_0 - \slambda_0) + L_{::} (\hlambda-\slambda)}.
\end{align*} 
As above, the inclusion of either intercept does not effect the value of the expression because $\hlambda - \slambda$ and $\homega - \somega$ sum to one.
This implies that on an event on which the bounds \eqref{eq:lowrank-general-consistency} hold, 
\begin{equation}
\label{eq:L-terms}
\begin{aligned} 
&\smallabs{(L_{N:} - \somega'L_{::})(\hlambda-\slambda) 
+ (\homega-\somega)' (L_{:T} - L\slambda) 
- (\homega-\somega)' L_{::} (\hlambda-\slambda)} \\ 
& \le  2 s_{\lambda} \min_{\omega_0}\norm{S_{\omega}^{1/2}\Sigma_{::}^{-1/2}(L_{::}'\somega + \omega_0 - L_{N:}')}
+     2 \sigma^{-1} s_{\omega} \min_{\lambda_0}\norm{S_{\lambda}^{1/2} (L_{::}\slambda - \lambda_0 - L_{:T})} \\
&\quad + \min\p{\norm{\Sigma_{::}^{-1/2}} r_{\omega} s_{\lambda}, \sigma^{-1} s_{\omega} r_{\lambda}}.
\end{aligned}
\end{equation}
We can include in the minimum in the third term above another bound on $\smallabs{(\homega-\somega)' L_{::} (\hlambda-\slambda)}$.
We will use one that exploits a potential gap in the spectrum of $L_{::}$,
e.g., a bound on the smallest nonzero singular value of $L_{::}$. The abstract bound we will use is one on the inner product $x' A y$: 
given bounds $\norm{x'A} \le r_{x}$, $\norm{Ay} \le r_y$, $\norm{x} \le s_{x}$, $\norm{y} \le s_y$, it is no larger than 
$\min_{k} \sigma_k(A)^{-1} r_x r_y + \sigma_{k+1}(A) s_x s_y$. To show this, we first observe that 
without loss of generality, we can let $A$ be square, diagonal, and nonnegative with decreasing elemnts on the diagonal: in terms of its singular value decomposition $A=USV'$
and $x_U = U' x$ and $y_V = V' y$, $x'Ay = x_U' S y_V$ where $\norm{x_U' S} \le r_x$, $\norm{S y_V} \le r_y$, $\norm{x_U} \le s_x$, $\norm{y_V} \le s_y$.
In this simplified diagonal case, letting $a_i:=A_{ii}$ and $R=\rank(A)$,
\begin{align*} 
\smallabs{x'Ay} &= \smallabs{\sum_{i=1}^{R} x_i y_i a_i} \\
     &\le \smallabs{\sum_{i=1}^{k} x_i y_i a_i} + \smallabs{\sum_{i=k+1}^{R} x_i y_i a_i} \\
     &\le \sqrt{\sum_{i=1}^k x_i^2 a_i^2 \sum_{i=1}^k y_i^2} + \sqrt{\sum_{i=k+1}^{R} x_i^2 a_i^2 \sum_{i=k+1}^R y_i^2} \\
     &\le a_k^{-1} \sqrt{\sum_{i=1}^k x_i^2 a_i^2 \sum_{i=1}^k y_i^2 a_i^2} + a_{k+1} \sqrt{\sum_{i=k+1}^{R} x_i^2 \sum_{i=k+1}^R y_i^2} \\
     &\le a_k^{-1} r_x r_y + a_{k+1} s_x s_y.
\end{align*}
We apply this with $x=\homega-\somega$, $y=\hlambda-\slambda$, and $A = L_{::} - N_0^{-1} 1_{N_0} 1_{N_0}' L_{::} - L_{::} T_0^{-1} 1_{T_0} 1_{T_0}'$;
because $(\homega - \somega)'1_{N_0} = 0$ and $1_{T_0}' (\hlambda - \slambda) = 0$, $(\homega-\somega)' L_{::} (\hlambda-\slambda) = (\homega-\somega)' A (\hlambda - \slambda) = x'Ay$. When the bounds in \eqref{eq:lowrank-general-consistency} hold,
$\norm{x' A} \le r_{\omega}$ and $\norm{A y} \le r_{\lambda}$, as
\[ \norm{(\homega - \somega)'A}^2 = \sum_{t=1}^{T_0} \sqb{(\homega - \somega)'L_{:t} - T_{0}^{-1} \sum_{t=1}^{T_0}(\homega - \somega)' L_{:t}}^2 
				  = \min_{\delta \in \R}\norm{ (\homega - \somega)'L_{::} - \delta}^2 \le r_{\omega}^2. \]
These bounds also imply $\norm{x} \le \sigma^{-1}s_{\omega}$ and $\norm{y} \le \norm{\Sigma_{::}^{-1/2}}s_{\lambda}$, so our third term is bounded by 
\begin{align*}
\smallabs{(\homega-\somega)' L_{::} (\hlambda-\slambda)} 
&\le \min_{k} \sigma_k(A)^{-1} r_{\lambda} r_{\omega} + \sigma^{-1}\norm{\Sigma_{::}^{-1/2}}\sigma_{k+1}(A) s_{\lambda} s_{\omega} \\
\end{align*}

Adding together \eqref{eq:epsilon-terms} and \eqref{eq:L-terms}, including this additional bound in the minimum in 
the third term of \eqref{eq:L-terms}, we get the claimed bound on $\smallabs{\tau(\slambda,\somega) - \hat \tau(\hlambda,\homega)}$.

\subsection{Proof of Corollary~\ref{cor:lowrank-simplified}}
We begin with the bound from Lemma~\ref{lemma:lowrank-general}.
As the claimed bound is stated up to an unspecified universal constant,
we can ignore universal constants throughout. We can ignore $K$ as well;
as discussed in Section~\ref{sec:abstract-setting}, as in the gaussian case we consider, it can be taken to be a universal constant.
Furthermore, we can ignore all appearances of powers of $\sigma$, $\Sigma_{::}$, and $S_{\theta}$ for $\theta \in \set{\lambda,\omega}$,
using bounds $\width(\Sigma_{::}^{k}\cdot) \le  \norm{\Sigma_{::}^{k}}\width(\cdot)$,
$\norm{\Sigma_{::}^{k}\cdot} \le \norm{\Sigma^{k}}\norm{\cdot}$, and
$\norm{S_{\theta}^{1/2}\cdot} \le \norm{S_{\theta}^{1/2}}\norm{\cdot}$
and observing that $\norm{S_{\theta}}\le 1$ by construction
and, under Assumption~\ref{ass:error-condensed}, $\norm{\Sigma_{::}}$ and $\norm{\Sigma_{::}^{-1}}$ 
are bounded by universal constants. And we bound minima over $\omega_0$ and $\slambda_0$ 
by substituting $\somega_0$ and $\slambda_0$. Then, as $\width(\sLambda_{s_{\lambda}}) \lesssim \sqrt{\log(T_0)}$
and $\width(\sOmega_{s_{\omega}}) \lesssim \sqrt{\log(N_0)}$, 
Lemma~\ref{lemma:weight-consistency} and Lemma~\ref{lemma:lowrank-general}
together (taking $\sigma=1$ in the latter), 
imply that on an event of probability $1-c\exp(-u^2)-c\exp(-v)$ for $v$ as in Lemma~\ref{lemma:weight-consistency}, 
the following bound holds for $\eta^2=1+\zeta^2$.

\begin{align*}
\smallabs{\hat \tau(\hlambda,\homega) - \hat \tau(\slambda,\somega)} 
&\lesssim u[\Neff^{-1/2} N_0^{-1/2}r_{\lambda} + \Teff^{-1/2} (\eta^2T_0)^{-1/2} r_{\omega} + (\eta^2 N_0 T_0)^{1/2}r_{\omega}r_{\lambda}] \\
&+ (\norm{\somega} + (\eta^2 T_0)^{-1/2}r_{\omega})\log^{1/2}(T_0) + 
       (\norm{\psi-\slambda} + N_0^{-1/2}r_{\lambda})\log^{1/2}(N_0) \\
&+     (\eta^2 T_0)^{-1/2}r_{\omega}E_{\lambda} + N_0^{-1/2}r_{\lambda}E_{\omega} + r_{\omega}r_{\lambda} M \qquad \text { for any } 
\end{align*}
\[
M \ge \min\p{N_0^{-1/2},\ (\eta^2 T_0)^{-1/2}, \
				   \min_{k \in \mathbb{N}}\sigma_k(L_{::}^c)^{-1} + \sigma_{k+1}(L_{::}^c) (\eta^2 N_0 T_0)^{-1/2}} \ \text{ and }\ \]
\begin{align*}
r_{\lambda} &= \log^{1/4}(T_0)[(N_0/\Teff)^{1/4} + E_{\lambda}^{1/2}], & E_{\lambda} = \norm{L_{::}\slambda + \slambda_0 - L_{:T}}, 
\ \ & \Teff^{-1/2} = \norm{\slambda - \psi} + T_1^{-1/2}, \\
r_{\omega}  &= \log^{1/4}(N_0)[(T_0/\Neff)^{1/4} + E_{\omega}^{1/2}],  & E_{\omega} = \norm{L_{::}'\somega + \somega_0 - L_{N:}'},
\ \ & \Neff^{-1/2} = \norm{\somega} + N_1^{-1/2}. 
\end{align*}
Taking $u = \min(\Teff^{1/2}\log^{1/2}(T_0), \Neff^{1/2}\log^{1/2}(N_0), (\eta^2 N_0 T_0)^{1/2}M)$,
we can ignore the first line in the bound above, 
as its three terms are bounded by the second term in the second line, the first term in the second line,
and the final term respectively. Grouping terms with common powers of $r_{\omega},r_{\lambda}$;
redefining $\E_{\lambda} = \max(E_{\lambda},1)$ and $\E_{\omega}  = \max(E_{\omega}, 1)$,
and expanding $r_{\omega},r_{\lambda}$ yields the following bound.
\begin{equation}
\label{eq:lowrank-simplified-expanded-bound}
\begin{aligned}
&\norm{\somega} \log^{1/2}(T_0) + \norm{\psi-\slambda} \log^{1/2}(N_0) \\
&+ (\eta^2 T_0)^{-1/2}[(T_0/\Neff)^{1/4} + E_{\omega}^{1/2}] E_{\lambda} \log^{1/2}(N_0) \\
&+         N_0^{-1/2} [(N_0/\Teff)^{1/4} + E_{\lambda}^{1/2}]E_{\omega}  \log^{1/2}(T_0)  \\
&+ M				    [(N_0T_0 / \Neff\Teff)^{1/4} +
				     (N_0/\Teff)^{1/4}E_{\omega}^{1/2} +
				     (T_0/\Neff)^{1/4}E_{\lambda}^{1/2} +
				     (E_{\omega}E_{\lambda})^{1/2}] \log^{1/4}(N_0)\log^{1/4}(T_0). 
\end{aligned}
\end{equation}
Each term is multiplied by either $\log^{1/2}(T_0)$, $\log^{1/2}(N_0)$, or their geometric mean.
For simplicity, we will substitute a common upper bound of $\ell^{1/2}$ for $\ell =\log(\max(N_0,T_0))$.
To establish our claim, we must show that each term is $o((N_{1} T_{1})^{-1/2})$.

The first line of our bound is small enough, $\Neff \sim N_1$, and $\Teff \sim T_1$, if
\begin{equation}
\label{eq:sample-size-conditions}
\max(\norm{\somega}, \norm{\slambda - \psi}) \ll (N_1T_1)^{-1} \ell^{-1/2},\quad \min(N_1,T_1) \gtrsim 1,
\end{equation}
If the following bound holds, the remaining terms that do no involve $M$ are small enough.
\begin{equation}
\label{eq:E-conditions}
\begin{aligned} 
E_{\omega}  &\ll N_0^{1/4}N_1^{-1/2}T_1^{-1/4} \ell^{-1/2}, \\
E_{\lambda} &\ll \eta T_0^{1/4} N_1^{-1/4} T_1^{-1/2}\ell^{-1/2}, \\
(E_{\omega}E_{\lambda})^{1/2} &\ll \min(N_0^{3/8}T_1^{-3/8}N_1^{-1/4},\ \eta^{1/2}T_0^{3/8}N_1^{-3/8}T_1^{-1/4})\ell^{-1/4}.
\end{aligned}
\end{equation}
To see this, multiply the square root of the first bound by the first part of the third when bounding the term involving $E_{\lambda}^{1/2}E_{\omega}$ 
and the square root of the second by the second part of the third when bounding the term involving $E_{\omega}^{1/2}E_{\lambda}$.
Note that because our `redefinition' of $E_{\omega}, E_{\lambda}$ requires that they be no smaller than one,
these upper bounds must go to infinity, and so long as they do we can interpret them as bounds on 
$\norm{L_{::}'\somega + \somega_0 - L_{N:}'}$, $\norm{L_{::}\slambda + \slambda_0 - L_{:T}}$,
and their geometric mean respectively.

By substituting the bounds \eqref{eq:E-conditions} into the term with a factor of $M$ in \eqref{eq:lowrank-simplified-expanded-bound}, 
we can derive a sufficent condition for it to be small enough. To see that it is sufficient, we bound
first multiple of $M$ in \eqref{eq:lowrank-simplified-expanded-bound} using the first bound on $M$ below,
the second using the second in combination with our bound on $E_{\omega}$,
the third using the third in combination with our bound on $E_{\lambda}$,
and the fourth using the second in combination with our first bound on $(E_{\omega}E_{\lambda})^{1/2}$.
\begin{equation}
\label{eq:m-conditions}
M \ll \min\p{ (N_0 T_0 N_1 T_1 \ell)^{-1/4},\ N_0^{-3/8}N_1^{-1/4}T_1^{-1/8},\ \eta^{-1/2}T_0^{-3/8}T_1^{-1/4}N_1^{-1/8} }\ell^{-1/4}. 
\end{equation}
Equations~\ref{eq:sample-size-conditions}, \ref{eq:E-conditions}, and \ref{eq:m-conditions},
so long as the bounds in \eqref{eq:E-conditions} all go to infinity, are sufficient to imply our claim. 
Note that because every vector $\omega$ in the unit simplex in $\R^{N_0}$ 
satisfies $\norm{\omega} \ge N_0^{-1/2}$, \eqref{eq:sample-size-conditions} 
implies an additional constraint on the dimensions of the problem, $N_0 \gg N_1 T_1 \ell$.

Having established these bounds on $E_{\omega}$ and $E_{\lambda}$,
we are now in a position to characterize the probability that our result holds by lower bounding the ratios
$N_0/E_{\lambda}$ and $T_0/E_{\omega}$ that appear in the probability statement of 
Lemma~\ref{lemma:weight-consistency}. As $N_0/E_{\lambda} \gg N_0^{3/4}$ and $T_0/E_{\omega} \gg T_0^{3/4}$,
the claims of Lemma~\ref{lemma:weight-consistency} hold with probability $1-c\exp(-v)$ for $v=c\min(N_0,T_0)^{1/2}$.
Thus, recalling from above that we are working on an event of probability $1-c\exp(-u^2)-c\exp(-v)$
for $u = \min(\Teff^{1/2}\log^{1/2}(T_0), \Neff^{1/2}\log^{1/2}(N_0), (\eta^2 N_0 T_0)^{1/2}M)$
and that $\Neff \sim N_1$ and $\Teff \sim T_1$,
this is probability at least $1-2\exp(-\min(T_1\log(T_0), N_1\log(N_0), \eta^2 N_0 T_0 M^2)) - c\exp(-c\min(N_0^{1/2}, T_0^{1/2}))$.

We will now derive simplfied sufficient conditions under the assumption that $N_0 \sim T_0$.
Let $m_0 = N_0$, $m_1 = (N_1 T_1)^{1/2}$, and $\bar m_1 = \max(N_1,T_1)$. Then \eqref{eq:m-conditions} holds if 
\[ M \ll \min(m_0^{-1/2} m_1^{-1/2} \ell^{-1/2},\ \eta^{-1/2}m_0^{-3/8}m_1^{-1/4} \bar m_1^{-1/4} \ell^{-1/4}). \]

This is not satisfiable with $M=N_0^{-1/2} \sim m_0^{1/2}$. But with $M=(\eta T_0)^{-1/2} \sim \eta^{-1} m_0^{-1/2}$, 
it is satisfied for $\eta \gg \max(1,\ m_0^{-1/4}\bar m_1^{1/2}) m_1^{1/2} \ell^{1/2}$.
For such $\eta$, \eqref{eq:E-conditions} hold when
\begin{align*} 
E_{\omega}  &\ll m_0^{1/4} m_1^{-1/2}\bar m_1^{-1/4}  \ell^{-1/2}, \\
E_{\lambda} &\ll \max(m_0^{1/4} \bar m_1^{-1/4},\ \bar m_1^{1/4}) \\
(E_{\omega}E_{\lambda})^{1/2} &\ll m_0^{3/8}m_1^{-1/2}\bar m_1^{-1/8} \ell^{-1/4}.
\end{align*}
To keep the statement of our lemma simple, we use the simplified bound 
$E_{\lambda} \ll m_0^{1/4} \bar m_1^{-1/4}$. 
Then the geometric mean of our bounds on $E_{\omega}$ and $E_{\lambda}$ bounds their geometric mean,
and it is $m_0^{1/4}m_1^{-1/4}\bar m_1^{-1/4} \ell^{-1/4}$.
Thus, our explicit bound on the geometric mean above is redundant as long as the ratio of these two bounds,
$m_0^{1/4}m_1^{-1/4}\bar m_1^{-1/4} \ell^{-1/4} /  m_0^{3/8}m_1^{-1/2}\bar m_1^{-1/8} \ell^{-1/4}$,
is bounded. As this ratio simplifies to $m_0^{-1/8} m_1^{1/4} \bar m_1^{-1/8} \le (m_1/m_0)^{1/8}$
and $m_0 \gg m_1$, it is redundant. And taking $M \sim \eta^{-1} m_0^{-1/2}$ in our probability statement above,
our claims hold with probability $1-2\exp(-\min(T_1\log(T_0), N_1\log(N_0))) - c\exp(-cm_0^{1/2}))$.

To avoid complicating the statement of our result, we will not explore refinements made possible by a nontrivially large gap in the spectrum of $L_{::}^c$,
i.e., the case that $M=\min_k \sigma_k(L_{::}^c)^{-1} + \sigma_{k+1}(L_{::}^c) (\eta^2 N_0 T_0)^{-1/2}$. However, in models with no weak factors,
this quantity will be very small, and as a result, Equations~\ref{eq:sample-size-conditions} and \ref{eq:E-conditions} 
will essentially be sufficient to imply our claim.
As we make $\eta$ large only to control $M$ when it is equal to $(\eta T_0)^{-1/2}$, 
this provides some justification for the use of weak regularization ($\zeta$ small) or no regularization ($\zeta = 0$)
when fitting the synthetic control $\homega$. 

We conclude by observing that the lower bound on $\zeta$ above simplifies to $\zeta \gg m_1^{1/2} \ell^{1/2}$ under our stated assumptions.
We begin with the assumption that the above upper bound on $E_{\omega}$ goes to infinity. Observing that the other lower bound on $\zeta$ as stated above is $\bar m_1^{1/4}$ times the reciprocal of the 
this infinity-tending bound on $E_{\omega}$, it follows that it must be $o(\bar m_1^{1/4})$. As $m_1^{1/2}=\bar m_1^{1/4} \min(N_1,T_1)^{1/4}$ and the latter factor and $\ell^{1/2}$
are bounded away from zero by assumption, $\bar m_1^{1/4} = O(m_1^{1/2}\ell^{1/2})$, 
so this other lower bound is indeed smaller than the (other) one that we retain. 

\section{Proof of Theorem~\ref{theo:jack}}

Throughout this proof, we will assume constant treatment effects $\tau_{ij} = \tau$. When treatment effects are not constant,
the jackknife variance estimate will include an additional nonnegative term that depends on the amount of treatment heterogeneity, 
making the inference conservative. 

We will write $a \sim_p b$ meaning $a/b \rightarrow_p 1$, 
$a \lesssim_p b$ meaning $a = O_p(b)$, $a \ll_p b$ meaning $a = o_p(b)$,
$\sigma_{\min}(\Sigma)$ and $\sigma_{\max}(\Sigma)$ for the smallest and largest eigenvalues of a matrix $\Sigma$,
and $1_n \in \R^n$ for a vector of ones. And we write $\hlambda^\star$ to denote the concatenation of $\hlambda_\pre$ and $-\hlambda_\post$.

Now recall that, as discussed in Section \ref{sec:double_diff}, 
\begin{equation}
\begin{split}
\htau &=  \homega_{\ttt}' Y_{\ttt,\post} \hlambda_{\post} - \homega_{\ccc}' Y_{\ccc,\post} \hlambda_{\post} - \homega_{\ttt}' Y_{\ttt,\pre} \hlambda_{\pre} +
 \homega_{\ccc}' Y_{\ccc,\pre} \hlambda_{\pre} \\
&= \hmu_{\ttt} - \hmu_{\ccc} \quad \text{ where } \\
&\hmu_{\ccc} = \sum_{i = 1}^{N_{\ccc}} \homega_i \hDelta_i, 
\ \ \hmu_{\ttt} = \sum_{i = N_{\ccc} + 1}^{N} \homega_i \hDelta_i, 
\ \ \hDelta_i = Y_{i,\cdot} \hlambda^\star.
 \end{split}
\end{equation}
In the jackknife variance estimate defined in Algorithm \ref{alg:jack},
\begin{equation}
\label{eq:jack-estimates}
\hat \tau^{(-i)} =
    \begin{cases} 
	\hmu_{\ttt} - \frac{\sum_{k \le N_{\ccc}, k \neq i} \homega_k \Delta_k}{1-\homega_i}
         = \hmu_{\ttt} - \p{\hmu_{\ccc} - \frac{\homega_i (\Delta_i - \hmu_{\ccc})}{1-\homega_i}} & \text{ for }\ i \le N_{\ccc} \\
	\frac{\sum_{k \ge N_{\ccc}, k \neq 1}\homega_k \Delta_k}{1-\homega_i} - \hmu_{\ccc} 
	 =  \p{\hmu_{\ttt} - \frac{\homega_i (\Delta_i - \hmu_{\ttt})}{1-\homega_i}} - \hmu_{\ccc} & \text{ for }\  i > N_{\ccc}.
\end{cases}
\end{equation}
Thus, the jackknife variance estimate defined in Algorithm \ref{alg:jack} is
\begin{equation}
\label{eq:jack1}
\hV^{\mathrm{jack}}_\tau = \frac{N-1}{N} \p{\sum_{i = 1}^{N_\ccc} \p{\frac{\homega_i \p{\hDelta_i - \hmu_\ccc}}{1 - \homega_i}}^2 +
\sum_{i = N_\ccc + 1}^{N} \p{\frac{\homega_i \p{\hDelta_i - \hmu_\ttt}}{1 - \homega_i}}^2}.
\end{equation}
A few simplifications are now in order. We use the bound $\norm{\homega_{\ccc}}^2 \ll (N_\ttt T_\post \log(N_\ccc))^{-1}$
derived in Section~\ref{sec:bounding-homega} below. This bound implies that 
the denominators $1-\homega_i$ appearing in the expression above all lie in the interval
$[1-\max(\norm{\homega_\ccc}, N_{\ttt}^{-1}),\ 1] = [1-o_p(1),\ 1]$. 
As each term in that expression is nonnegative, it follows that the ratio
between it and the expression below, derived by replacing these denominators with $1$,
is in this interval and therefore converges to one.
\begin{equation}
\label{eq:jack2}
\hV^{\mathrm{jack}}_\tau \sim_p \sum_{i = 1}^{N_\ccc} \homega_i^2 \p{\hDelta_i - \hmu_\ccc}^2 +
\sum_{i = N_\ccc + 1}^{N} \homega_i^2 \p{\hDelta_i - \hmu_\ttt}^2.
\end{equation}
We will simplify this further by showing that the first term is negligible relative to the second. We'll start by lower bounding the second term.
This is straightforward because for $i > N_{\ccc}$, the unit weights $\homega_i$ are equal to the constant $1/N_{\ttt}$
and the time weights $\hlambda$ are independent of $Y_{i,\cdot}$. 
\begin{align*}
\E \sum_{i = N_{\ccc}+1}^N \homega_i^2 \p{\hDelta_i - \hmu_\ttt}^2  
&= N_{\ttt}^{-2}  \sum_{i=N_{\ccc}+1}^N \E ( (Y_{i,\cdot} - \homega_{\ttt}'Y_{\ttt,\cdot})\hlambda^\star)^2 
&& \\ 
&\ge N_{\ttt}^{-2} \sum_{i=N_{\ccc}+1}^N \E( (\varepsilon_{i,\cdot} - \homega_{\ttt}'\varepsilon_{\ttt,\cdot})\hlambda^\star)^2  
&& \\ 
&= N_{\ttt}^{-1} \E \hlambda_{\star}'\ (1-N_{\ttt}^{-1})\Sigma\ \hlambda^\star \ 
&&\text{as } \Cov{\varepsilon_{i,\cdot} - \homega_{\ttt}'\varepsilon_{\ttt,\cdot}} = (1-N_\ttt^{-1})\Sigma \\
&\ge N_{\ttt}^{-1} \norm{\hlambda^\star}^2 (1-N_{\ttt}^{-1})\sigma_{\min}(\Sigma) 
&& \\
&\ge (N_{\ttt} T_{\post})^{-1} (1-N_{\ttt}^{-1})\sigma_{\min}(\Sigma)  
&&\text{as } \norm{\hlambda^\star}^2 \ge \norm{\hlambda_\ttt}^2 = T_{\post}^{-1}.
\end{align*}
As $\sigma_{\min}(\Sigma)$ is bounded away from zero, it follows that the mean of the second term in \eqref{eq:jack2}
is on the order of $(N_{\ttt} T_{\post})^{-1}$ or larger.
We'll now show that the first term in \eqref{eq:jack2} is $o_p((N_{\ttt} T_{\post})^{-1})$,
so \eqref{eq:jack2} is equivalent to a variant in which we have dropped its first term. 

By H\"older's inequality and the bound 
$\norm{\homega_\ccc}^2 \ll (N_\ttt T_\post \log(N_\ccc))^{-1}$ derived in Section~\ref{sec:bounding-homega},
\[ \sum_{i = 1}^{N_\ccc} \homega_i^2 \p{\hDelta_i - \hmu_\ccc}^2 
\le \norm{\homega_{\ccc}}^2 \max_{i \le N_\ccc} \p{\hDelta_i - \hmu_\ccc}^2 
\ll (N_\ttt T_\post \log(N_\ccc))^{-1} \max_{i \le N_\ccc} \p{\hDelta_i - \hmu_\ccc}^2. \]
Thus, it suffices to show that $\max_{i \le N_\ccc} (\hDelta_i - \hmu_\ccc)^2 \ll \log(N_\ccc)$.
And it suffices to show that $\max_{i \le N_\ccc} \hDelta_i^2  \ll \log(N_\ccc)$, 
as $(\hDelta_i - \hmu_\ccc)^2 \le 2\hDelta_i^2 + 2\hmu_\ccc^2$ 
and $\hmu_\ccc$ is a convex combination of $\hDelta_1 \ldots \hDelta_{\ccc}$.
This bound holds because, by H\"older's inequality,
\[ \max_{i \le N_\ccc}\abs{\hDelta_i} = \max_{i \le N_\ccc}\abs{Y_{i,\cdot} \hlambda^\star} \le \Norm{\hlambda^\star}_{1} \cdot \max_{i \le N_\ccc,\ j \le T} \abs{Y_{ij}}
    \lesssim_p \sqrt{\log(N_\ccc)}.\]
In our last comparison above, we use the properties that $\norm{\hlambda^\star}_1 = \norm{\hlambda_\pre}_1 + \norm{\hlambda_\post}_1 = 2$,
that the elements of $L$ are bounded,
and that the maximum of $K=N_\ccc T$ 
gaussian random variables $\varepsilon_{it}$ is $O_p(\sqrt{\log(K)})$,
as well as Assumption~\ref{ass:sample_sizes}, which implies that $T \sim N_\ccc$ so $\log(K) \lesssim \log(N_\ccc)$.
Summarizing, 
\begin{equation}
\label{eq:jack3}
\hV^{\mathrm{jack}}_\tau \sim_p \frac{1}{N_\ttt^2} \sum_{i = N_\ccc+1}^{N} \p{\hDelta_i - \hmu_\ttt}^2.
\end{equation}
This simplification is as we would hope given that, under the conditions of Theorem \ref{theo:asymptotic-linearity},
we found that all the noise in $\htau$ comes from the exposed units. Now, focusing further on \eqref{eq:jack3} we
note that, when treatment effects are constant across units, we can verify that they do not contribute to $\hV^{\mathrm{jack}}_\tau$
and so
\begin{equation}
\label{eq:jack4}
\begin{split}
&\frac{1}{N_\ttt^2} \sum_{i = N_\ccc + 1}^{N}  \p{\hDelta_i - \hmu_\ttt}^2
 = \frac{1}{N_\ttt^2} \sum_{i = N_\ccc + 1}^{N}  \p{\hDelta_i(L) - \hmu_{\ttt}(L) + \hDelta_i(\varepsilon) - \hmu_{\ttt}(\varepsilon)}^2, \\
&\ \ \ \ \ \ \ \ \ \ \ \  \hDelta_i(L) = L_{i,\cdot} \hlambda^\star \ \ \ \ \
 \hDelta_i(\varepsilon) = \varepsilon_{i,\cdot} \hlambda^\star,
 \end{split}
\end{equation}
where $\hmu_{\ttt}(L)$ and $\hmu_{\ttt}(\varepsilon)$ are averages of $\hDelta_i(L)$ and $\hDelta_i(\varepsilon)$
respectively over the exposed units. Now, by construction, $\hlambda$ is only a function of the unexposed units and
so, given that there is no cross-unit correlation, $\hlambda$ is independent of $\varepsilon_{i,.}$ for all $i > N_\ccc$. Thus, the
cross terms between $\hDelta_i(L) - \hmu_{\ttt}(L)$ and $\hDelta_i(\varepsilon) - \hmu_{\ttt}(\varepsilon)$ in \eqref{eq:jack4}
are mean-zero and concentrate out, and so
 \begin{equation}
\label{eq:jack5}
\hV^{\mathrm{jack}}_\tau \sim_p \frac{1}{N_\ttt^2} \sum_{i = N_\ccc + 1}^{N}  \p{\hDelta_i(L) - \hmu_{\ttt}(L)}^2 
+ \frac{1}{N_\ttt^2} \sum_{i = N_\ccc + 1}^{N} \p{\hDelta_i(\varepsilon) - \hmu_{\ttt}(\varepsilon)}^2.
\end{equation}
We will now show that the second term is equivalent to a variant in which $\slambda$ replaces $\hlambda$. 
We denote by $\tilde\Delta$ and $\tilde\mu_{\ttt}$ the corresponding variants of $\hDelta$ and $\hmu_{\ttt}$.
First consider the second term in \eqref{eq:jack5}. $\hDelta_i(\varepsilon) = \tilde\Delta_i(\varepsilon) + \varepsilon_{i,\pre}(\hlambda_\pre - \slambda_\pre)$, so 
\begin{align*}
\p{\hDelta_i(\varepsilon) - \hmu_{\ttt}(\varepsilon)}^2 
&= \p{ [\tilde\Delta_i(\varepsilon) - \tilde\mu_{\ttt}(\varepsilon)] + (\varepsilon_{i,\pre} - \homega_{\ttt}'\varepsilon_{\ttt,\pre}) (\hlambda_\pre - \slambda_\pre) }^2 
 \\
&= \p{\tilde \Delta_i(\varepsilon) - \tilde \mu_{\ttt}(\varepsilon)}^2 \\
&+ 2[\tilde\Delta_i(\varepsilon) - \tilde\mu_{\ttt}(\varepsilon)] (\varepsilon_{i,\pre} - \homega_{\ttt}'\varepsilon_{\ttt,\pre})  (\hlambda_\pre - \slambda_\pre) \\
&+ ((\varepsilon_{i,\pre} - \homega_{\ttt}'\varepsilon_{\ttt,\pre}) (\hlambda_\pre - \slambda_\pre))^2.
\end{align*}
By the Cauchy-Schwarz inequality, the second and third terms in this decomposition are negligible relative to the first
if $\E_{\ttt} ((\varepsilon_{i,\pre} - \homega_{\ttt}'\varepsilon_{\ttt,\pre}) (\hlambda_\pre - \slambda_\pre))^2 \ll_p \E_{\ttt} (\tilde \Delta_i(\varepsilon) - \tilde \mu_{\ttt}(\varepsilon))^2$ where $\E_{\ttt}$ denotes expectation conditional on $\varepsilon_{\ccc,\cdot}$.  We calculate both quantities and compare.
\begin{align*} 
&\E_{\ttt} ((\varepsilon_{i,\pre} - \somega_{\ttt}'\varepsilon_{\ttt,\pre}) (\hlambda_\pre - \slambda_\pre))^2 
= (\hlambda_\pre - \slambda_\pre)' (1-N_{\ttt}^{-1}) \Sigma  (\hlambda_\pre - \slambda_\pre). \\
&\E_{\ttt} (\tilde \Delta_i(\varepsilon) - \tilde \mu_{\ttt}(\varepsilon))^2 = 
         \E_{\ttt}( (\varepsilon_{i,\cdot}- \somega_{\ttt}'\varepsilon_{\ttt,\pre})' \tlambda^\star)^2 
	=  \tlambda'\ (1-N_{\ttt}^{-1})\Sigma \ \tlambda.
\end{align*}
In Section~\ref{sec:bounding-hlambda}, we show that the first is $\lesssim_p N_\ccc^{-1/2} T_\post^{-1/2} \log^{1/2}(N_{\ccc})$,
and the second is $\gtrsim \norm{\tlambda^{\star}}^2 \ge T_{\post}^{-1}$ because $\sigma_{\min}(\Sigma)$ is bounded away from zero.
Thus, because $N_\ccc^{-1/2} \ll T_{\post}^{-1/2} \log^{-1/2}(N_{\ccc})$ under Assumption~\ref{ass:sample_sizes},
the first quantity is negligible relative to the second. As discussed, it follows that 
\begin{equation}
\label{eq:jack-noise}
\frac{1}{N_\ttt^2} \sum_{i = N_\ccc + 1}^{N} \p{\hDelta_i(\varepsilon) - \hmu_{\ttt}(\varepsilon)}^2 \sim_p 
\frac{1}{N_\ttt^2} \sum_{i = N_\ccc + 1}^{N} \p{\tilde \Delta_i(\varepsilon) - \tilde\mu_{\ttt}(\varepsilon)}^2.
\end{equation}
By the law of large numbers, the right side is equivalent ($\sim_p$) to its mean 
$N_\ttt^{-1}\tlambda'\ (1-N_{\ttt}^{-1})\Sigma \ \tlambda$ and therefore
to  $N_\ttt^{-1}\tlambda' \Sigma \tlambda$. It is shown that
$N_\ttt^{-1}\tlambda' \Sigma \tlambda \sim_p V_{\tau}$ in
the proof of Lemma~\ref{lemma:reduction-to-oracle-deviation}, so
\begin{equation}
\label{eq:jack6}
\hV^{\mathrm{jack}}_\tau \sim_p \frac{1}{N_\ttt^2} \sum_{i = N_\ccc + 1}^{N}  \p{\hDelta_i(L) - \hmu_{\ttt}(L)}^2 + V_{\tau}.
\end{equation}
Because the first term is nonnegative, our variance estimate is asymptotically either unbiased or upwardly biased, 
so our confidence intervals are conservative as claimed. In the remainder, we derive a sufficient condition
for the first term to be asymptotically negligible relative to $V_{\tau}$, so our confidence intervals have asymptotically nominal coverage.

We bound this term using the expansion $\hmu_{\ttt}(L) = N_\ttt^{-1}1_{N_\ttt}'(L_{\ttt,\post}\hlambda_\post - L_{\ttt,\pre}\hlambda_{\pre})$.
\begin{align*}
N_\ttt^{-2} \sum_{i = N_\ccc + 1}^{N}  \p{\hDelta_i(L) - \hmu_{\ttt}(L)}^2 
&= N_\ttt^{-2} \norm{ (I - N_{\ttt}^{-1}1_{N_{\ttt}}1_{N_{\ttt}}') (L_{\ttt,\pre}\hlambda_{\pre} + \hlambda_0 1_{N_\ttt} - L_{\ttt,\post}\hlambda_{\post}) }^2 \\
&\le N_\ttt^{-2} \norm{L_{\ttt,\pre}\hlambda_{\pre} + \hlambda_0  - L_{\ttt,\post}\hlambda_{\post}}^2.
\end{align*}
This comparison holds because $\norm{I- N_{\ttt}^{-1}1_{N_{\ttt}}1_{N_{\ttt}}} \le 1$.
By Assumption \eqref{eq:regr_cons}, this bound is $o_P((N_{\ttt}T_{\post})^{-1})$ and therefore negligible relative to $V_{\tau}$. 
We conclude by proving our claims about $\norm{\homega_{\ccc}}$ and $\norm{\Sigma_{\pre}^{1/2}(\hlambda_{\ccc} - \slambda_{\ccc})}$.

\subsubsection{Bounding $\norm{\homega_{\ccc}}$}
\label{sec:bounding-homega}
Here we will show that $\norm{\homega_\ccc}^2 \ll (N_\ttt T_\post \log(N_\ccc))^{-1}$ under the assumptions of Theorem \ref{theo:asymptotic-linearity}.
\begin{equation*}
\begin{aligned}
&\norm{\homega_{\ccc}-\somega_{\ccc}}^2  
\lesssim_p \ \zeta^{-2} N_\ccc^{-1} [ N_{\ccc}^{1/2}N_\ttt^{-1/2} + \norm{\somega_{\ccc}'L_{\ccc,\pre} + \somega_0 - \somega_{\ttt}'L_{\ttt,\pre}}] \log^{1/2}(N_\ccc) \\
&\ll      [N_\ttt^{1/2} T_\post^{1/2} \log(N_\ccc)]^{-1}  N_\ccc^{-1/2} N_\ttt^{-1/2} \log^{1/2}(N_\ccc) \\ 
&+        [N_\ttt^{1/2} T_\post^{1/2} \max(N_\ttt,T_\post)^{1/2} N_\ccc^{-1/4} \log(N_\ccc)]^{-1} N_{\ccc}^{-3/4} N_{\ttt}^{-1/4} T_{\post}^{-1/4} \max(N_{\ttt}, T_{\post})^{-1/4} \\
&\ll N_\ccc^{-1/2} N_\ttt^{-1} T_\post^{-1/2} \\
&\ll (N_\ttt T_\post \log(N_\ccc))^{-1}.
\end{aligned}
\end{equation*}
Our first bound follows from Lemma~\ref{lemma:weight-consistency}, in which we can take $N_{eff}^{-1/2} \sim N_{\ttt}^{-1/2}$
because $\norm{\somega_\ccc} \lesssim N_{\ttt}^{-1/2}$ under Assumption~\ref{weightss}. To derive our second,
we substitute the upper bound $N_{\ccc}^{1/4} N_{\ttt}^{-1/4}$ $T_{\post}^{-1/4} \max(N_{\ttt}, T_{\post})^{-1/4} \log^{-1/2}(N_{\ccc}) \gg \norm{\somega_{\ccc}'L_{\ccc,\pre} + \somega_0 - L_{\ttt,\pre}}$ from Assumption~\ref{weightss} and substitute (in brackets) two lower bounds on $\zeta^{2}$ chosen as in Theorem~\ref{theo:asymptotic-linearity}:
the first is implied by squaring the lower bound $\zeta \gg (N_{\ttt}T_{\post})^{1/4} \log^{1/2}(N_\ccc)$ 
and the second by multiplying this lower bound by an alternative lower bound, $\zeta \gg (N_\ttt T_\post)^{1/4}$ $\max(N_\ttt,T_\post)^{1/2}N_0^{-1/4} \log^{1/2}(N_\ccc)$.
The third is a simplification, and the fourth follows because $T_\post \log^2(N_\ccc) \ll N_\ccc$ under Assumption~\ref{ass:sample_sizes}.
Furthermore, as $\norm{\somega_\ccc}^2 \ll (N_\ttt T_\post \log(N_\ccc))^{-1}$ under Assumption~\ref{weightss},
by the triangle inequality, $\norm{\homega_\ccc}^2 \ll (N_\ttt T_\post \log(N_\ccc))^{-1}$ as claimed.
 
\subsubsection{Bounding $\norm{\Sigma_{\pre,\pre}(\hlambda_{\ccc} - \slambda_{\ccc})}$}
\label{sec:bounding-hlambda}
Here we will show that $\norm{\Sigma_{\pre,\pre}(\hlambda_{\ccc} - \slambda_{\ccc})}^2 \lesssim_p N_\ccc^{-1/2} T_\post^{-1/2} \log^{1/2}(N_{\ccc})$.
Because Assumption~\ref{ass:noise} implies that $\norm{\Sigma_{\pre,pre}}$ is bounded, it suffices to bound
$\norm{\hlambda_{\ccc} - \slambda_{\ccc}}$. 
\begin{align*}
\norm{\hlambda_{\ccc} - \slambda_{\ccc}}^2 
&\lesssim_p N_\ccc^{-1} [ N_{\ccc}^{1/2}T_\post^{-1/2} + \norm{L_{\ccc,\pre}\slambda_{\pre} + \slambda_0 - L_{\ccc,\post}\slambda_{\post}}] \log^{1/2}(N_\ccc) \\
&\lesssim   N_\ccc^{-1/2} T_\post^{-1/2} \log^{1/2}(N_{\ccc}) + N_\ccc^{-3/4}N_\ttt^{-1/8}T_\post^{-1/8} \log^{1/2}(N_{\ccc}) \\
&\lesssim   N_\ccc^{-1/2} T_\post^{-1/2} \log^{1/2}(N_{\ccc}).
\end{align*}
Our first bound follows from Lemma~\ref{lemma:weight-consistency}, in which we can take $T_{eff}^{-1/2} \sim T_{\post}^{-1/2}$
because $\norm{\slambda_\pre - \psi} \lesssim T_{\post}^{-1/2}$ under Assumption~\ref{weightss}. 
To derive our second, we substitute the upper bound 
$N_\ccc^{1/4} N_\ttt^{-1/8} T_\post^{-1/8} \gg \norm{L_{\ccc,\pre}\slambda_{\pre} + \slambda_0 - L_{\ccc,\post}\lambda_\post}$ from Assumption~\ref{weightss}.
The third follows because $N_{\ccc}^{-1/4} \ll N_\ttt^{-1/4}T_\post^{-1/4}\max(N_\ttt, T_\post)^{-1/4} \le N_\ttt^{-3/8}T_\post^{-3/8}$ 
under Assumption~\ref{ass:sample_sizes}.

\end{document}